\documentclass[11pt]{article}
\usepackage{complexity}

\usepackage[english]{babel}
\usepackage[utf8x]{inputenc}
\usepackage[T1]{fontenc}
\usepackage{enumitem}
\usepackage[margin=1.in,marginparwidth=1.75cm]{geometry}

\usepackage[algo2e, ruled, vlined]{algorithm2e}
\usepackage{algorithm}
\usepackage{amsmath}
\usepackage{amsfonts}
\usepackage{graphicx}
\usepackage{amsthm}
\usepackage{amssymb}
\usepackage{dsfont}
\usepackage[noend]{algorithmic} 
\usepackage{bbm}

\usepackage[T1]{fontenc}
\usepackage{microtype}          %
\usepackage[full]{textcomp}

\usepackage{lipsum}
\usepackage{xspace}
\usepackage{relsize}
\usepackage{framed}
\usepackage{xcolor}
\usepackage{graphicx}
\usepackage{multirow}
\usepackage{subcaption}
\usepackage{float}

\usepackage[colorinlistoftodos]{todonotes}
\usepackage[colorlinks=true, allcolors=blue]{hyperref}
\usepackage{cleveref}

\newtheorem{theorem}{Theorem}

\newtheorem{lemma}{Lemma}
\newtheorem{corollary}{Corollary}
\newtheorem{remark}{Remark}

\theoremstyle{remark}

\theoremstyle{definition}
\newtheorem{definition}{Definition}

\def\final{1}  %
\def\iflong{\iffalse}
\ifnum\final=0  %
\newcommand{\knote}[1]{{\color{blue}[{Karthik: \bf #1}]\marginpar{\color{blue}*}}}
\newcommand{\znote}[1]{{\color{purple}[{Weihao: \bf #1}]\marginpar{\color{purple}*}}}
\newcommand{\cnote}[1]{{\color{magenta}[{Chandra: \bf #1}]\marginpar{\color{magenta}*}}}
\else %
\newcommand{\knote}[1]{}
\newcommand{\znote}[1]{}
\newcommand{\cnote}[1]{}
\fi  %

\usepackage{float}
\usepackage{booktabs}
\usepackage{multirow}
\usepackage{thmtools} 
\usepackage{thm-restate}
\usepackage{lipsum}

\newcommand{\Z}{\mathbb{Z}}
\renewcommand{\R}{\mathbb{R}}

\newif\ifdraft
\draftfalse

\newcommand{\Comp}{\#\text{Comps}}

\newcommand{\SymDif}{\mathop{\triangle}}

\title{Random Deletion for Circuits in Regular Matroids}

\begin{document}
\author{Karthekeyan Chandrasekaran\thanks{Grainger College of Engineering, University of Illinois, Urbana-Champaign. Email: {\tt\{karthe, chekuri, weihaoz3\}@illinois.edu}. Supported in part by NSF grant CCF-2402667.}
\hspace{1.2em} Chandra Chekuri\footnotemark[1]
\hspace{1.2em} Weihao Zhu\footnotemark[1]
}
\date{}
\maketitle

\pagenumbering{gobble}
\begin{abstract}
  We study a simple random deletion algorithm to find short circuits in regular matroids. Given a regular matroid $M$ with $m$ elements, the algorithm randomly deletes each element independently with probability $1-1/m^{\Theta(1/g)}$ where $g$ is the size of a minimum circuit, and returns the remaining set $R$ of elements. We establish the following results:
    
    \begin{itemize}
        \item With high probability, $R$ contains no circuit and is therefore independent.
        \item For every $\alpha$-approximate minimum circuit $C$ where $\alpha\geq 1$, with probability at least $m^{-O(\alpha)}$, the entire circuit $C$ survives in $R$ and moreover, the minimum number of elements that need to be removed to make $R$ independent is $O(\alpha\log(\alpha+2))$. This yields an algorithmic proof that the number of $\alpha$-approximate minimum circuits is
        \[
            m^{O(\alpha)},
        \]
        improving the previous known bound of $m^{O(\alpha^2)}$ by Gurjar and Vishnoi~\cite{gurjar2019number}, and resolving a recent conjecture of Khanna, Putterman, and Song \cite{khanna2025optimal}. 
    \end{itemize}
    Our results yield two immediate applications. First, we obtain very simple RNC algorithms for computing the shortest circuit, and for enumerating all $\alpha$-approximate circuits, of a regular matroid without computing the Seymour decomposition. Second, we obtain a randomized parallel algorithm to find a basis of a regular matroid under independence-oracle access. It uses polynomially many independence-oracle queries in $O(\log m)$ adaptive rounds and succeeds with high probability. This follows from the framework of Khanna, Putterman, and Song~\cite{khanna2025optimal} and the improved counting bound, and extends their
    result from graphic matroids to regular matroids.
  \end{abstract}

\newpage
\pagenumbering{arabic}
\section{Introduction}\label{section:introduction}
Karger's celebrated random contraction algorithm~\cite{karger1993global,karger1995random} for the global minimum cut problem in graphs has been highly influential. In addition to giving a completely new algorithmic approach to the minimum cut problem, it also established a bound on the number of near-minimum cuts in undirected graphs. Let $G$ be an edge-weighted graph on $n$ vertices and let $\lambda$ denote its minimum-cut value. The algorithm iteratively selects an edge with probability proportional to its weight and contracts it.  Karger showed that every fixed cut of weight at most $\alpha\lambda$ survives with probability $n^{-O(\alpha)}$ if the contraction process is stopped when the graph has $\Theta(\alpha)$ super-vertices. At this point, all cuts of the contracted graph can be enumerated since there are only $2^{O(\alpha)}$ such cuts.  Combining these two allows one to conclude that the number of $\alpha$-approximate cuts is $n^{O(\alpha)}$. It also yields a randomized algorithm to enumerate all such cuts. Subsequently there have been other approaches that proved tighter bounds  and also led to deterministic enumeration algorithms \cite{karger2000minimum, NNI97, CQX20, BCW23}.

Similar counting bounds are known for approximate shortest cycles in a graph. Consider a graph with $m$ edges and positive edge weights/lengths\footnote{Zero length edges lead to exponential bounds for trivial reasons and hence we disallow them.}.  Teo and Koh~\cite{teo1992number} proved that an unweighted graph has at most $m^2$ shortest cycles. Subramanian~\cite{subramanian1995polynomial} independently showed that the number of $\alpha$-approximate shortest cycles in a \emph{weighted} graph with positive edge-weights is at most $(2m)^{\lceil 2\alpha \rceil}$, and that these cycles can be enumerated in $m^{O(\alpha)}\poly(n,m)$ time. Subramanian was explicitly motivated by Karger's result and by the connection between cycles in graphs and cuts in graphs via a matroidal viewpoint that we will explain shortly.  Nevertheless, the proofs in \cite{teo1992number} and \cite{subramanian1995polynomial} follow a very different approach from that of Karger, and are based on deterministic combinatorial arguments.  Subramanian's short paper uses the notion of a robot exploring the graph in a breadth-first search fashion.

We now explain the connection between cuts and cycles in graphs from a matroidal point of view. We assume some familiarity with matroids (see Section~\ref{section:preliminaries} for formal definitions). Let $M=(E, \mathcal{I})$ be a matroid. A circuit of $M$ is an inclusion-minimal dependent set and a central computational problem of interest in matroids is that of finding a smallest circuit.\footnote{Smallest circuit of a matroid is also known as the girth of the matroid, but we reserve the term girth for the shortest cycle of a graph in this work to avoid confusion.}   
Given an undirected multigraph $G=(V,E)$, the graphic matroid $M_G=(E,\mathcal{I})$ associated with $G$ is $(E, \mathcal{I})$ where $F \in \mathcal{I}$ iff $F$ is a forest in $G$. The circuits of a graphic matroid are precisely the simple cycles of the underlying graph. Each matroid $M$ has a dual-matroid $M^*$ defined on the same ground set. The dual of the graphic matroid $M_G$ is the cographic matroid $M^*_G$. The circuits of $M^*_G$ are the minimal edge cuts (bonds) in $G$. 
Thus, minimum circuit size in $M_G$ is the girth of $G$, and the minimum circuit size in $M^*_G$ is the global min-cut of $G$. 
In terms of matroid duality, \emph{deleting} an edge in the graphic matroid $M_G$ corresponds to \emph{contracting} an edge in the co-graphic matroid $M^*_G$.

As we remarked earlier, Subramanian explicitly motivated his approximate-cycle counting bound via the preceding connection between $M_G$ and $M^*_G$, and Karger's results on approximate mincuts. However, it appears that the natural adaptation of Karger's random contraction algorithm to find (approximate) shortest cycles in graphs had not been considered in the literature. The correspondence between deletion and contraction operations suggests a randomized algorithm for girth: iteratively pick an edge in proportion to its weight and delete it until arriving at a graph with a unique cycle. In recent work on parallel basis finding, Khanna, Putterman and Song \cite{khanna2025optimal} considered a related algorithm that deletes edges in a one-shot manner. This and prior structural reasons motivated us to consider the random deletion algorithm for counting approximate short cycles in graphs. The cyclomatic number of a graph is the minimum number of edges that must be removed to make it acyclic. We showed very recently \cite{chandrasekaran2026randomDeletion} that for every fixed $\alpha \ge 1$, an $\alpha$-approximate shortest cycle survives while the cyclomatic number of the remaining graph decreases to $O(\alpha^2)$ with probability $m^{-O(\alpha)}$. Since a graph of cyclomatic number $d$ contains at most $2^d-1$ cycles, this gives an algorithmic proof that the number of $\alpha$-approximate cycles in a graph is $2^{O(\alpha^2)}m^{O(\alpha)}$.
Note that the bound we obtain is slightly weaker than the one given by
Subramanian. Nevertheless, this is a different perspective that led to 
our work in this paper on regular matroids that we describe next.
To explain our results and how our cycle result generalizes, we need a natural generalization of the cyclomatic number in graphs to matroids. In a matroid $(E,\mathcal{I})$ for every set $R\subseteq E$, we define its \emph{nullity} as $\nu_M(R):=|R|-r_M(R)$ 
where $r_M(R):=\max\{|I|: I\subseteq R, I\in \mathcal{I}\}$ is the rank of $R$. Equivalently, $\nu_M(R)$ is the minimum number of elements that need to be removed from $R$ to make it independent. In a graphic matroid, nullity coincides with the cyclomatic number of the corresponding graph.

Regular matroids are an important class of matroids that were instrumental in the development of matroid theory via the work of Tutte, who characterized them via excluded minors as well as established their connection to totally unimodular matrices \cite{Tut58} (see also \cite{cunningham2012coming}). Regular matroids are representable over every field, and in particular, they are a subclass of binary matroids and are minor-closed. Seymour gave a powerful algorithmic result on recognizing regular matroids via his fundamental decomposition theorem \cite{seymour1980decomposition}. His theorem shows that every regular matroid can be constructed from graphic, cographic, and a fixed ten-element matroid $R_{10}$ using $1$-, $2$-, and $3$-sums. Such a decomposition can be computed in polynomial time \cite{seymour1980decomposition,truemper-book} and it can further be used to compute a smallest circuit of a regular matroid in polynomial time (given access to the matroid via its independence-oracle) \cite{kapadia-blogpost}.  In contrast, finding a smallest circuit of binary matroids is NP-Hard \cite{Var97}.

The preceding connections therefore suggest two broad questions:
\begin{enumerate}
    \item Can one derive a counting bound for near-minimum circuits in regular matroids comparable to the known bounds for cuts and cycles in graphs?
    \item Does the natural random deletion algorithm have comparably strong performance on regular matroids in terms of finding shortest circuits and approximately-short circuits?
    \end{enumerate}

The first question has received substantial attention. Let $M$ be a regular matroid on $m$ elements, and let $g$ be its minimum circuit size. Gurjar, Thierauf, and Vishnoi~\cite{gurjar2018isolating} first proved that $M$ has $O(m^5)$ circuits of size strictly less than $\frac{3}{2}g$. In particular, this gives a polynomial bound for every $1\leq \alpha < \frac{3}{2}$. Gurjar and Vishnoi~\cite{gurjar2019number} later proved that, for every $\alpha\geq 1$, the number of circuits of size at most $\alpha g$ is bounded by $m^{O(\alpha^2)}$. This established polynomial growth for every fixed $\alpha$, but left a quadratic dependence on $\alpha$ in the exponent of $m$, compared with the linear dependence known for graphic and cographic matroids. A more recent motivation to examine these questions comes from parallel basis finding of regular matroids. Khanna, Putterman, and Song ~\cite{khanna2025optimal} pointed out that if one can prove a uniform bound of $m^{O(\alpha)}$ on the number of $\alpha$-approximate minimum circuits then the machinery that they developed for  graphic matroids would go through to yield a fast parallel algorithm to find a basis via independence-queries.

In this work, we address the preceding questions through the following one-shot random deletion algorithm. For a regular matroid $M=(E,\mathcal{I})$ with $m$ elements, let $g$ be its minimum circuit size. The algorithm independently deletes each element with probability $1-m^{-40/g}$ and returns the remaining/residual set $R$; when $g$ is large this probability is roughly $\frac{40 \ln m}{g}$. The pseudocode is given in \Cref{algo:random-delete}.

\begin{algorithm}[ht]
\caption{Random-Delete\((M=(E, \mathcal{I}), g\in \Z_+)\)}\label{algo:random-delete}
\begin{enumerate}[leftmargin=2em]
\item $p \gets 1 - |E|^{-40/g}$.  
\item For every $e\in E$, delete it with probability $p$.
\item $R \gets $ the set of remaining elements.
\item Return $R$.
\end{enumerate}
\end{algorithm}

We establish two properties of \Cref{algo:random-delete}. First, the remaining set $R$ is independent with high probability. Second, for every fixed circuit $C$ with $|C|\leq \alpha g$, the circuit $C$ survives in $R$ with probability at least $m^{-O(\alpha)}$, and conditioned on its survival, the nullity of the returned set $R$ is $O\left(\alpha\log(\alpha+2)\right)$ with high probability. We may assume $\alpha\leq m$, since otherwise every circuit is an $\alpha$-approximate minimum circuit. This tradeoff yields a new algorithmic proof that the number of $\alpha$-approximate minimum circuits is at most $m^{O(\alpha)}$, as well as a corresponding randomized enumeration algorithm, improving the previous $m^{O(\alpha^2)}$ bound of Gurjar and Vishnoi~\cite{gurjar2019number}. An $m^{O(\alpha)}$ bound was recently conjectured by Khanna, Putterman, and Song \cite{khanna2025optimal} and hence our results resolve this conjecture in the affirmative.\footnote{Although Khanna, Putterman, and Song \cite{khanna2025optimal} stated this conjecture explicitly, it is a natural question after the work of Gurjar and Vishnoi \cite{gurjar2019number}.}

\subsection{Our Results}
We now formalize our results. Our main result is that the random deletion algorithm returns an independent set with high probability.

\begin{restatable}{theorem}{theoremtargetfree}\label{theorem:target-free}
  Let $M=(E,\mathcal{I})$ be a regular matroid with $m\geq 2$ elements and minimum circuit size $g$. \Cref{algo:random-delete} returns a set $R$ that is independent with probability at least $(1-1/m^5)$.
\end{restatable}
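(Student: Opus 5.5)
A natural first attempt is a union bound over all circuits: circuit $C$ survives with probability $q^{|C|}$ where $q=m^{-40/g}$. If the number of circuits of size $\le \alpha g$ were $m^{c\alpha}$, this would give $\sum_{\alpha\ge 1} m^{c\alpha}\cdot m^{-40\alpha}\le m^{-5}$ for $c$ small enough. But that counting bound is exactly what the paper derives \emph{from} this algorithm, so I would avoid circularity. The plan instead reduces to the structure of regular matroids via Seymour's decomposition, using the decomposition only in the analysis.

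\textbf{Step 1 (graphic and cographic pieces).} For a graphic matroid with girth $g$, a direct argument should show the residual graph is a forest with high probability. Such a graph has at most $m$ edges. Whenever the residual graph contains a cycle, it also contains a short cycle, or a short ``witness'' built from few paths between branch vertices. Counting these witnesses via BFS-type structures (in the spirit of Subramanian's bound $(2m)^{\lceil 2\alpha\rceil}$, which is independent of this paper and may be invoked) gives that $\sum_C q^{|C|}$ over cycles is at most $\sum_k (2m)^{2k/g+2}\,m^{-40k/g}$, which is tiny. For cographic matroids I would use Karger's bound on the number of $\alpha$-approximate cuts, $n^{2\alpha}\le m^{2\alpha}$, the same way. $R_{10}$ has constant size and $g\ge 4$, and is handled directly.

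\textbf{Step 2 (lifting through $1$-, $2$-, $3$-sums).} I would show by induction on the decomposition tree that for every circuit $C$ of $M$ there is a bounded ``charging'' to circuits of the pieces. A circuit of $M=M_1\oplus_k M_2$ is either a circuit of a piece avoiding the marker elements, or the symmetric difference of circuits $C_1, C_2$ of the pieces that use markers. I would then bound $\sum_C q^{|C|}$ in terms of the pieces' sums. The difficulty is that the pieces' minimum circuit sizes can be smaller than $g$ once markers are included, and marker elements are not in $E$. So I would treat markers as elements that are never deleted, and prove a strengthened statement for each piece. The strengthening: for a set $Z$ of at most $c$ undeletable elements, the expected number of surviving circuits is at most $m^{-5}$-small, measured against the circuit size outside $Z$.

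\textbf{Main obstacle.} Step 2 is the main obstacle. Circuits crossing many $3$-sums can accumulate combinatorial choices multiplicatively along the tree, and exactly this blowup is what produced $m^{O(\alpha^2)}$ in Gurjar--Vishnoi. My plan is to charge each circuit crossing a sum to the branch containing most of its elements. This gives a sum whose total over the tree telescopes to $m\cdot m^{-\Omega(40)}$, exploiting that the exponent $40$ leaves a large margin. If the strengthened piece-wise bound holds with constant overhead per marker, the final union bound gives failure probability at most $m^{-5}$.
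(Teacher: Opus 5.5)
There is a genuine gap, and it is Step 2. Look first at what your Step 2 target is. Showing that the expected number of surviving circuits $\sum_C q^{|C|}$, with $q=m^{-40/g}$, is at most $m^{-5}$ immediately gives at most $m^{40\alpha-5}$ circuits of size at most $\alpha g$. So you are not avoiding the counting bound; you are committing to prove it directly through Seymour's decomposition. That is exactly the open statement, and exactly where the previous decomposition-based analysis only reached $m^{O(\alpha^2)}$.

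Your sketch for it does not work. The strengthened piecewise statement with undeletable markers is false as posed. At a $3$-sum, the marker set $\{a,b,c\}$ is itself a circuit of each piece with no deletable element, so it survives with probability $1$. Likewise, a circuit $\{s\}\cup P$ with a very short completion $P$ survives with probability close to $1$. So no ``few surviving piece-circuits'' statement can hold.

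Charging a crossing circuit $C_1\triangle C_2$ to its heavier side also does not remove the choices on the lighter side. The circuits charged to a fixed $C_1\ni e$ are indexed by the completions of $e$ on the other side. Their total survival weight $\sum q^{|C_2\setminus\{e\}|}$ is not small, since the minimum completion alone can contribute nearly $1$. These factors therefore multiply along chains of sums rather than telescoping.

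The missing idea is to bound the probability of the bad event rather than an expected count, by localizing it on a rooted clean decomposition tree. Call a node $u$ $R$-good if two things hold: no circuit of $M_{\mathcal{Q}_u}$ avoiding $S_u$ survives, and every surviving completion of a parent marker has size $<g/2$. The crucial fact is that a completion of size $<g/2$ is unique, since two of them would have as symmetric difference a nonempty union of circuits of $M$ of size $<g$. Hence, if all children of $u$ are good, any surviving circuit through $u$ must use the fixed, pre-chosen minimum completion at every child. The subtrees then contribute no combinatorial choices at all.

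Matroid absorption then collapses the children into weighted elements, with three ingredients:
\begin{itemize}
\item weights equal to completion lengths;
\item a $\Delta$--$Y$ exchange on $3$-sum triangles, which needs the triangle inequality for minimum completions;
\item the clean property, so that cographic nodes stay cographic.
\end{itemize}
After absorption, failure of goodness at $u$ forces survival of one of polynomially many fixed certificates of size at least $g/2$ at $u$:
\begin{itemize}
\item at graphic nodes, minimal admissible paths, each determined by its two end edges via uniqueness of short paths;
\item at cographic nodes, circuit expansions, counted by Karger's bound on the weighted absorbed graph;
\item at $R_{10}$ nodes, at most $2^{10}$ circuits.
\end{itemize}
Each certificate survives with probability at most $m^{-20}$. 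A union bound over the $O(m)$ nodes, plus induction from leaves to root on the good events, gives the theorem without ever bounding $\sum_C q^{|C|}$. Without this uniqueness-based localization, or some substitute for it, your Step 2 has no mechanism to defeat the multiplicative blowup you correctly identified.
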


We recall that the nullity of a set is the minimum number of elements whose removal makes it independent. A key fact that follows from the preceding theorem is that the nullity of the remaining set, conditioned on the survival of a fixed near-minimum circuit, can be bounded. 

\begin{restatable}{theorem}{theoremnullity}\label{theorem:nullity}
    Let $M=(E,\mathcal{I})$ be a regular matroid with $m\geq 2$ elements and minimum circuit size $g$. We fix a circuit $C$ of size $|C|\leq \alpha g$ with $\alpha\geq 1$. The set $R$ returned by \Cref{algo:random-delete} satisfies
    \begin{align*}
        \Pr [C\subseteq R] &\geq m^{-O(\alpha)}\text{ and}\\
        \Pr\left[ \nu_M(R)= O\left(\alpha\log(\alpha+2)\right) \middle| C\subseteq R\right] &\geq 1-\frac{1}{m^2}.
    \end{align*}
\end{restatable}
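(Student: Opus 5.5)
The first inequality is immediate. Each element survives independently with probability $1-p=m^{-40/g}$, so
\[
\Pr[C\subseteq R]=m^{-40|C|/g}\ \ge\ m^{-40\alpha}=m^{-O(\alpha)}.
\]
For the second inequality I would use that, conditioned on $C\subseteq R$, the set $R'=R\setminus C$ is distributed exactly as the output of \Cref{algo:random-delete}'s sampling step on $E\setminus C$ with the same survival probability $m^{-40/g}$. The plan is then to bound $\nu_M(C\cup R')$ by repeatedly contracting pieces of $C$ and invoking \Cref{theorem:target-free} (or a parametrized version of it) on the resulting minors. Regular matroids are minor-closed, so those minors stay regular.

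\textbf{Step 1 (contraction identity).} For any $X\subseteq C$ and any $S$ disjoint from $X$, I would use
\[
\nu_M(X\cup S)=\nu_M(X)+\nu_{M/X}(S).
\]
In particular, taking $X\subsetneq C$ (which is independent), $\nu_M(R)=\nu_{M/X}\bigl((C\setminus X)\cup R'\bigr)$. The aim is to contract $C$ piece by piece in a way that keeps the girth of the contracted matroid, restricted to the surviving elements, comparable to $g$.

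\textbf{Step 2 (girth after contraction).} Every circuit $D$ of $M/X$ satisfies $D\cup X'\in\mathcal{C}(M)$ for some $X'\subseteq X$, hence $|D|\ge g-|X|$. Since $M$ is binary, a circuit $D$ of $M/C$ lifts to circuits $D\cup C_1$ and $D\cup C_2$ with $C_1\sqcup C_2=C$, giving $|D|\ge g-\lfloor |C|/2\rfloor$ when this is positive. So I would partition $C$ into $k=O(\alpha)$ consecutive pieces $X_1,\dots,X_k$, each of size at most $g/4$. I would then argue that contracting a piece lowers the girth of the surviving structure by at most a constant factor relative to the current scale, and that the element set $C$ contributes only $O(1)$ nullity per piece.

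\textbf{Step 3 (a scaled version of \Cref{theorem:target-free}).} I would prove the following generalization, presumably by the same argument as \Cref{theorem:target-free}. In a regular matroid with girth $h\ge g/\beta$, sampling with survival probability $m^{-40/g}$ leaves nullity at most $O(\beta\log(\beta+2))$ with probability at least $1-m^{-3}$. The natural proof is a dyadic recursion. With survival probability $m^{-40/g}$, \Cref{theorem:target-free} kills all circuits of size at least $g$, up to the sampling constant. Circuits of size in $[g/2^{j+1},g/2^j)$ can be charged using an independent split of the deletion into $2^{j}$ rounds. The nullity added at scale $j$ is $O(\beta/2^j)$ only if the count of short circuits is controlled, and summing over $O(\log\beta)$ scales gives the $\log$ factor. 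Applying this with $\beta=O(\alpha)$ to $M/X$, restricted to the surviving elements, and adding the $O(\alpha)$ contribution from the pieces of $C$ via Step 1, would give $\nu_M(R)=O(\alpha\log(\alpha+2))$. A union bound over the $O(\alpha)\le O(m)$ pieces keeps the failure probability at most $1/m^2$.

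The main obstacle is Step 3 together with the girth control of Step 2. Contracting $C$ can create short circuits, and the sampling rate is tuned to $g$, not to the smaller girth of the minor, so \Cref{theorem:target-free} cannot be applied as a black box. What I would try first is to extract from the proof of \Cref{theorem:target-free} a statement with explicit dependence on the ratio between the sampling scale and the girth. I expect such a statement to lose only a $\beta\log\beta$ factor in nullity rather than a loss in probability.
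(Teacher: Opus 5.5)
Your bound on $\Pr[C\subseteq R]$ matches the paper's. The conditional nullity part, however, has a genuine gap, and it sits where you suspected: the ``scaled'' version of \Cref{theorem:target-free} in Step 3 is false as stated. Take a graphic matroid on $m$ elements consisting of $m/h$ disjoint cycles of length $h=g/\beta$, where $g$ is only the parameter setting the survival rate. Each cycle survives with probability $m^{-40h/g}=m^{-40/\beta}$. With $\beta=80$ and $h=m^{1/4}$, the surviving nullity therefore concentrates around $m^{1/4}$, not $O(\beta\log\beta)$. No bound that depends only on the girth ratio can hold over an unrestricted ground set; what controls nullity is girth relative to the \emph{size of the ground set}.

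Step 2 cannot supply the missing structure either. Once the contracted part of $C$ has size about $g$, the girth of the minor can collapse to $1$. For example, let $C$ be a cycle of length $2g-2$ with a chord $e$ splitting it into two $g$-cycles; contracting one of the two $(g-1)$-paths makes $e$ a loop. Your dyadic charging also explicitly needs a count of short circuits, which is essentially \Cref{cor:counting}, the very consequence you are trying to derive.

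The missing idea is to apply your contraction identity in the opposite direction and pair it with an Erd\H{o}s--P\'osa-type girth--nullity inequality. Write $R=X\cup C$ with $X=R\setminus C$ the random part, and let $M_X$ be $M/X$ restricted to $C$. Then $\nu_M(R)=\nu_M(X)+\nu_{M_X}(C)$. Couple with an independent resample $Y\subseteq C$, so that $U=X\cup Y$ is an unconditional run. By \Cref{theorem:target-free}, $\nu_M(X)=0$ except with probability $m^{-5}$.

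The minor $M_X$ is regular on at most $\alpha g$ elements. By \Cref{theorem:regular-girth-nullity}, a regular matroid on $n$ elements with nullity $s\ge 1$ has a circuit of size at most $168n\ln(s+1)/s$. Hence if $\nu_{M_X}(C)>L=\lceil 2^{20}\alpha\ln(\alpha+2)\rceil$, then $M_X$ has a circuit $C_X\subseteq C$ with $|C_X|\le g/40$. In that case $X\cup C_X$ is dependent in $M$, and $\Pr[C_X\subseteq Y\mid X]\ge m^{-1}$. This gives $m^{-5}\ge \Pr[U\notin\mathcal{I}]\ge m^{-1}\Pr[\nu_{M_X}(C)>L]$.

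The $\log(\alpha+2)$ factor comes from the girth--nullity inequality. The paper derives it from \Cref{theorem:target-free}: counting expected deletions gives $g\le 42m\ln m/s$. It then applies Heller's theorem to the dual to compress the matroid to at most $s(s+1)$ elements while keeping $g/m$ within a factor of $2$, which turns $\ln m$ into $O(\ln(s+1))$. Alternatively, the paper cites Knauer for this inequality.
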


We observe that regular matroids are binary, so a set of nullity $b$ contains at most $2^b$ circuits (see Lemma \ref{lemma:binary-matroid-nullity}). Combining these facts with \Cref{theorem:nullity} yields the following counting bound.

\begin{corollary}\label{cor:counting}
    Let $M=(E,\mathcal{I})$ be a regular matroid with $m\geq 2$ elements and minimum circuit size $g$. For every $\alpha\geq 1$, the number of circuits with size at most $\alpha g$ is at most
    \[
        2^{O\left(\alpha\log(\alpha+2)\right)}m^{O(\alpha)}=(m\alpha)^{O(\alpha)}=m^{O(\alpha)}.
    \]
\end{corollary}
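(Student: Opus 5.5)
The plan is the standard Karger-style counting argument that combines \Cref{theorem:nullity} with the bound on circuits inside a set of bounded nullity (Lemma~\ref{lemma:binary-matroid-nullity}). Let $\mathcal{C}_\alpha$ denote the family of circuits of size at most $\alpha g$, and run \Cref{algo:random-delete} once to obtain $R$. For a fixed $C\in\mathcal{C}_\alpha$, let $A_C$ be the event that $C\subseteq R$ and $\nu_M(R)\leq b$. Here $b=c\,\alpha\log(\alpha+2)$ is the bound from \Cref{theorem:nullity}, with an absolute constant $c$. Multiplying the two inequalities of \Cref{theorem:nullity} gives
\[
\Pr[A_C]\geq m^{-O(\alpha)}\left(1-\frac{1}{m^2}\right)\geq \tfrac12 m^{-O(\alpha)}=:q ,
\]
which is uniform over all $C\in\mathcal{C}_\alpha$.

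Next I would double count the random variable $X:=\sum_{C\in\mathcal{C}_\alpha}\mathbbm{1}[A_C]$. By linearity, $\mathbb{E}[X]\geq q\,|\mathcal{C}_\alpha|$. For an upper bound, note that on any outcome where at least one $A_C$ occurs, $\nu_M(R)\leq b$, and every $C$ counted in $X$ is a circuit contained in $R$. Regular matroids are binary, so Lemma~\ref{lemma:binary-matroid-nullity} says $R$ contains at most $2^{b}$ circuits. Hence $X\leq 2^b$ pointwise, and so $\mathbb{E}[X]\leq 2^b$. Combining the two bounds gives
\[
|\mathcal{C}_\alpha|\leq 2^{b}/q = 2^{O(\alpha\log(\alpha+2))}\,m^{O(\alpha)} .
\]

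Finally I would simplify. We have $2^{O(\alpha\log(\alpha+2))}=(\alpha+2)^{O(\alpha)}$, so the bound is $(m\alpha)^{O(\alpha)}$ after absorbing constants (using $\alpha\geq1$). As noted in the paper, we may assume $\alpha\leq m$, since otherwise every circuit qualifies and the count is trivially at most $2^m\leq m^{O(\alpha)}$. With $\alpha\leq m$ we get $(m\alpha)^{O(\alpha)}\leq m^{O(\alpha)}$, using $m\geq2$ so that constants are absorbed into powers of $m$.

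There is no real obstacle here, since everything is delegated to \Cref{theorem:nullity}. The only care needed is to use the uniform lower bound $q$ and the pointwise bound $X\leq 2^b$ on the nullity event, rather than conditioning on different events for different $C$.
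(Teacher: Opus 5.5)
Your proposal is correct and follows essentially the same route as the paper. The paper also combines the two bounds of \Cref{theorem:nullity} with \Cref{lemma:binary-matroid-nullity} and counts surviving circuits only in low-nullity samples, taking expectations. Your explicit indicator sum, pointwise bound $X\leq 2^b$, and handling of the case $\alpha>m$ just make that argument precise.
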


\Cref{algo:random-delete} further implies an enumeration algorithm for near-minimum circuits. By repeating the random deletion process and enumerating all circuits in each resulting set of bounded nullity, we obtain the following corollary.

\begin{corollary}\label{cor:enumeration}
    Let $M=(E,\mathcal{I})$ be a regular matroid with $m\geq 2$ elements and minimum circuit size $g$. For every $\alpha\geq 1$, there is a randomized algorithm that accesses the matroid via independence oracle queries and enumerates all circuits of size at most $\alpha g$ in
    \[
        2^{O\left(\alpha\log(\alpha+2)\right)}m^{O(\alpha)}=(m\alpha)^{O(\alpha)}=m^{O(\alpha)}
    \]
    time with high probability.
\end{corollary}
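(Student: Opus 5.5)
We prove Corollary~\ref{cor:enumeration} by repeating \Cref{algo:random-delete} independently many times and, in each run, enumerating every circuit inside the returned set $R$ whenever $R$ has small nullity. First we compute $g$, or at least a usable estimate of it. One option is to guess $g\in\{1,\dots,m\}$ and run the whole procedure for each guess, which costs only a factor $m$. A cleaner option is to find $g$ with the same procedure: for $\alpha=1$ and the correct $g$, a minimum circuit survives with probability $m^{-O(1)}$. So we take the minimum size of any circuit found over all guesses, and output only circuits of size at most $\alpha$ times the smallest circuit ever found. Circuits found for a wrong guess are still genuine circuits, so they can only help. By \Cref{theorem:nullity} the run with the correct guess succeeds with high probability, so the overall minimum equals $g$ and no spurious output survives the filter.

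Fix $\alpha\ge 1$ and a bound $b=c\,\alpha\log(\alpha+2)$, with $c$ the constant from \Cref{theorem:nullity}. Each trial does the following.
\begin{enumerate}
\item Run \Cref{algo:random-delete} to obtain $R$.
\item Compute $r_M(R)$ greedily with $|R|$ oracle calls, and hence $\nu_M(R)$. If $\nu_M(R)>b$, discard the trial.
\item Otherwise build a basis $B\subseteq R$ greedily. For each $e\in R\setminus B$, find the fundamental circuit $C(B,e)$: these are the $f\in B$ with $B+e-f$ independent, which takes $O(m)$ oracle calls.
\item Since $M$ is binary (Lemma~\ref{lemma:binary-matroid-nullity}), every circuit in $R$ is a symmetric difference of a subset of these $\nu_M(R)\le b$ fundamental circuits. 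Form all $2^b$ symmetric differences, test each for being a circuit (dependent, with every element-deletion independent) using $O(m)$ queries, and record those of size at most $\alpha g$.
\end{enumerate}
A trial therefore costs $2^{b}\,\mathrm{poly}(m)$.

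For correctness, fix a circuit $C$ with $|C|\le\alpha g$. By \Cref{theorem:nullity}, a single trial has $C\subseteq R$ and $\nu_M(R)\le b$ with probability at least $m^{-O(\alpha)}(1-m^{-2})\ge m^{-c'\alpha}$. In that event $C$ is among the enumerated symmetric differences, so it is output. Run $T=m^{c'\alpha}\cdot(c''\alpha\ln m)$ trials. Then $C$ is missed with probability at most $\exp(-c''\alpha\ln m)=m^{-c''\alpha}$. By \Cref{cor:counting} there are at most $m^{O(\alpha)}$ such circuits, so a union bound with $c''$ chosen large enough gives that all of them are found with high probability. The outputs are always genuine circuits, so there are no false positives.

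The total time is $T\cdot 2^{b}\,\mathrm{poly}(m)=2^{O(\alpha\log(\alpha+2))}m^{O(\alpha)}$, which is $(m\alpha)^{O(\alpha)}=m^{O(\alpha)}$ using $\alpha\le m$. The only delicate step is the union bound, which must rely on the counting bound of \Cref{cor:counting}; beyond that the argument is routine bookkeeping.
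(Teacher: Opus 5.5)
Your proposal is correct and follows essentially the same route as the paper: repeat \Cref{algo:random-delete}, discard samples whose nullity exceeds the $O(\alpha\log(\alpha+2))$ bound, enumerate the surviving circuits via the fundamental-circuit argument of \Cref{lemma:binary-matroid-nullity}, and combine \Cref{theorem:nullity} with the counting bound of \Cref{cor:counting} in a union bound over repetitions. Your handling of an unknown $g$ is also exactly the paper's: try every guess in $[m]$, then keep only circuits of size at most $\alpha$ times the smallest circuit found.
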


\begin{remark}
  Our results in this paper improve and subsume our previous unpublished work \cite{chandrasekaran2026randomDeletion}. We need the ideas from that work since graphic matroids serve as a base case in our proof for regular matroids via Seymour's decomposition. We folded the ideas from that unpublished work, along with improvements, into this paper.
\end{remark}

A counting bound of $m^{O(\alpha)}$ on the number of $\alpha$-approximate shortest circuits easily implies that random deletion returns an independent set with high probability via a standard union bound (i.e., Corollary \ref{cor:counting} implies Theorem \ref{theorem:target-free}). This is also implicit in \cite{khanna2025optimal}. Our work goes the other way---we use the fact that random deletion returns an independent set with high probability to infer the $m^{O(\alpha)}$ bound on the number of $\alpha$-approximate shortest circuits (i.e., use Theorem \ref{theorem:target-free} to infer Corollary \ref{cor:counting}). Thus,  the two statements are equivalent up to constants in the $O(\alpha)$ term in the counting bound. Theorem \ref{theorem:nullity} is the bridge to infer the counting bound from Theorem \ref{theorem:target-free}.

\medskip
We outline a few applications of our main results. 

\paragraph{Parallel Algorithm for Smallest Circuit in Regular Matroids.} Theorem~\ref{theorem:nullity} yields a simple randomized parallel algorithm for computing a smallest circuit of a regular matroid in the independence oracle model: the random deletion process is non-adaptive and can be done using $m^{O(1)}$ processors in one round. First, we assume that the min-circuit size $g$ is known.
By Theorem \ref{theorem:nullity}, a fixed min-circuit survives in the returned set $R$ and the nullity of $R$ is at most a constant with probability at least $m^{-O(1)}$. Suppose both events happen. Then, all circuits in $R$ can be enumerated in constant time using $m^{O(1)}$ processors and $m^{O(1)}$ independence queries in one round. Thus, the random deletion process can be used to find
a circuit of size at most $g$ using $m^{O(1)}$ processors and $m^{O(1)}$ independence queries in one round with probability $m^{-O(1)}$. Repeating this $m^{O(1)}\log{m}$ times independently (using as many processors) boosts the success probability to $1-m^{-\Omega(1)}$. We can dispense with the assumption that $g$ is known by trying all possible values of $g$ in parallel and returning the smallest circuit found.

As we remarked earlier, a sequential algorithm via Seymour's decomposition is known \cite{kapadia-blogpost}. In contrast, the random deletion algorithm is extremely simple and does not require computing the decomposition. As far as we are aware, this is the first known parallel algorithm for this problem. We note that even for computing the girth of a graph, the random deletion algorithm provides an alternate parallel algorithm (standard algorithms via BFS and matrix multiplication and other methods are known for this already). Our algorithms are, in a sense, the analogues of Karger's random contraction algorithm that gave the first RNC algorithms for global mincut. 

\paragraph{Parallel basis finding.}
We consider the problem of finding a basis in a matroid using independence queries. How fast can we find a basis in parallel? We refer the reader to recent exciting work of Khanna, Putterman and Song \cite{khanna2025optimal} for motivation and past work on this problem. In that work, they established the optimal $O(\log m)$ round complexity for graphic matroids. They asked whether one can obtain a similar result for regular matroids. We affirmatively answer their question. Our results imply that there exists a randomized parallel algorithm that finds a basis of a regular matroid in $O(\log{m})$ rounds using $\poly(m)$ non-adaptive queries to the independence oracle in each round with high probability. 
This result follows from the framework in \cite{khanna2025optimal} in a black-box fashion once we have the $m^{O(\alpha)}$ bound on $\alpha$-approximate circuits. 
\cite{khanna2025optimal} obtain a deterministic algorithm for graphic matroids. We do not attempt to derandomize our algorithm in this version.

\paragraph{Regular Codes and TU Lattices.} Gurjar and Vishnoi \cite{gurjar2019number} used the bound on the number of $\alpha$-approximate shortest circuits in regular matroids to bound the number of $\alpha$-approximate codewords in regular codes and the number of $\alpha$-approximate shortest vectors in TU lattices. Our improved bound on the number of $\alpha$-approximate shortest circuits in regular matroids directly improves the bounds for these applications as well. We refer the reader to Gurjar-Vishnoi's paper to derive these improvements \cite{gurjar2019number}. 

\paragraph{Extensions and discussion.} Our main contribution is conceptual. We investigated random deletion as a unified way to compute and bound approximate shortest circuits in regular matroids after realizing that it works for graphic matroids. This generalizes Karger's random contraction algorithm for finding a global min-cut in graphs. The algorithm 
generalizes to any matroid and we believe that it has the desired properties for several generalizations of regular matroids. We discuss some of these classes. MFMC matroids are a generalization of regular matroids defined by Seymour \cite{Seymour1977}. The decomposition theory of regular matroids extends to these \cite{seymour1980decomposition,truemper-book}. Gurjar and Vishnoi \cite{gurjar2019number} extended their counting bound to this class and we believe that our approach also extends. Recall that 
computing the minimum circuit size of binary matroids is NP-Hard. 
 Geelen, Gerards and Whittle conjectured \cite{Geelenetal2015} that minimum-circuit problem is polynomial-time solvable for every \emph{proper} minor-closed families of binary matroids. They characterized a subfamily of these matroids using a notion called $t$-perturbation. Geelen and Kapadia \cite{geelen2018computing}, motivated by this conjecture, considered the problem of minimum circuit size in $t$-perturbations of graphic and co-graphic matroids and obtained randomized polynomial-time algorithms for them. They use different approaches for these two cases and build on sophisticated tools from combinatorial optimization as well as Karger's random contraction algorithm for the cographic case. One can show that the random deletion approach works in a unified fashion, not only for those two cases, but for $t$-perturbations of regular matroids as well. It is natural to ask if the random deletion would work for every proper minor-closed family of binary matroids; if it does, then this would answer the conjecture from \cite{Geelenetal2015} affirmatively (modulo the randomization). We leave this to future work.

\subsection{Overview of Techniques and AI Disclosure}\label{sec:techniques}
The impetus for our work came initially from \cite{khanna2025optimal} who showed that random deletion of edges in a graph yields a forest with high probability. To prove this fact they use the counting bound of Subramanian \cite{subramanian1995polynomial}. The starting point of our work was to ask whether we can prove Subramanian's counting bound via random deletion rather than the other way around. In our preliminary work, we achieved this by focusing on the nullity of the remaining set after random deletion. We proved a weaker version of Theorem~\ref{theorem:nullity} as follows. Fix an $\alpha$-approximate shortest cycle $C$. We define a \emph{$C$-witness} to be an inclusion-wise minimal path of length at least $g(G)/4$ that is edge-disjoint from $C$. Consider the graph $G_T$ returned by the deletion algorithm. We show that, with probability at least $m^{-O(\alpha)}$, every edge of $C$ survives in $G_T$ \emph{and} $G_T$ contains no $C$-witness. We then prove that the absence of such a witness implies that $\nu(G_T)=O(\alpha^2)$. This yields a counting bound of $2^{O(\alpha^2)}m^{O(\alpha)}$ which is weaker than $m^{O(\alpha)}$ for large $\alpha$. More recently we improved the nullity bound to $O(\alpha\log(\alpha+2))$ and realized that this tighter bound can be obtained by combining Theorem~\ref{theorem:target-free} and structural results on girth.

The cographic case can be handled by Karger's results on mincuts. With the graphic and cographic counting bounds obtained by the same algorithm, it is natural to consider the random deletion algorithm for regular matroids via Seymour's decomposition theorem~\cite{seymour1980decomposition}. It states that every regular matroid can be constructed via $1$-, $2$-, or $3$-sums of two smaller regular matroids. A first idea is to apply this decomposition recursively and combine the guarantees obtained for the two smaller matroids. However, a circuit can pass through both smaller matroids without appearing as a circuit in either one after the common elements are removed. Consequently, the desired guarantees cannot be applied recursively in a direct way. We take inspiration from prior work of Gurjar and Vishnoi~\cite{gurjar2019number} who developed several technical tools in their proof that showed a bound of $m^{O(\alpha^2)}$. The main difference is that we are analyzing a very different algorithm based on random deletion while they relied on Subramanian's work for the graphic matroid case. Our preliminary work used a two step argument where we first proved Theorem~\ref{theorem:target-free} for regular matroids, and then used it along with a more complex witness based argument to prove that nullity of the remaining set after random deletion is at most $O(\alpha^{1+\epsilon})$. Subsequently, as with the setting of cycles in graphs, we realized that one can use structural results on regular matroids to prove Theorem~\ref{theorem:nullity} via Theorem~\ref{theorem:target-free}. Although it is easy to obtain a weaker bound of $O(\alpha \log m)$ on the nullity from Theorem~\ref{theorem:target-free}, the bound of $O(\alpha\log(\alpha+2))$ is based on a stronger relationship between the girth and nullity for regular matroids. Such a bound follows from a recent work of Knauer \cite{Kn26} and we also give an alternative proof of this bound via Theorem~\ref{theorem:target-free}.

\paragraph{Clean unordered decomposition tree.} Gurjar and Vishnoi~\cite{gurjar2019number} organized the basic matroids arising from Seymour's decomposition into an unordered decomposition tree $\mathcal{T}$ (they borrow ideas from Chun and Oxley \cite{chun2011unavoidable}). Each node $u$ of the tree is graphic, cographic, or $R_{10}$, and deleting an arbitrary tree edge divides the tree into two parts whose associated matroids combine by the corresponding sum operation. In particular each connected component of the tree defines a regular matroid. In this work, we use a similar unordered decomposition tree that additionally satisfies a ``clean'' property. The clean property is important for our work: at every cographic node, the three elements of each $3$-sum correspond exactly to the three edges incident to a degree-three vertex in the underlying graph. This property is essential for the absorption step below.\footnote{Gurjar and Vishnoi's paper \cite{gurjar2019number} does not utilize this additional property. 
It appears that their proof has a technical gap due to this omission. We communicated this to Gurjar who suggested Delta-Wye transformation. Clean unordered decomposition tree along with Delta-Wye transformation can be used to fix the gap in their proof; the authors of \cite{gurjar2019number} plan to make a revised version available.} We infer the existence of a clean unordered decomposition tree using the results of Aprile and Fiorini \cite{aprile2022regular}. 

\paragraph{Matroid absorption.}
Let $\mathcal{T}$ be the decomposition tree and let $u$ be a node of the tree and let $M_u$ be the matroid associated with $u$. Consider an edge $e$ of the tree incident to $u$. Removing $e$ creates a component that attaches to $u$ through a $2$- or $3$-sum. The projection of a circuit $C$ of the matroid onto $M_u$ may contain virtual elements introduced by the decomposition, making the graphic and cographic results difficult to apply directly. Our matroid absorption lemma replaces the attached component with weighted elements so that these projections correspond to circuits in a modified \emph{weighted} matroid. The weights record the elements represented inside the absorbed component and preserve their behavior under random deletion. For a $3$-sum, the modification uses a $\Delta$ - $Y$ transformation; the clean-decomposition property ensures that the resulting matroid remains graphic, cographic, or $R_{10}$. This idea builds on the absorption arguments of Gurjar, Thierauf, and Vishnoi~\cite{gurjar2018isolating} and Gurjar and Vishnoi~\cite{gurjar2019number}. 

\paragraph{Random Deletion returns an Independent Set.} To prove \Cref{theorem:target-free}, we root the clean decomposition tree and associate a good event with each node, chosen so that together these events rule out every surviving circuit. Suppose that the good events hold for all children of a node. The matroid absorption lemma then replaces the child components with weighted gadgets, reducing the analysis at the node to a single modified weighted basic matroid. We construct certificates such that the failure of the node's good event forces one of these certificates to survive. We then prove that, with high probability, no certificate survives in the whole tree, which further shows that the remaining set is independent. These certificates are derived from properties of the basic matroids, and here is where 
our witness argument for cycles under random deletion plays an important role. 

\paragraph{AI Disclosure.} The authors initially studied the problem for graphic matroids, developed the random deletion perspective, and proved a weaker result that every fixed $\alpha$-approximate shortest cycle survives with probability $m^{-O(\alpha^3)}$, and conditioning on its survival, the remaining graph has cyclomatic number $O(\alpha^3)$ with high probability. After the authors provided this result and its proof to GPT-5.6 Sol Ultra, the model produced the stronger bound of $O(\alpha^{2})$ on the cyclomatic number, using techniques similar to the authors' original arguments, and further improved it to $O(\alpha^{1.0001})$. The authors subsequently investigated random deletion algorithm for regular matroids and used GPT-5.6 Sol Ultra to help understand the work of Gurjar and Vishnoi~\cite{gurjar2019number}. During this process, the model identified a technical bug in that work. Gurjar suggested the Delta-Wye transformation as a potential fix and building on this idea and the clean unordered decomposition, the authors derived the improved bound presented in this work with the help of AI. The authors also used GPT-6 Astra to assist with writing this manuscript. The authors assume full responsibility for all content.

\paragraph{Other related work.}
Karger's random contraction algorithm and related ideas, discovered more than 30 years ago, have led to many fundamental developments in graph algorithms including flows and cuts, network (un)reliability, sparsification, and spanning trees to name a few. The ideas have been extended beyond graphs to matroid, hypergraphs, and beyond. There are too many papers to discuss and we will not attempt to survey them. We discuss one thread that is relevant to our focus in this paper. Soon after his work on graphs, Karger pointed out the connection of random contraction and sampling to matroids \cite{karger1995random,Kar-matroid} and derived several applications. As we mentioned, random contraction later became a key ingredient in the randomized polynomial-time algorithms of Geelen and Kapadia~\cite{geelen2018computing} for computing the girth and cogirth of bounded-rank perturbations of graphic matroids.  They used Karger's random contraction algorithm for the cographic case but used a very different technique for the graphic case.
Much more recently, Quanrud~\cite{quanrud2024quotient} introduced quotient sparsification for monotone submodular functions, showing that a weighted ground set can be sparsified while approximately preserving every quotient; this simultaneously recovers graph and hypergraph cut sparsification and yields sparsifiers for matroids. We~\cite{chandrasekaran2026polymatroidal} recently identified quotient-bounded classes of polymatroids for which variants of random contraction yield exact, approximation, and fixed-parameter algorithms for minimum quotients, together with corresponding counting bounds. These works all point to the utility of the (poly)-matroidal view of Karger's ideas. Despite these connections to matroids and Subramanian's motivation, it is only recently that \cite{khanna2025optimal}, motivated by parallel basis algorithms, considered the random deletion algorithm for the graphic case, and they based their analysis on using Subramanian's counting bound as a black-box rather than deriving a counting bound via random deletion. 

\subsection{Paper Organization}

In \Cref{section:preliminaries}, we introduce basic concepts in matroid theory and introduce Seymour's decomposition theorem for regular matroids. \Cref{section:decomposition} develops the clean unordered decomposition tree that underlies our proofs, and \Cref{section:absorption} formalizes matroid absorption. We finish the proof of \Cref{theorem:target-free} in \Cref{section:target-free}. We prove the conditional nullity bound and use it to establish the counting and enumeration corollaries in \Cref{section:new:conditional-nullity}.

\section{Preliminaries}\label{section:preliminaries}

\subsection{Basic Matroid Terminology}\label{sec:prelims-matroids}

\begin{definition}[Matroid]
    A \emph{matroid} is a pair $M=(E,\mathcal{I})$, where $E$ is a finite ground set and $\mathcal{I}\subseteq 2^E$ is a family of independent sets satisfying the following properties:
    \begin{enumerate}
        \item $\emptyset\in \mathcal{I}$;
        \item every subset of an independent set is independent;
        \item and, for every $I, J\in \mathcal{I}$ with $|I|<|J|$, there exists an element $e\in J\setminus I$ such that $I\cup\{e\} \in \mathcal{I}$.
    \end{enumerate}
\end{definition}

A set $S\subseteq E$ is called \emph{dependent} if $S\notin \mathcal{I}$. A \emph{basis} is an inclusion-wise maximal independent set, and a \emph{circuit} is an inclusion-wise minimal dependent set. The minimum circuit size of $M$ is
\[
    g(M):=\min\{|C|: \text{$C$ is a circuit of $M$}\},
\]
and we write $g=g(M)$ when the matroid is clear from context. More generally, if $M$ is equipped with a positive weight function $w:E\rightarrow \R_+$, we write $w(S):=\sum_{e\in S}w(e)$ for every $S\subseteq E$ and define the weighted minimum circuit size by
\[
    g_w(M):=\min\{w(C): \text{$C$ is a circuit of $M$}\}.
\]

\begin{definition}[Rank and Nullity]
    For every subset $S\subseteq E$, the \emph{rank} and \emph{nullity} of $S$ are defined as
    \[
        r_M(S):=\max\{|I|:I\subseteq S, I \in \mathcal{I}\}
        \qquad \text{and} \qquad
        \nu_M(S):=|S|-r_M(S).
    \]
\end{definition}

Equivalently, $\nu_M(S)$ is the minimum number of elements that must be removed from $S$ to make it independent. The rank function of a matroid is a submodular function. We define $r_M:=r_M(E)$ as the \emph{rank} of the matroid $M$.

\begin{definition}[Deletion]\label{definition:deletion}
    For every $A\subseteq E$, the \emph{deletion} of $A$ from $M$, denoted by $M\setminus A$, is the matroid on the ground set $E\setminus A$ whose independent sets are $\{I\in \mathcal{I}: I\subseteq E\setminus A\}$. Its rank function satisfies
    \[
        r_{M\setminus A}(S):=r_M(S) \qquad \text{for every $S\subseteq E\setminus A$}. 
    \]
\end{definition}

The circuits of $M\setminus A$ are those circuits of $M$ that do not contain any element in $A$.

\begin{definition}[Contraction]\label{definition:contraction}
    The \emph{contraction} of $A$ in $M$, denoted by $M/A$, is the matroid on the ground set $E\setminus A$ whose rank function is given by
    \[
        r_{M/A}(S):=r_M(A\cup S)-r_M(A) \quad \text{for every $S\subseteq E\setminus A$}.
    \]
\end{definition}

\begin{definition}[Dual Matroid]
    The \emph{dual} of $M$, denoted by $M^*$, is the matroid on the ground set $E$ whose bases are the complements of the bases of $M$. Its rank function is given by
    \[
        r^*_M(S):=|S|-r_M(E)+r_M(E\setminus S) \qquad \text{for every $S\subseteq E$}.
    \]
\end{definition}
A circuit of $M^*$ is called a \emph{cocircuit} of $M$. We also consider two standard forms of oracle access to a matroid $M=(E,\mathcal{I})$. Given a subset $S\subseteq E$ of the ground set, a \emph{rank oracle} returns $r_M(S)$, whereas an \emph{independence oracle} determines whether $S\in \mathcal{I}$. An independence query can be answered using a single rank query, since $S$ is independent if and only if $r_M(S)=|S|$.

\subsection{Graphic and Cographic Matroids}

Let $G=(V,E)$ be an undirected graph. For every subset $F\subseteq E$ of edges, let $\Comp(V,F)$ be the number of connected components in the subgraph $(V,F)$.

\begin{definition}[Graphic Matroid]
    Let $G=(V,E)$ be an undirected graph. The \emph{graphic matroid} of $G$, denoted by $M(G)$, is the matroid with ground set $E$ in which $F\subseteq E$ is independent if and only if the subgraph $(V,F)$ contains no cycle. Its rank function is
    \[
        r_{M(G)}(F)=|V|-\Comp(V,F) \qquad \text{for every $F\subseteq E$}.
    \]
    A matroid is \emph{graphic} if it is isomorphic to $M(G)$ for some graph $G$.
\end{definition}

For every $F\subseteq E$, $F$ is a circuit of $M(G)$ if and only if $F$ is the edge set of a cycle in $G$. The following theorem states a bound on the number of near-shortest cycles in an undirected graph.

\begin{theorem}[Subramanian~\cite{subramanian1995polynomial}]\label{theorem:counting-cycle}
    Let $G=(V,E)$ be an undirected graph with $m:=|E|$ edges and positive edge weights $w:E\rightarrow \R_+$. Let
    \[
        g:=\min\{w(C):\text{$C$ is a cycle of $G$}\}.
    \]
    Then, for every $\alpha\geq 1$, the number of cycles $C$ satisfying $w(C)\leq \alpha g$ is at most $(2m)^{\lceil 2\alpha \rceil}$.
\end{theorem}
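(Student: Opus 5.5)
The plan is to encode each cycle $C$ with $w(C)\leq \alpha g$ injectively by a short sequence of \emph{oriented} edges of $G$. There are $2m$ oriented edges, so a sequence of length at most $k:=\lceil 2\alpha\rceil$ leaves roughly $(2m)^{k}$ choices. The encoding rests on one uniqueness fact.

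\emph{Key fact.} Let $P$ be a $u$--$v$ path with $w(P)<g/2$. Then $P$ is the unique minimum-weight $u$--$v$ path.
\begin{itemize}
\item Suppose $Q\neq P$ is another $u$--$v$ path with $w(Q)\leq w(P)$.
\item The symmetric difference $P\SymDif Q$ is a nonempty edge set in which every vertex has even degree. Hence it contains a cycle.
\item That cycle has weight at most $w(P)+w(Q)<g$, contradicting the definition of $g$.
\end{itemize}
So $P$ is determined by its endpoints $(u,v)$ alone, for example as the output of any fixed shortest-path rule.

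\emph{Encoding.} Fix a cycle $C$ with $w(C)\leq\alpha g$, a starting edge $e_1\in C$, and an orientation of $C$.
\begin{itemize}
\item Walk around $C$ and build segments greedily. After choosing $e_i$, extend a path $P_i$ along $C$ edge by edge while its weight stays strictly below $g/2$.
\item The first edge whose addition would bring the weight to at least $g/2$ becomes $e_{i+1}$.
\item Stop when the walk returns to $e_1$.
\end{itemize}
Each block $P_i\cup\{e_{i+1}\}$ has weight at least $g/2$, except possibly the last one. The blocks partition $C$, so
\[
  (j-1)\cdot \tfrac{g}{2}\leq w(C)\leq \alpha g ,
\]
where $j$ is the number of chosen edges. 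This gives $j\leq 2\alpha+1$. A slightly more careful count, charging the last partial block together with the first, should give $j\leq\lceil 2\alpha\rceil$.

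We record $C$ as the sequence $(\vec e_1,\dots,\vec e_j)$ of edges oriented along the traversal. The head of $\vec e_i$ and the tail of $\vec e_{i+1}$ are the endpoints of $P_i$. Since $w(P_i)<g/2$, the key fact says $P_i$ is recovered from these endpoints. Hence $C=\bigcup_i(\{e_i\}\cup P_i)$ is recovered from the sequence, and the map is injective. Empty segments, when $e_i$ and $e_{i+1}$ share a vertex, are handled trivially. Positivity of the weights is used to guarantee that a nonempty even-degree edge set has positive weight, so the cycle found in the key fact genuinely has weight less than $g$.

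\emph{Counting and main obstacle.} Injectivity gives at most $\sum_{j\leq k}(2m)^j$ cycles. The main obstacle is matching the exact bound $(2m)^{\lceil 2\alpha\rceil}$ rather than $O\bigl((2m)^{\lceil 2\alpha\rceil}\bigr)$, together with the boundary effect of the last block. I would handle both by padding each sequence to length exactly $k$. The idea is to append copies of the final oriented edge, which is distinguishable because the greedy rule never chooses the same edge twice consecutively. A second option is to choose $e_1$ canonically and exploit the slack in the first block. Both give the exact count; if neither works cleanly, I would settle for the bound $2(2m)^{k}$, which suffices for every use of Theorem~\ref{theorem:counting-cycle} in this paper.
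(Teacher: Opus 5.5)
\textbf{Comparison with the paper.} Your proposal is correct. The paper gives no proof of this theorem: it quotes it from Subramanian, whose argument it describes as a breadth-first ``robot exploration,'' and it never invokes the theorem in its own proofs. At graphic nodes it deliberately avoids it. Your route is a direct encoding argument, and it is essentially the paper's graphic-node machinery iterated around the cycle.
\begin{itemize}
\item Your key fact is the plain weighted-graph analogue of \Cref{lemma:graphic-short-path}.
\item Recovering $P_i$ from the head of $\vec e_i$ and the tail of $\vec e_{i+1}$ is the same trick \Cref{lemma:graphic-certificates} uses to encode a certificate path by its first and last oriented edges.
\end{itemize}
What your route buys is a short, deterministic, self-contained proof of the exact bound for arbitrary positive weights. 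What the paper's random-deletion route buys is an argument that extends to regular matroids, where there is no notion of a path along which to cut a circuit into short pieces. Its cost is a bound of only $m^{O(\alpha)}$ with unspecified constants, and only for cardinality.

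\textbf{The two loose ends close cleanly.}

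\emph{The count.} The blocks $P_i\cup\{e_{i+1}\}$ for $1\le i\le j-1$ are pairwise disjoint, avoid $e_1$, and each has weight at least $g/2$. Since $w(e_1)>0$, this gives $(j-1)g/2<w(C)\le\alpha g$. Hence $j-1<2\alpha$, i.e.\ $j\le\lceil 2\alpha\rceil$, with no need for a charging argument. This strict inequality is where positivity of the weights is really used; the key fact itself only needs nonnegativity.

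\emph{The padding.} Your padding scheme is sound. The chosen edges $e_1,\dots,e_j$ are distinct edges of $C$, so consecutive entries of the unpadded sequence differ. Therefore the first position at which an entry repeats its predecessor reveals $j$. This makes cycles inject into sequences of length exactly $\lceil 2\alpha\rceil$ over the $2m$ oriented edges, giving exactly $(2m)^{\lceil 2\alpha\rceil}$. The fallback bound $2(2m)^{k}$ is never needed.
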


For every $S\subseteq V$, let $\delta_G(S):=\{\{u,v\}\in E: u\in S, v\in V\setminus S \}$ denote the cut induced by $S$. A \emph{bond} of $G$ is an inclusion-wise minimal nonempty cut.

\begin{definition}[Cographic Matroid]
    Let $G=(V,E)$ be an undirected graph. The \emph{cographic matroid} of $G$, denoted by $M^*(G)$, is the dual of its graphic matroid
    \[
        M^*(G):=(M(G))^*.
    \]
    A matroid is \emph{cographic} if it is isomorphic to $M^*(G)$ for some graph $G$.
\end{definition}

For every $F\subseteq E$, $F$ is a circuit of $M^*(G)$ if and only if $F$ is the edge set of a bond in $G$. The following theorem states a bound on the number of near-minimum cuts in an undirected graph.

\begin{theorem}[Karger~\cite{karger1993global,karger1995random}]\label{theorem:counting-cut}
    Let $G=(V,E)$ be a connected undirected graph with $n:=|V|\geq 2$ vertices and nonnegative edge weights $w:E\rightarrow \R_{\geq 0}$. Suppose that its minimum cut weight
    \[
        \lambda:=\min \{w(\delta_G(S)): \emptyset \subsetneq S\subsetneq V\}
    \]
    is positive. Then, for every $\alpha\geq 1$, the number of cuts $\delta_G(S)$ with weight at most $w(\delta_G(S))\leq \alpha\lambda$ is at most $n^{2\alpha}$.
\end{theorem}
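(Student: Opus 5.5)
The plan is Karger's random contraction argument, which gives the counting bound probabilistically: for every fixed cut of weight at most $\alpha\lambda$ we show that one run of a randomized procedure outputs exactly that cut with probability at least $n^{-2\alpha}$. Each run outputs a single cut, so for distinct cuts these are disjoint events. Their probabilities therefore sum to at most $1$, and the number of such cuts is at most $n^{2\alpha}$.

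\textbf{The procedure.} Repeatedly pick an edge of the current multigraph with probability proportional to $w$, contract it, and delete the resulting self-loops. Stop when $r:=\lceil 2\alpha\rceil$ super-vertices remain, or immediately if $n\le r$. Then output a uniformly random nontrivial bipartition of the super-vertices, and take the corresponding cut of $G$. Zero-weight edges are never chosen. Because $\lambda>0$, every nontrivial cut has positive weight, so the positive-weight edges connect the graph and the process is well defined until $r$ super-vertices remain. Contraction never decreases cut values, since every cut of the contracted graph is a cut of $G$. Hence the current graph always has minimum cut at least $\lambda$.

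\textbf{Survival bound.} Fix a cut $\delta_G(S)$ with $w(\delta_G(S))\le\alpha\lambda$. Suppose $k$ super-vertices remain and no edge of the cut has been contracted yet. Every super-vertex has weighted degree at least $\lambda$, so the total weight is at least $k\lambda/2$. The next contraction therefore hits the cut with probability at most $\alpha\lambda/(k\lambda/2)=2\alpha/k$. The cut survives down to $r$ super-vertices with probability at least
\[
\prod_{k=r+1}^{n}\Bigl(1-\frac{2\alpha}{k}\Bigr)=\frac{\Gamma(n+1-2\alpha)\,\Gamma(r+1)}{\Gamma(n+1)\,\Gamma(r+1-2\alpha)}.
\]
If the cut survives, it is one of the $2^{r-1}-1$ nontrivial cuts of the contracted graph. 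The final uniform choice then picks it with probability at least $2^{-(r-1)}$.

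\textbf{Main obstacle: the arithmetic.} We must show
\[
\frac{\Gamma(n+1-2\alpha)\,\Gamma(r+1)}{\Gamma(n+1)\,\Gamma(r+1-2\alpha)}\cdot 2^{-(r-1)}\ \ge\ n^{-2\alpha}
\]
for all real $\alpha\ge 1$ and $n\ge 2$, and also handle the trivial case $n\le r$. There the bound holds because $G$ has at most $2^{n-1}-1\le n^{2\alpha}$ cuts.

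For integer $2\alpha$ the survival probability equals $\binom{r}{2\alpha}\big/\binom{n}{2\alpha}=1\big/\binom{n}{2\alpha}$. Since $\binom{n}{2\alpha}\le n^{2\alpha}/(2\alpha)!$, the bound holds once $(2\alpha)!\ge 2^{2\alpha-1}$, which is true for $2\alpha\ge 2$.

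For non-integer $2\alpha$ I would first try Gautschi/Wendel-type inequalities, writing $\Gamma(n+1-2\alpha)/\Gamma(n+1)$ as a product of ratios of the form $\Gamma(x+1-s)/\Gamma(x+1)$ with $s\in[0,1]$. Each such ratio is at least $(x+1)^{-s}$, which lower-bounds the first factor by roughly $n^{-2\alpha}$ times a constant. That constant should then be absorbed by the gain $\Gamma(r+1)/\Gamma(r+1-2\alpha)\ge 2^{r-1}$, which holds because $r\ge 2\alpha$. If the constants turn out too tight, the fallback is a direct induction on $n$ at fixed $\alpha$. The base case is $n\le r$, and the step uses $(1-2\alpha/n)\,(n-1)^{-2\alpha}\ge n^{-2\alpha}$ in a suitably modified form.
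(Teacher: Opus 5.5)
Your proposal is correct and is exactly Karger's random-contraction argument from the cited works; the paper does not prove this statement and only cites Karger for it. For the non-integer case no constant actually needs absorbing: write $2\alpha=r-1+f$ with $f\in(0,1)$. The $r-1$ integer steps contribute exactly $\prod_{j=0}^{r-2}(n-j)^{-1}\ge n^{-(r-1)}$, one application of Gautschi with $x=n-r+1$ gives $\Gamma(x+1-f)/\Gamma(x+1)\ge (n-r+2)^{-f}\ge n^{-f}$, and $\Gamma(r+1-2\alpha)=\Gamma(2-f)\le 1$ leaves the factor $r!/2^{r-1}\ge 1$. Your fallback inequality $(1-2\alpha/n)(n-1)^{-2\alpha}\ge n^{-2\alpha}$ is false for $\alpha\ge 1$ by Bernoulli's inequality, so that route would need a different induction hypothesis.
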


\subsection{Binary and Regular Matroids}

\begin{definition}[Binary and Regular Matroids]
    Let $M=(E,\mathcal{I})$ be a matroid. A matrix $A$ over a field $\mathbb{F}$, whose columns are indexed by $E$, \emph{represents} $M$ if $S\subseteq E$ is independent in $M$ if and only if the columns indexed by $S$ are linearly independent over $\mathbb{F}$. The matroid $M$ is \emph{binary} if it is representable over $\mathbb{F}_2$, and \emph{regular} if it is representable over every field.
\end{definition}

Regular matroids are binary, and graphic and cographic matroids are regular. Moreover, the classes of graphic, cographic, and regular matroids are closed under deletion and contraction.

For every two sets $A, B \subseteq E$, their symmetric difference is denoted by
\[
    A \SymDif B:=(A\setminus B)\cup (B\setminus A).
\]
A fundamental property of binary matroids is that the symmetric difference of two different circuits is a disjoint union of circuits. We note that this property holds only for binary matroids and not for all matroids---in fact, the uniform matroid $U_{2,4}$ does not have this property. A consequence of this property of binary matroids is 
that every binary matroid of nullity at most $d$ contains at most $2^d-1$ circuits, all of which can be enumerated efficiently.

\begin{restatable}{lemma}{binarymatroidnullity}\label{lemma:binary-matroid-nullity}
    Let $M=(E,\mathcal{I})$ be a binary matroid with $m=|E|$ elements and $\nu_M(E)\leq d$. Then, $M$ contains at most $2^d-1$ circuits. Moreover, given independence-oracle access to $M$, all circuits of $M$ can be enumerated using $O(2^d m)$ independence queries and $2^d \poly(m)$ time.
\end{restatable}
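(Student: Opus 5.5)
Fix a basis $B$ of $M$; then $|E\setminus B| = \nu_M(E) =: d' \le d$. For each $e\in E\setminus B$ let $C_e$ be the fundamental circuit of $e$ with respect to $B$, i.e., the unique circuit contained in $B\cup\{e\}$. Every nonempty $S\subseteq E\setminus B$ determines the set $D_S:=\SymDif_{e\in S} C_e$. The plan is to show that every circuit arises as some $D_S$, which gives the bound $2^{d'}-1\le 2^d-1$.

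\textbf{The counting bound.} Let $C$ be any circuit and set $S:=C\setminus B$; this is nonempty because $B$ is independent. I claim $C=D_S$. Since $M$ is binary, repeatedly applying the symmetric difference property shows that $C\SymDif D_S$ is a disjoint union of circuits (or empty). Each $e\in S$ lies in $C$ and in exactly one $C_{e'}$, namely $C_e$. Hence $C\SymDif D_S\subseteq B$. Being a subset of an independent set, it contains no circuit, so it is empty.

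To make this rigorous, the property I will use is the standard one for binary matroids: a set is a disjoint union of circuits (a ``cycle'') iff its incidence vector lies in the $\mathbb{F}_2$-kernel of the representing matrix. This property is closed under symmetric difference. I expect stating this cleanly is the only mild obstacle, since the paper phrases the property only for pairs of circuits. I would justify it directly from the $\mathbb{F}_2$ representation: circuits are exactly the minimal nonempty supports of kernel vectors, and every nonzero kernel vector's support decomposes into circuits.

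\textbf{The algorithm.}
\begin{enumerate}
\item Compute a basis $B$ greedily with $m$ independence queries.
\item For each $e\notin B$, find $C_e$ as follows: $f\in B$ belongs to $C_e$ iff $B\cup\{e\}\setminus\{f\}$ is independent. This uses $O(dm)$ queries in total.
\item For each of the at most $2^d$ subsets $S$, form $D_S$ in $\poly(m)$ time. Test whether $D_S$ is a circuit using $|D_S|+1$ queries: check that $D_S$ is dependent and that $D_S\setminus\{x\}$ is independent for every $x\in D_S$.
\end{enumerate}
The total is $O(2^d m)$ queries and $2^d\poly(m)$ time. By the first part, every circuit appears among the tested sets $D_S$.
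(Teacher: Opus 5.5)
Your proposal is correct and follows essentially the same route as the paper: fix a basis $B$, show every circuit equals the symmetric difference of the fundamental circuits indexed by $C\setminus B$, and enumerate over subsets of $E\setminus B$ with the same query accounting. The differences are cosmetic. The paper gets the counting bound from injectivity of $C\mapsto C\setminus B$, which needs only the two-circuit symmetric-difference property. Your explicit dependence test on $D_S$ is redundant, since $D_S$ is a nonempty cycle (it contains $S$).
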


We defer the proof of \Cref{lemma:binary-matroid-nullity} to \Cref{appendix:binary-nullity}.

\subsection{Matroid Sums and Seymour's Decomposition Theorem}

Seymour's decomposition theorem constructs every regular matroid from graphic matroids, cographic matroids, and $R_{10}$. We start with the following sum operations.

\begin{definition}[Sum of Matroids~\cite{seymour1980decomposition}]\label{definition:sum-matroid}
    Let $M_1=(E_1, \mathcal{I}_1)$ and $M_2=(E_2, \mathcal{I}_2)$ be two binary matroids and let $S:=E_1\cap E_2$. The \emph{sum} of $M_1$ and $M_2$, denoted by $M_1 \SymDif M_2$, is the binary matroid on $E_1 \SymDif E_2$ whose circuits are the inclusion-wise minimal nonempty sets of the form $C_1 \SymDif C_2$, where $C_i$ is a (possibly empty) disjoint union of circuits of $M_i$ for every $i\in [2]$ and $C_1\cap S=C_2\cap S$.
\end{definition}

The following lemma shows that every set of the form $C_1 \Delta C_2$ as described in \Cref{definition:sum-matroid} is a disjoint union of circuits.

\begin{lemma}\label{lemma:sum-circuit}
    Let $M_1=(E_1, \mathcal{I}_1)$ and $M_2=(E_2, \mathcal{I}_2)$ be two binary matroids and let $S:=E_1\cap E_2$. Let $M:=M_1 \SymDif M_2$ be the sum of $M_1$ and $M_2$. Let $C:= C_1 \SymDif C_2$, where $C_i$ is a (possibly empty) disjoint union of circuits of $M_i$ for every $i\in [2]$. If $C_1\cap S=C_2\cap S$, then $C$ is a disjoint union of circuits of $M$.
\end{lemma}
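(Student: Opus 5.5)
We must show that if $C_1\cap S=C_2\cap S$, then $C=C_1\SymDif C_2$ is a disjoint union of circuits of $M=M_1\SymDif M_2$. The plan is to induct on $|C|$ (equivalently on $|C_1|+|C_2|$) and repeatedly peel off circuits of $M$. The key tool is the definition itself: every nonempty set of the form $D_1\SymDif D_2$, with $D_i$ a disjoint union of $M_i$-circuits and $D_1\cap S=D_2\cap S$, contains a circuit of $M$. Such a set is a nonempty member of the family whose inclusion-minimal elements are, by definition, the circuits of $M$, so it contains a minimal member.

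\textbf{Step 1: reduce to the cycle space.} Since $M_1,M_2$ are binary, the family of disjoint unions of circuits of $M_i$ is exactly the cycle space $\mathcal{C}(M_i)$: sets $D\subseteq E_i$ whose indicator vector lies in the kernel of an $\mathbb{F}_2$-representation of $M_i$. In particular this family is closed under symmetric difference, a standard fact for binary matroids already used in the preliminaries. Define
\[
\mathcal{F}:=\{D_1\SymDif D_2 : D_i\in\mathcal{C}(M_i),\ D_1\cap S=D_2\cap S\}.
\]
This family is closed under symmetric difference. If $D_1\cap S=D_2\cap S$ and $D_1'\cap S=D_2'\cap S$, then $(D_1\SymDif D_1')\cap S=(D_2\SymDif D_2')\cap S$. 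Moreover, because the agreement on $S$ makes the $S$-parts cancel, we have $(D_1\SymDif D_2)\SymDif(D_1'\SymDif D_2')=(D_1\SymDif D_1')\SymDif(D_2\SymDif D_2')$. Also, every element of $\mathcal{F}$ avoids $S$ and lies in $E_1\SymDif E_2$.

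\textbf{Step 2: induction.} Take $C\in\mathcal{F}$ and suppose $C\neq\emptyset$ (the empty set is the empty disjoint union, so there is nothing to prove there). Choose an inclusion-minimal nonempty $C'\in\mathcal{F}$ with $C'\subseteq C$. By definition $C'$ is a circuit of $M$. Since $C'\subseteq C$, we have $C\SymDif C'=C\setminus C'$, and this set lies in $\mathcal{F}$ by closure under symmetric difference. It is strictly smaller than $C$, so by induction it is a disjoint union of circuits of $M$. Adding $C'$, which is disjoint from $C\setminus C'$, shows that $C$ is a disjoint union of circuits of $M$.

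\textbf{Main obstacle.} The only delicate point is the closure claim in Step 1, specifically that the $C_i$ may be taken as arbitrary cycle-space elements and that the symmetric differences combine correctly. This holds because $S$-agreement makes the $S$-parts cancel in the combined symmetric difference. No matroid axioms for $M$ beyond the definition of its circuit family are needed.
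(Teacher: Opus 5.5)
Your proof is correct and follows the paper's argument. Both pick a minimal nonempty subset of $C$ of the form $D_1\SymDif D_2$ (the paper takes minimum cardinality, you take inclusion-minimality; either makes it a circuit of $M$ by definition), then use closure of binary cycle spaces under $\SymDif$ to recurse on $C\setminus C'$. One small remark: the identity $(D_1\SymDif D_2)\SymDif(D_1'\SymDif D_2')=(D_1\SymDif D_1')\SymDif(D_2\SymDif D_2')$ holds for arbitrary sets, so the $S$-agreement is needed only to make the new pair agree on $S$ again, which you do check and the paper leaves implicit.
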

\begin{proof}
    If $C=\emptyset$, the lemma immediately holds. Otherwise, we choose a minimum-cardinality nonempty subset $D\subseteq C$ admitting a decomposition $D=D_1 \SymDif D_2$, where each $D_i$ is a (possibly empty) disjoint union of circuits of $M_i$ for $i\in [2]$. Such a choice exists because taking $D_1=C_1$ and $D_2=C_2$ gives $C$. By minimality, no proper nonempty subset of $D$ admits such a decomposition. Thus, $D$ is a circuit of $M$ by \Cref{definition:sum-matroid}. Since $D\subseteq C$, we have
    \[
        C\setminus D=C\SymDif D = (C_1\SymDif D_1) \SymDif (C_2\SymDif D_2).
    \]
    We recall that the binary property implies that each $C_i \SymDif D_i$ is again a disjoint union of circuits of $M_i$ for $i\in [2]$. We can therefore repeat the argument on $C\setminus D$. Each repetition removes a nonempty circuit, so the process ultimately terminates and partitions $C$ into disjoint circuits of $M$.
 
\end{proof}

The $1$-, $2$-, and $3$-sums are the following special cases.

\begin{definition}[$1$-, $2$-, and $3$-sums~\cite{gurjar2019number}]\label{definition:matroid-sum}
    Let $M_1=(E_1, \mathcal{I}_1)$ and $M_2=(E_2, \mathcal{I}_2)$ be two binary matroids and let $S:=E_1\cap E_2$. The sum $M_1 \SymDif M_2$ is called
    \begin{enumerate}
        \item a $1$-sum if $|S|=0$,
        \item a $2$-sum if $|S|=1$, $S$ is not a circuit of $M_1$, $M_2$, or their duals, and $|E_1|, |E_2|\geq 3$, and
        \item a $3$-sum if $|S|=3$, $S$ is a circuit of $M_1$ and $M_2$, $S$ does not contain a cocircuit of $M_1$ or $M_2$, and $|E_1|, |E_2|\geq 7$.
    \end{enumerate}
\end{definition}

The following lemma represents circuits of $M_1 \Delta M_2$ in terms of circuits of $M_1$ and $M_2$.

\begin{lemma}[\cite{seymour1980decomposition}]\label{lemma:circuit-sum}
    Let $M_1=(E_1, \mathcal{I}_1)$ and $M_2=(E_2, \mathcal{I}_2)$ be two binary matroids and let $S:=E_1\cap E_2$. Let $\mathcal{C}_1$ and $\mathcal{C}_2$ be the circuits of $M_1$ and $M_2$, respectively. Let $M:=M_1\SymDif M_2$ be a $1$-sum, $2$-sum, or $3$-sum of $M_1$ and $M_2$. Then, for every circuit $C$ of $M$, exactly one of the following holds:
    \begin{enumerate}
        \item $C \in \mathcal{C}_1$ with $S\cap C=\emptyset$;
        \item $C \in \mathcal{C}_2$ with $S\cap C=\emptyset$;
        \item there exist unique $e\in S$, $C_1 \in \mathcal{C}_1$, and $C_2\in \mathcal{C}_2$ such that
        \[
            S\cap C_1 = S\cap C_2 = \{e\} \quad \text{and} \quad C=C_1 \SymDif C_2.
        \]
    \end{enumerate}
\end{lemma}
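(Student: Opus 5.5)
We would start from \Cref{definition:sum-matroid}. Let $C$ be a circuit of $M=M_1\SymDif M_2$, and write $C=D_1\SymDif D_2$, where each $D_i$ is a disjoint union of circuits of $M_i$ and $D_1\cap S=D_2\cap S=:T$. Since $E_1\cap E_2=S$, the sets $D_1\setminus S\subseteq E_1\setminus S$ and $D_2\setminus S\subseteq E_2\setminus S$ are disjoint. Hence $C=(D_1\setminus S)\cup(D_2\setminus S)$. The plan is to normalize $T$ to have size at most one, and then use the minimality of $C$ to strip away all superfluous circuits.

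\emph{Normalizing $T$.} For a $1$-sum, $T=\emptyset$. For a $2$-sum, $|T|\le 1$. For a $3$-sum with $|T|\ge 2$, we use that $S$ is a circuit of both $M_1$ and $M_2$. Since both matroids are binary, $D_i':=D_i\SymDif S$ is again a disjoint union of circuits of $M_i$. Moreover $D_1'\cap S=D_2'\cap S=S\setminus T$, which has size at most one, and $D_1'\SymDif D_2'=D_1\SymDif D_2=C$ because the two copies of $S$ cancel. So we may assume $|T|\le 1$.

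\emph{Case $T=\emptyset$.} Every circuit appearing in $D_1$ or in $D_2$ avoids $S$. Any such circuit $K$ of $M_1$ is itself of the admissible form $K\SymDif\emptyset$ (and similarly for $M_2$), and it is contained in $C$. Inclusion-minimality of $C$ then forces $C=K$ for a single such circuit. This gives outcome 1 or outcome 2.

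\emph{Case $T=\{e\}$.} Let $C_1$ be the circuit in the decomposition of $D_1$ that contains $e$, and define $C_2$ from $D_2$ in the same way. The circuits of a disjoint union are pairwise disjoint and $D_1\cap S=\{e\}$, so $C_1\cap S=C_2\cap S=\{e\}$. Then $C_1\SymDif C_2\subseteq C$ is a nonempty admissible set, and minimality gives $C=C_1\SymDif C_2$.

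\emph{Exclusivity and uniqueness.} In outcome 3, $C_1\neq\{e\}$: in a $2$-sum, $S$ is not a circuit of $M_1$; in a $3$-sum, $e$ lies in the circuit $S$ of size three, so it is not a loop. Thus $C_1$ meets $E_1\setminus S$, and symmetrically $C_2$ meets $E_2\setminus S$. Hence $C$ meets both sides, which rules out outcomes 1 and 2. Given $e$, the circuits are forced: $C_1=(C\cap E_1)\cup\{e\}$ and $C_2=(C\cap E_2)\cup\{e\}$.

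For uniqueness of $e$, suppose $f\ne e$ also works. Then $(C\cap E_1)\cup\{e\}$ and $(C\cap E_1)\cup\{f\}$ are both circuits of $M_1$. Their symmetric difference $\{e,f\}$ is a nonempty disjoint union of circuits, so it is dependent in $M_1$. This is impossible: a $2$-sum has $|S|=1$, and in a $3$-sum $\{e,f\}$ is a proper subset of the circuit $S$, hence independent.

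The step we expect to be the main obstacle is the normalization for $3$-sums. We must check that $D_i\SymDif S$ is still a disjoint union of circuits, which holds by the binary symmetric-difference property, and keep careful track of intersections with $S$. After that, the minimality bookkeeping is routine.
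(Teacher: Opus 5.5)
Your proof is correct. Note that the paper gives no proof of this lemma; it cites Seymour, so there is no internal argument to match.

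What you give is a self-contained derivation directly from \Cref{definition:sum-matroid}. It uses the same two tools the paper uses to prove \Cref{lemma:sum-circuit}: closure of cycles (disjoint unions of circuits) under symmetric difference in binary matroids, and inclusion-minimality of $C$ among nonempty admissible sets. The one extra ingredient is your normalization $D_i\mapsto D_i\SymDif S$ for $3$-sums. This is exactly where the hypothesis that $S$ is a common circuit of $M_1$ and $M_2$ enters.

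A side benefit of your route is that it shows which parts of \Cref{definition:matroid-sum} are actually used. You use only three facts:
\begin{enumerate}
\item $S$ is a circuit of both $M_1$ and $M_2$ (for $3$-sums);
\item no element of $S$ is a loop;
\item proper subsets of $S$ are independent.
\end{enumerate}
Your argument never invokes the cocircuit conditions or the size bounds $|E_i|\geq 3$ or $|E_i|\geq 7$.

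Three small presentational fixes:
\begin{itemize}
\item In the case $T=\{e\}$, you assert that $C_1\SymDif C_2$ is nonempty before establishing $C_1\neq\{e\}$. That fact only appears in your exclusivity paragraph. Move the observation that no element of $S$ is a loop (by the $2$-sum condition, or because $S$ is a $3$-element circuit) ahead of the case analysis.
\item In the case $T=\emptyset$, say explicitly that some circuit $K$ occurs in the decompositions of $D_1$ or $D_2$, because $C\neq\emptyset$.
\item For ``exactly one'', also record that outcomes 1 and 2 exclude each other. This holds because $E_1\setminus S$ and $E_2\setminus S$ are disjoint and $C\neq\emptyset$.
\end{itemize}
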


We need Seymour's decomposition theorem for regular matroids. 
Of particular relevance to this decomposition result is the binary matroid $R_{10}$ consisting of $10$ elements---see Figure \ref{fig:r10} for a binary representation of $R_{10}$. 
\begin{figure}[hbt]
    \centering
    \setlength{\arraycolsep}{3pt}
    \renewcommand{\arraystretch}{1.05}
    \[
    A_{R_{10}}=
    \begin{pmatrix}
        1&1&0&0&1&1&0&0&0&0\\
        1&1&1&0&0&0&1&0&0&0\\
        0&1&1&1&0&0&0&1&0&0\\
        0&0&1&1&1&0&0&0&1&0\\
        1&0&0&1&1&0&0&0&0&1
    \end{pmatrix}.
    \]
    \caption{A binary representation of the matroid \(R_{10}\).}
    \label{fig:r10}
\end{figure}

We state Seymour's decomposition theorem now. 
\begin{theorem}[Seymour's Decomposition Theorem~\cite{seymour1980decomposition}]\label{thm:seymour-decomposition}
    Every regular matroid can be obtained by means of $1$-sums, $2$-sums, and $3$-sums, starting from matroids which are graphic, cographic, or $R_{10}$.
\end{theorem}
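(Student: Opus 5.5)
Since this is Seymour's theorem, we would follow his original strategy: induct on $|E|$ and use connectivity to split $M$ into smaller regular pieces. Every minor of a regular matroid is regular, so the pieces can be fed back into the induction. Write $\lambda_M(A):=r_M(A)+r_M(E\setminus A)-r_M(E)+1$ for the connectivity of a partition $(A,E\setminus A)$.

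\textbf{Low connectivity.} If $M$ has a $1$-separation, i.e.\ a partition with $\lambda_M(A)=1$ and $A,E\setminus A\neq\emptyset$, then $M$ is the direct sum, that is the $1$-sum, of $M|A$ and $M|(E\setminus A)$. If $M$ is connected but has an exact $2$-separation with $|A|,|E\setminus A|\geq 2$, we add one new element $e$. We place $e$ in the span of $A$ and in the span of $E\setminus A$ in a common binary (indeed totally unimodular) representation. This produces $M_1$ on $A\cup\{e\}$ and $M_2$ on $(E\setminus A)\cup\{e\}$ with $M=M_1\SymDif M_2$, in the sense of \Cref{definition:matroid-sum}, item 2. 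We would then check the non-degeneracy conditions of \Cref{definition:matroid-sum}, item 2 from connectivity. Both $M_1$ and $M_2$ are minors of $M$, so they are regular and smaller, and we recurse. Hence we may assume $M$ is $3$-connected.

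\textbf{The $3$-connected case.} This case rests on two excluded-minor facts. The first is Seymour's splitter result: $R_{10}$ is a splitter for regular matroids. Thus any $3$-connected regular matroid with an $R_{10}$ minor equals $R_{10}$. We would prove this via the splitter theorem: every $3$-connected single-element extension or coextension of $R_{10}$ that is still regular must be ruled out, which is a finite check. The second fact is that a $3$-connected regular matroid that is neither graphic nor cographic has an $R_{10}$ or an $R_{12}$ minor. This comes from Tutte's excluded-minor characterisations: graphic means no $M^*(K_5)$ or $M^*(K_{3,3})$ minor, and cographic means no $M(K_5)$ or $M(K_{3,3})$ minor. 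One then analyses the minimal $3$-connected regular matroids containing one minor from each list, again using the splitter theorem to grow minors one element at a time.

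\textbf{The $R_{12}$ case and the main obstacle.} The crux is showing that if $M$ has an $R_{12}$ minor, then $M$ has an exact $3$-separation $(A,B)$ with $|A|,|B|\geq 4$. Such a separation is inherited from the canonical $3$-separation of $R_{12}$ and extended to $M$ by a Tutte-linking and blocking-sequence argument. Given such a separation, we would build a common triangle $S$ of three new elements, spanned by both sides, and write $M=M_1\SymDif M_2$ as a $3$-sum. Here $M_1,M_2$ are minors of $M$: the elements of $S$ are realised by contracting and deleting suitable elements on the opposite side. We would verify that $S$ is a circuit containing no cocircuit and that $|E_i|\geq 7$, and then recurse.

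The hardest step is lifting the $3$-separation from $R_{12}$ to $M$ while keeping both sides large. I would attempt it via Seymour's induced-separation lemma (Tutte's linking theorem applied to minors), arguing by contradiction along a shortest blocking sequence.
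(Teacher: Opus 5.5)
The paper does not prove this theorem; it quotes it from Seymour. So the only benchmark is Seymour's own argument, and your outline reproduces its architecture. The reductions via $1$- and $2$-sums, the splitter argument for $R_{10}$, and the passage from an exact $3$-separation with both sides of size at least $4$ in a $3$-connected binary matroid to a $3$-sum of two minors are fine. The gaps are in the two load-bearing steps, where the mechanisms you name cannot deliver the conclusion.

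First, take the claim that a $3$-connected regular matroid that is neither graphic nor cographic has an $R_{10}$ or $R_{12}$ minor. Growing one element at a time from $M(K_5)$ or $M(K_{3,3})$ via the splitter theorem only shows that a minor-minimal such matroid $M$ is a single-element extension or coextension of the last graphic member of the chain. That member is a $3$-connected graphic matroid of unbounded size; symmetrically, $M$ is also one element away from a $3$-connected cographic matroid. Nothing here bounds $|E(M)|$, so there is no finite case analysis to perform. What is missing is a structural theorem about non-graphic regular single-element extensions and coextensions of $3$-connected graphic matroids of nonplanar graphs, and this takes up a substantial part of Seymour's paper. ``Analysing the minimal ones'' restates the claim rather than proving it.

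Second, the $R_{12}$ step. Tutte's linking theorem, induced separations and blocking sequences are valid for all matroids, and the conclusion you want from them is false without regularity. $PG(5,2)$ is a $3$-connected binary matroid containing $R_{12}$ as a restriction. Yet any partition $(X,Y)$ of its $63$ points with $r(X)+r(Y)\leq 8$ has a side of rank at most $2$, hence of size at most $3$. So it has no $3$-separation with both sides of size at least $4$.

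Regularity must therefore be used, and your sketch never says where. In Seymour's argument, the general induced-separation theory only reduces the problem to finitely many small regular matroids having an $R_{12}$ minor. One must then check directly that each of these has a $3$-separation inducing the canonical $6$--$6$ separation of $R_{12}$. That verification is where regularity enters, and it has to be part of the proof. Once it is done, the lifted separation has both sides of size at least $6$, more than the $4$ you need.
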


Gurjar and Vishnoi~\cite{gurjar2019number} showed the following criterion for the sum operation on regular matroids to be associative.

\begin{lemma}[\cite{gurjar2019number}]\label{lemma:associative}
    Let $M_1$, $M_3$, and $M_4$ be regular matroids with ground set $E_1$, $E_3$, and $E_4$, respectively. Let $M_2:=M_3 \SymDif M_4$ be a $1$-, $2$-, or $3$-sum of $M_3$ and $M_4$ with $S_1:=E_3\cap E_4$, and let $M:=M_1 \SymDif M_2$ be a $1$-, $2$-, or $3$-sum of $M_1$ and $M_2$ with $S_2:=E_1\cap E_2$. Suppose that $S_2\subseteq E_3$, then
    \[
        M=M_1\SymDif (M_3 \SymDif M_4) = (M_1 \SymDif M_3) \SymDif M_4,
    \]
    where $(M_1 \SymDif M_3)$ is defined via the common set $S_2$ and $(M_1 \SymDif M_3) \SymDif M_4$ is defined via the common set $S_1$.
\end{lemma}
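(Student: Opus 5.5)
The plan is to compare circuits directly, using the description of circuits of a sum in \Cref{definition:sum-matroid} together with \Cref{lemma:circuit-sum}. Write $E_2=E_3\SymDif E_4$. Since $S_2\subseteq E_2$ and $S_2\subseteq E_3$, the set $S_2$ is disjoint from $E_4$ and from $S_1$. For the same reason the hypothesis $S_2\subseteq E_3$ gives $S_2\cap S_1=\emptyset$, so $S_2\subseteq E_3\setminus E_4$. It also gives $E_1\cap E_4=\emptyset$, assuming elements of different summands are identified only through the designated common sets; I would state this convention explicitly. Both sides then live on the ground set $E_1\SymDif E_3\SymDif E_4$.

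Both matroids are binary, so it suffices to show that they have the same family of \emph{cycles}, meaning disjoint unions of circuits. Cycles determine the circuits as their minimal nonempty members.

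\textbf{Step 1: cycles of a sum.} By \Cref{lemma:sum-circuit}, every set $D_1\SymDif D_2$ with $D_i$ a cycle of $M_i$ and $D_1\cap S=D_2\cap S$ is a cycle of the sum. Conversely, every cycle $C$ of the sum is a disjoint union of circuits. By \Cref{definition:sum-matroid} each such circuit has that form, and the form is closed under symmetric difference because cycles of a binary matroid are closed under $\SymDif$. So the cycle space of $M_1\SymDif M_2$ is exactly
\[
\{D_1\SymDif D_2 : D_i \text{ a cycle of } M_i,\ D_1\cap S=D_2\cap S\}.
\]

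\textbf{Step 2: expand the left side.} A cycle of $M$ is $D_1\SymDif D_2$ with $D_2=D_3\SymDif D_4$, where $D_3\cap S_1=D_4\cap S_1$, $D_1\cap S_2=D_2\cap S_2$, and each $D_j$ is a cycle of $M_j$. Because $S_2\cap E_4=\emptyset$, we have $D_2\cap S_2=D_3\cap S_2$. Hence $D_1\SymDif D_3$ is a cycle of $M_1\SymDif M_3$. Its intersection with $S_1$ is $D_3\cap S_1=D_4\cap S_1$, since $S_1\cap E_1=\emptyset$. Therefore $(D_1\SymDif D_3)\SymDif D_4$ is a cycle of the right side.

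\textbf{Step 3: the reverse direction.} Take a cycle $D'\SymDif D_4$ of the right side, where $D'=D_1\SymDif D_3$. The conditions are the same two equalities, so the computation of Step 2 reverses.

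I expect the main obstacle to be checking that the intermediate sums are legitimate $1$-, $2$-, or $3$-sums in the sense of \Cref{definition:matroid-sum}. For example, $S_1$ must not contain a cocircuit of $M_1\SymDif M_3$, and $|E_1\SymDif E_3|\ge 7$. The cycle-space argument above does not need these conditions, because \Cref{definition:sum-matroid} defines $\SymDif$ for arbitrary binary matroids. So the equality holds for the general sum, and the sum-type conditions only matter if one also wants to label the new sums as $k$-sums.
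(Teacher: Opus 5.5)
Your proof is correct. The paper cites this lemma from Gurjar and Vishnoi without giving a proof, but your cycle-space comparison is exactly the reasoning the paper sketches after Theorem~\ref{thm:unordered-decomposition-tree}: cycles of a sum are the matched symmetric differences of cycles of the summands, and a binary matroid is determined by its cycles.

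The disjointness $E_1\cap E_4=\emptyset$ that you adopt as a convention is actually forced by the statement. Defining $M_1\SymDif M_3$ via the common set $S_2$ means $E_1\cap E_3=S_2$. Since $S_1\subseteq E_3$ and $S_1\cap S_2=\emptyset$, this gives $E_1\cap S_1=\emptyset$. Finally, $S_2\subseteq E_3$ already gives $E_1\cap E_4\subseteq E_1\cap S_1$, so $E_1\cap E_4=\emptyset$.
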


The following property of $1$- and $2$-sums by Aprile and Fiorini~\cite{aprile2022regular} will also be used in the later proof.

\begin{lemma}[\cite{aprile2022regular}]\label{lemma:12sum-regular}
    Let $M=M_1 \SymDif M_2$ be a $1$- or $2$-sum of binary matroids. Then, each of $M_1$ and $M_2$ is isomorphic to a matroid obtained from $M$ by deletions and contractions. In particular, if $M$ is regular, then both $M_1$ and $M_2$ are regular.
\end{lemma}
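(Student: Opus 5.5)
For a $1$-sum I will obtain each $M_i$ by deletion alone. For a $2$-sum I will first restrict to a suitable circuit of $M_2$ through the common element, and then contract inside it. Throughout I compare matroids through their circuit families, using \Cref{lemma:circuit-sum} to list the circuits of $M$. Once both $M_1$ and $M_2$ are shown to be minors of $M$, the claim about regularity follows because regular matroids are closed under deletion and contraction.

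\emph{The $1$-sum case.} Here $S=\emptyset$, so the third alternative of \Cref{lemma:circuit-sum} cannot occur. Every circuit of $M$ is therefore a circuit of $M_1$ or a circuit of $M_2$. Conversely, every circuit of $M_i$ is a circuit of $M$, by \Cref{definition:sum-matroid} together with \Cref{lemma:sum-circuit}: take the other part empty and note that a single circuit is minimal. Hence the circuits of $M\setminus E_2$, which are exactly the circuits of $M$ avoiding $E_2$, are precisely the circuits of $M_1$, and so $M\setminus E_2=M_1$. Symmetrically, $M\setminus E_1=M_2$.

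\emph{The $2$-sum case.} Let $S=\{s\}$; by symmetry it suffices to treat $M_1$.
\begin{enumerate}
\item Since $\{s\}$ is not a cocircuit of $M_2$, $s$ is not a coloop of $M_2$. So some circuit $C_2$ of $M_2$ contains $s$. Since $\{s\}$ is not a circuit, $|C_2|\ge 2$, and I fix an element $f\in C_2\setminus\{s\}$.
\item I set $N:=M\setminus(E_2\setminus C_2)$ and list its circuits using \Cref{lemma:circuit-sum}.
\begin{itemize}
\item Type~1 circuits are the circuits of $M_1$ avoiding $s$.
\item Type~2 circuits would be circuits of $M_2$ contained in $C_2\setminus\{s\}$. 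There are none, because a proper subset of a circuit is independent.
\item Type~3 circuits have the form $C_1\SymDif D_2$ with $s\in C_1\cap D_2$ and $D_2\subseteq C_2$. By minimality of circuits, $D_2=C_2$, so these are exactly the sets $(C_1\setminus\{s\})\cup(C_2\setminus\{s\})$ over circuits $C_1\ni s$ of $M_1$.
\end{itemize}
Thus $N$ is $M_1$ with $s$ replaced by the series class $P:=C_2\setminus\{s\}$: every circuit of $N$ either misses $P$ entirely or contains all of $P$.
\item Next I contract $X:=P\setminus\{f\}$. The circuits of $N/X$ are the inclusion-minimal nonempty sets of the form $C\setminus X$ with $C$ a circuit of $N$. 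These are the circuits of $M_1$ avoiding $s$, together with the sets $(C_1\setminus\{s\})\cup\{f\}$. They are all nonempty, because each type~3 circuit contains $f$.
\item I then check that this family is already an antichain, so that taking minimal members changes nothing. This reduces to the antichain property of the circuits of $M_1$ under the relabeling $f\leftrightarrow s$. Hence $N/X$ is isomorphic to $M_1$ via $f\mapsto s$, and $N/X$ is a minor of $M$.
\end{enumerate}

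The step needing the most care is the contraction in the $2$-sum case. I must justify the description of circuits under contraction, and confirm that contracting a series class down to one element neither creates new minimal sets nor merges distinct circuits. The key observation is that every circuit of $N$ meets $P$ either in $\emptyset$ or in all of $P$. Because of this, the map $C\mapsto C\setminus X$ is injective and preserves inclusion, so minimality transfers from $N$ to $N/X$.
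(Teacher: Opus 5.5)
The paper does not prove this lemma; it cites it from Aprile and Fiorini and uses only the consequence that the parts of a $1$- or $2$-sum of a regular matroid are regular. Your argument is a correct, self-contained proof along the standard lines. For a $1$-sum you delete the other side. For a $2$-sum you delete everything of $E_2$ off a circuit $C_2\ni s$, then contract $C_2\setminus\{s\}$ down to a single element $f$, which takes over the role of $s$.

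Compared with the citation, your route is longer but transparent. It uses only \Cref{lemma:circuit-sum}, \Cref{lemma:sum-circuit}, and the standard description of the circuits of $N/X$ as the minimal nonempty sets $C\setminus X$. It also shows that, of the $2$-sum conditions, only ``$s$ is neither a loop nor a coloop'' on each side is needed; the size conditions play no role.

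One step should be made explicit. In the $2$-sum case, \Cref{lemma:circuit-sum} only gives one inclusion: every circuit of $N$ is of type~1 or type~3. Your claim that these sets are \emph{exactly} the circuits of $N$ needs the converse. Without it, step~4 would only identify the circuits of $N/X$ with a subfamily of the relabeled $\mathcal{C}(M_1)$, which does not give an isomorphism.

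The converse follows just as in your $1$-sum case:
\begin{itemize}
\item By \Cref{lemma:sum-circuit}, $(C_1\setminus\{s\})\cup(C_2\setminus\{s\})$ is a nonempty disjoint union of circuits of $M$.
\item Any circuit $Z$ of $M$ inside it cannot be of type~1 or type~2, since proper subsets of $C_1$ and $C_2$ are independent. So $Z$ is of type~3 with parts $D_i\subseteq C_i$, which forces $D_i=C_i$ and hence $Z$ is the whole set.
\item The same reasoning shows that every circuit of $M_1$ avoiding $s$ is a circuit of $M$.
\end{itemize}
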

\section{Clean Unordered Decomposition Tree}\label{section:decomposition}

In this section, we establish the existence of a clean unordered decomposition tree associated with a regular matroid. This fact will be used throughout this work.

\subsection{Definitions and Structural Properties}

We start with the formulation of unordered decomposition trees by Gurjar and Vishnoi~\cite{gurjar2019number}. An unordered decomposition tree represents a matroid by assigning a matroid to each node and a sum operation to each edge. The sums may be performed in an arbitrary order, always producing the same matroid. We formalize it as follows.

\begin{definition}[Unordered Decomposition Trees~\cite{gurjar2019number}]\label{definition:unordered-decomposition-tree}
    Let $M=(E,\mathcal{I})$ be a regular matroid. An \emph{unordered decomposition tree} of $M$ consists of a tree $\mathcal{T}$ and a nonempty regular matroid $M_u=(E_u,\mathcal{I}_u)$ associated with every node $u\in V(\mathcal{T})$, satisfying the following properties:
    \begin{enumerate}
        \item For every edge $e=\{u,v\}\in E(\mathcal{T})$, the \emph{boundary set}
        \[
            S_{e}:=E_u \cap E_v
        \]
        has size $0$, $1$, or $3$, and $M_u\SymDif M_v$ is a $1$-, $2$-, or $3$-sum (\Cref{definition:matroid-sum}), respectively. The ground sets of nonadjacent nodes are disjoint.

        \item For every node $u\in V(\mathcal{T})$, the boundary sets $S_e$ over all edges $e$ adjacent to $u$ are pairwise disjoint.

        \item For every nonempty connected subtree $\mathcal{Q}\subseteq \mathcal{T}$, its node matroids can be combined by sums along its edges in an arbitrary order. The resulting matroid, denoted by $M_{\mathcal{Q}}$, is independent of this order and has ground set
        \[
            E_{\mathcal{Q}}:=\SymDif_{v\in V(\mathcal{Q})} E_v.
        \]

        \item The matroid $M_{\mathcal{T}}$ is isomorphic to $M$. In particular, $E=\SymDif_{u\in V(\mathcal{T})} E_u$.
    \end{enumerate}
\end{definition}

We emphasize that the boundary elements belong to the ground sets of matroids associated to both endpoints of a tree edge and disappear when the corresponding sum is performed. Every other element belongs to exactly one node and survives in the assembled matroid. After identifying $M_{\mathcal{T}}$ with $M$, these surviving elements are precisely the elements of $E$. See \Cref{figure:unordered-decomposition-tree} for an example of an unordered decomposition tree. Gurjar and Vishnoi~\cite{gurjar2019number} proved that every regular matroid has a corresponding unordered decomposition tree as follows.

\begin{theorem}[\cite{gurjar2019number}]\label{thm:unordered-decomposition-tree}
    Every regular matroid admits an unordered decomposition tree in which each node matroid is graphic, cographic, or isomorphic to $R_{10}$.
\end{theorem}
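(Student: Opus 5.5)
We will prove this by induction on the number of elements of $M$, driven by Seymour's decomposition (\Cref{thm:seymour-decomposition}). The goal is to show that a suitable sequence of sums can be organized into a tree satisfying all four properties of \Cref{definition:unordered-decomposition-tree}.

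\textbf{Base case and splitting step.} If $M$ is graphic, cographic, or isomorphic to $R_{10}$, the single-node tree works. Otherwise, Seymour's theorem (in its standard sharper form) writes $M=M_1\SymDif M_2$ as a $1$-, $2$-, or $3$-sum with both $M_1,M_2$ regular and strictly smaller than $M$. For $1$- and $2$-sums, regularity of the parts is \Cref{lemma:12sum-regular}. For $3$-sums it is part of Seymour's statement, since each $M_i$ is a minor of $M$ up to a small modification. By induction, $M_1$ and $M_2$ have unordered decomposition trees $\mathcal{T}_1,\mathcal{T}_2$ with basic node matroids.

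\textbf{Gluing the trees.} Let $S=E_1\cap E_2$. In $\mathcal{T}_i$, the elements of $S$ are non-boundary elements of $M_i$, so each lies in exactly one node matroid. We would like a single node $u_i$ of $\mathcal{T}_i$ with $S\subseteq E_{u_i}$; then we join $u_1$ and $u_2$ by a new edge with boundary set $S$. This is automatic for $|S|\le 1$. For a $3$-sum, $S$ is a triangle of $M_i$. If $S$ is split across several nodes, we must re-choose the decomposition. One option is to strengthen the induction hypothesis to: for any prescribed triangle $T$ of $M$ with no cocircuit inside it, there is a decomposition tree with $T$ inside a single node. The alternative is to pick Seymour's $3$-sum so that it is compatible, which is the Chun--Oxley style argument Gurjar--Vishnoi borrow.

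\textbf{Verifying the properties.} Property 1 holds for old edges by induction and for the new edge because $M_{u_1}\SymDif M_{u_2}$ is again a valid $k$-sum: the restrictions in \Cref{definition:matroid-sum} on circuits and cocircuits of $S$ and on sizes need checking in the node matroid rather than in $M_i$, which should follow because $M_{u_i}$ arises from $M_i$ by sums away from $S$. Disjointness of nonadjacent ground sets and property 2 hold because $S$ consists of non-boundary elements of $M_i$. Property 3 (order-independence) follows by repeated application of the associativity \Cref{lemma:associative}, using $S\subseteq E_{u_i}$ to move the new sum past the other sums. Property 4 then follows from property 3 applied to $\mathcal{T}$, computing first $M_{\mathcal{T}_1}\SymDif M_{\mathcal{T}_2}=M_1\SymDif M_2=M$.

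\textbf{Main obstacle.} The hardest step is ensuring that the $3$-sum triangle lies within one node, together with checking the validity conditions of the new sum at the node level. I would attempt the strengthened induction hypothesis first: a triangle split between two subtrees would give a circuit of the form in case 3 of \Cref{lemma:circuit-sum} of size $3$, and small size should force all of it into one side.
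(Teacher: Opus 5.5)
\paragraph{Gap in the $3$-sum step.}
The step you flag as the main obstacle is a genuine gap. For a $3$-sum $M=M_1\SymDif M_2$ with boundary $S$, nothing in the proposal produces a node $u_i$ of $\mathcal{T}_i$ with $S\subseteq E_{u_i}$, and the heuristic you offer for this is false. If a triangle $T$ is split by a sum with boundary $S'$, \Cref{lemma:circuit-sum} gives $T=C_1\SymDif C_2$ with $C_1\cap S'=C_2\cap S'=\{e\}$. Then $|T|=3$ only forces one of $C_1,C_2$ to be a $2$-circuit $\{t,e\}$, i.e., an element parallel to a boundary element, and \Cref{definition:matroid-sum} does not exclude this. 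For example, if $\{p,b,c\}$ is a triangle of $M(K_4)$, the $2$-sum of $M(K_4)$ with $U_{1,3}$ on $\{p,a,d\}$ along $p$ is valid and splits the triangle $\{a,b,c\}$.

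Your induction can run into exactly this. If some $t$ is parallel to $e\in S$ in $M_1$, which the $3$-sum conditions allow even when $M$ is $3$-connected, then a legitimate $2$-sum split of $M_1$ isolates $\{t,e\}$ in a $U_{1,3}$ node and leaves the other two elements of $S$ elsewhere. So you would have to re-choose decompositions (shifting parallel elements across sums and re-verifying validity). Moreover, the single-triangle strengthening does not close the induction anyway. When the prescribed triangle $T$ lies in $E_1\setminus S$, the tree for $M_1$ must place both $S$ and $T$ into single nodes simultaneously. So the hypothesis must speak about families of disjoint triangles, each of which can again be split.

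\paragraph{Node-level validity and how the paper avoids the issue.}
Of the node-level validity conditions, two do transfer from $M_i$ to $M_{u_i}$: the circuit condition, and ``no cocircuit inside $S$''. The latter holds because a cocycle of $M_{u_i}$ inside $S$ meets every cycle of $M_i$, a symmetric difference of node cycles, in an even number of elements. However, $|E_{u_i}|\geq 7$ does not follow from ``sums away from $S$''. For instance, $S$ can be a triangle of an $M(K_4)$ node whose other three edges are $2$-sum boundaries. Recovering it needs extra structure, e.g.\ taking $3$-sums only of $3$-connected $M$, so that a $2$-sum edge at $u_i$ would lift to a $2$-separation of $M$.

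The paper imports this theorem from Gurjar--Vishnoi without proof. Its proof of \Cref{lemma:clean-unordered-decomposition-tree}, which does not use this theorem and yields a stronger (clean) tree, shows how to sidestep your obstacle:
\begin{enumerate}
\item Perform all $1$- and $2$-sum splits first. A boundary of size at most one automatically lies in a single node of any refinement, and the parts stay regular by \Cref{lemma:12sum-regular}.
\item Only then decompose each remaining piece, which admits no $1$- or $2$-sum decomposition, using \Cref{theorem:3-connected-decomposition} (Aprile--Fiorini). This directly supplies an entire tree of $3$-sums whose nodes are connected with at least seven elements, so the reattached $2$-sum edges remain valid.
\end{enumerate}
The triangle-placement difficulty is thus confined to the $3$-connected case and handled inside that cited result. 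Your proposal would need an argument of comparable strength there.
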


\begin{figure}[ht]
    \centering
    \includegraphics[width=0.75\linewidth]{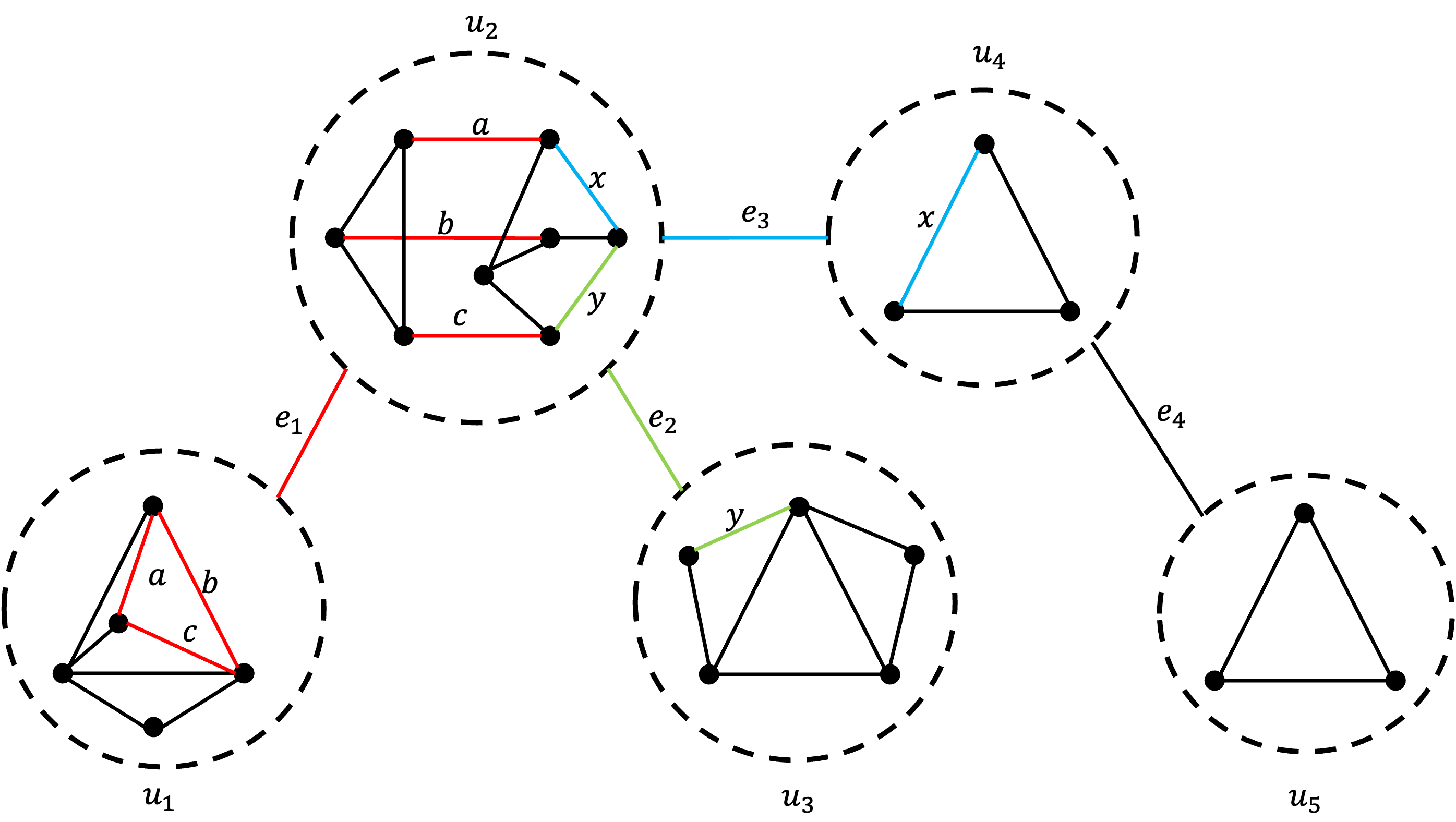}
    \caption{The figure shows an unordered decomposition tree with five nodes $\{u_1, u_2, u_3, u_4, u_5\}$ and four edges $e_1=\{u_1, u_2\}$, $e_2=\{u_2, u_3\}$, $e_3=\{u_2, u_4\}$, and $e_4=\{u_4, u_5\}$. $M_{u_1}$, $M_{u_3}$, $M_{u_4}$, and $M_{u_5}$ are graphic matroids, while $M_{u_2}$ is a cographic matroid that is not graphic. The edge $e_1$ corresponds to a $3$-sum of $M_{u_1}$ and $M_{u_2}$ with $S_{e_1}=\{a,b,c\}$. The edge $e_2$ corresponds to a $2$-sum of $M_{u_2}$ and $M_{u_3}$ with $S_{e_2}=\{y\}$. The edge $e_3$ corresponds to a $2$-sum of $M_{u_2}$ and $M_{u_4}$ with $S_{e_3}=\{x\}$. The edge $e_4$ corresponds to a $1$-sum of $M_{u_4}$ and $M_{u_5}$ with $S_{e_4}=\emptyset$.
    }
    \label{figure:unordered-decomposition-tree}
\end{figure}

We point out that the third property in \Cref{definition:unordered-decomposition-tree} follows from the first two. We call a (possibly empty) disjoint union of circuits a \emph{cycle}. By \Cref{definition:sum-matroid} and \Cref{lemma:sum-circuit}, the cycles of $M_1\SymDif M_2$ are exactly the sets $C_1\SymDif C_2$, where $C_i$ is a cycle of $M_i$ and $C_1\cap S=C_2\cap S$. Since every element belongs to at most two node ground sets, induction on the number of sums shows that performing the sums along the edges of $\mathcal{Q}$ in any order produces the binary matroid whose cycles are exactly the sets $\SymDif_{v\in V(\mathcal{Q})}C_v$, where each $C_v$ is a cycle of $M_v$ and $C_u\cap S_e=C_v\cap S_e$ for every edge $e=\{u,v\}$ of $\mathcal{Q}$. This family does not depend on the order, and a binary matroid is determined by its cycles.

\begin{remark}\label{remark:subtree-sums}
    Let $\mathcal{Q}$ be a connected subtree of $\mathcal{T}$, and let $f=\{x,y\}$ be a tree edge with $x\in V(\mathcal{Q})$ and $y\notin V(\mathcal{Q})$. Then $S_f\subseteq E_{\mathcal{Q}}$, and $S_f$ satisfies in $M_{\mathcal{Q}}$ the conditions of \Cref{definition:matroid-sum} that it satisfies in $M_x$: we have $|E_{\mathcal{Q}}|\geq |E_x|$; if $S_f=\{p\}$, then $p$ is neither a loop nor a coloop of $M_{\mathcal{Q}}$; and if $|S_f|=3$, then $S_f$ is a circuit of $M_{\mathcal{Q}}$ that contains no cocircuit of $M_{\mathcal{Q}}$. Consequently, removing an edge of a connected subtree $\mathcal{Q}$ writes $M_{\mathcal{Q}}$ as a $1$-, $2$-, or $3$-sum of the matroids of the two resulting subtrees. This follows by a routine induction on $|V(\mathcal{Q})|$, splitting $\mathcal{Q}$ along one of its edges and using \Cref{definition:sum-matroid} and \Cref{lemma:sum-circuit}.
\end{remark}

The definition also allows us to combine the matroids represented by two subtrees joined by an edge. Specifically, let $e=\{u,v\}$ be an edge of a connected subtree $\mathcal{Q}$, and let $\mathcal{Q}_u$ and $\mathcal{Q}_v$ be the two subtrees obtained by removing the edge $e$, which contains $u$ and $v$, respectively. We may first construct $M_{\mathcal{Q}_u}$ and $M_{\mathcal{Q}_v}$, then take their sum along the boundary set $S_e$, obtaining
\[
    M_{\mathcal{Q}} = M_{\mathcal{Q}_u} \SymDif M_{\mathcal{Q}_v}.
\]

We show a linear bound on the size of the decomposition as follows.
\begin{lemma}\label{lemma:unordered-decomposition-tree-size}
    Let $M=(E,\mathcal{I})$ be a regular matroid with $m=|E|$. Let $\mathcal{T}$ be an unordered decomposition tree of $M$ and $E_u$ be the ground set of the matroid associated with node $u\in V(\mathcal{T})$. Then,
    \[
        |V(\mathcal{T})|\leq 2m \qquad \text{and} \qquad \sum_{u\in V(\mathcal{T})} |E_u|=O(m).
    \]
\end{lemma}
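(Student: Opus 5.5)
We bound both quantities by double counting, using the facts that every element of $E$ lies in exactly one node ground set and that the nodes carry enough non-boundary elements. For each node $u$, split $E_u$ into its \emph{real} elements $E_u\cap E$ and its \emph{boundary} elements $\bigcup_{e\ni u} S_e$. By property~2 of \Cref{definition:unordered-decomposition-tree}, the boundary sets at $u$ are disjoint. By property~1, each boundary element lies in exactly two node ground sets. Every other element lies in exactly one node ground set and survives in $E$. Summing over nodes gives
\[
  \sum_{u} |E_u| \;=\; m + 2\sum_{e\in E(\mathcal{T})} |S_e| \;\le\; m + 6\,(|V(\mathcal{T})|-1).
\]
So the second claim follows from the first, and the whole task reduces to showing $|V(\mathcal{T})|\le 2m$.

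To bound the number of nodes, I would show that every node contributes real elements, weighted by how many tree edges it meets. The key local claim is
\[
  |E_u\cap E| \;\ge\; \frac{2-\deg_{\mathcal{T}}(u)}{2}\quad\text{roughly, or more usefully}\quad |E_u| - \textstyle\sum_{e\ni u}|S_e| \ \ge\ \text{something positive for leaves}.
\]
Concretely, I would use the size conditions of \Cref{definition:matroid-sum} together with \Cref{remark:subtree-sums}. If $u$ is a leaf joined by a $2$-sum, then $|E_u|\ge 3$ with a single boundary element, so $u$ has at least $2$ real elements. If joined by a $3$-sum, then $|E_u|\ge 7$ with $3$ boundary elements, so at least $4$ real elements. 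If joined by a $1$-sum, $M_u$ is nonempty, so at least $1$ real element.

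Leaves alone do not suffice, since long paths of degree-$2$ nodes could in principle carry no real elements. The cleanest route I would try is a global count via \Cref{remark:subtree-sums}. Root $\mathcal{T}$ at an arbitrary node and, for each tree edge $f=\{x,y\}$ with $y$ the child, consider the subtree $\mathcal{Q}_y$ below $f$. By the remark, $M_{\mathcal{T}}$ is a $1$-, $2$-, or $3$-sum of $M_{\mathcal{Q}_y}$ and the rest. The size condition of the sum definition then forces $|E_{\mathcal{Q}_y}|\ge 3$ for a $2$-sum and $\ge 7$ for a $3$-sum. Hence $\mathcal{Q}_y$ contains at least $|E_{\mathcal{Q}_y}|-|S_f|\ge 1$ element of $E$, which yields an injective-ish charging. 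Induction on the tree then gives a linear bound: define $r(\mathcal{Q})$ as the number of real elements in $\mathcal{Q}$, and prove $|V(\mathcal{Q})|\le 2r(\mathcal{Q})$ for every rooted subtree.

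The inductive step is the main obstacle. Consider a degree-$2$ node in a chain. Its real elements satisfy $|E_u|-|S_{e_1}|-|S_{e_2}|\ge 0$, but they need not be positive. For example, a graphic node might be a triangle glued along two $3$-sums, which is unlikely, or a node with $|E_u|=3$ carrying two $2$-sums, which leaves exactly $1$ real element. I would handle this by checking that in a $2$-sum with $|E_u|\ge 3$ and two boundary elements, at least one real element remains. For $3$-sums, $S$ is a circuit containing no cocircuit and $|E_u|\ge 7$, so $|E_u|\ge 7>6$ leaves a real element. Mixed cases are handled similarly: $1+3<7$. Every degree-$\le 2$ node therefore has at least one real element. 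Nodes of degree at least $3$ number fewer than the leaves, so
\[
  |V(\mathcal{T})| \;\le\; 2\,\#\{u:\deg u\le 2\} \;\le\; 2m.
\]
If some degree-$2$ case fails the per-node count, I would fall back on the subtree-sum argument above to charge each edge to a distinct real element.
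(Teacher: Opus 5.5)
Your argument is correct and is essentially the paper's proof. The paper proves the contrapositive of your local check: a node with $E_u\cap E=\emptyset$ must have degree at least three, using $|E_u|\geq 7$ for $3$-sums, $|E_u|\geq 3$ for $2$-sums, and nonemptiness. It then bounds the number of degree-$\geq 3$ nodes by the number of leaves and obtains $\sum_u|E_u| = m+2\sum_e|S_e|$ exactly as you do. Your per-node case check never fails (a node with two $3$-sum edges has $|E_u|\geq 7>6$, and a triangle node carrying two $3$-sums is impossible rather than ``unlikely''), so you can drop the hedging and the subtree-charging fallback.
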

\begin{proof}
    We first bound the number of nodes. We observe that each element of $E$ belongs to $E_u$ for exactly one node $u\in V(\mathcal{T})$. Hence, at most $m$ nodes $u$ satisfy $E_u\cap E\neq \emptyset$. We prove that every other node $u$ with $E_u\cap E=\emptyset$ has degree at least three.

    Let $u\in V(\mathcal{T})$ be a node with $E_u\cap E=\emptyset$, i.e., the set $E_u$ consists entirely of the boundary sets associated with its incident edges. Suppose that $u$ has degree at most two. We note that each boundary set has at most three elements, which implies that $|E_u|\leq 6$. This further shows that none of its incident edges represents a $3$-sum, since each matroid in a $3$-sum has at least seven elements. Thus, each boundary set has at most one element and $|E_u|\leq 2$. This again shows that none of its incident edges represents a $2$-sum, since each matroid in a $2$-sum has at least three elements. Hence, every incident edge represents a $1$-sum and all boundary sets are empty, which implies that $|E_u|=0$, contradicting the assumption that $M_u$ is a nonempty matroid.

    In particular, every leaf of $\mathcal{T}$ contains at least an element of $E$, which implies that $\mathcal{T}$ has at most $m$ leaves. Since a tree has at most as many nodes of degree at least three as leaves, there are at most $m$ nodes $u\in V(\mathcal{T})$ with $E_u \cap E = \emptyset$. We recall that there are at most $m$ nodes $u$ satisfy $E_u \cap E \neq \emptyset$. Combining these bounds gives
    \[
        |V(\mathcal{T})|\leq 2m.
    \]

    Finally, summing the sizes of the matroids at all nodes counts each element of the assembled matroid once and each boundary element twice. Therefore,
    \[
        \sum_{u\in V(\mathcal{T})} |E_u| = m + 2\sum_{e\in E(\mathcal{T})}|S_e| \leq m+6|E(\mathcal{T})| = m+6(|V(\mathcal{T})|-1) = O(m).
    \]
\end{proof}

The following lemma describes how a circuit of the matroid decomposes into circuits in the matroids associated with nodes and connected subtrees in the unordered decomposition tree.

\begin{lemma}\label{lemma:circuit-projection}
    Let $\mathcal{T}$ be an unordered decomposition tree of a regular matroid $M=(E,\mathcal{I})$, with a matroid $M_u$ on ground set $E_u$ at each node $u\in V(\mathcal{T})$. Every circuit $C$ of $M$ admits a unique decomposition
    \[
        C = \SymDif_{u\in V(\mathcal{T})} C_u,
    \]
    where each node $u$ is assigned a set $C_u\subseteq E_u$ that is either empty or a circuit of $M_u$, which is called the projection of $C$ onto node $u$, satisfying the following properties:
    \begin{enumerate}
        \item For every edge $e=\{u,v\}\in E(\mathcal{T})$, the intersections $C_u\cap S_e=C_v\cap S_e$ and contain at most one element, i.e.,
        \[
            |C_u\cap S_e|=|C_v\cap S_e|\leq 1.
        \]
        \item The nodes with nonempty projections induce a connected subtree.
        \item For every nonempty connected subtree $\mathcal{Q}$, the set
        \[
            C_{\mathcal{Q}}:= \SymDif_{u\in V(\mathcal{Q})} C_u
        \]
        is either empty or a circuit of $M_{\mathcal{Q}}$. We call $C_{\mathcal{Q}}$ the projection of $C$ onto $\mathcal{Q}$.
    \end{enumerate}
\end{lemma}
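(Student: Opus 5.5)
We prove the lemma by induction on the number of nodes of $\mathcal{T}$. The engine is \Cref{lemma:circuit-sum}, applied to the splitting of $M$ along a single tree edge guaranteed by \Cref{remark:subtree-sums}.

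\textbf{Base case and splitting.} If $|V(\mathcal{T})|=1$, set $C_u=C$ and everything holds trivially. Otherwise, pick a leaf $\ell$ with neighbour $v$. Let $\mathcal{T}'=\mathcal{T}\setminus\{\ell\}$ and $f=\{\ell,v\}$. By \Cref{remark:subtree-sums}, $M=M_{\mathcal{T}'}\SymDif M_\ell$ is a $1$-, $2$-, or $3$-sum along $S_f$. By \Cref{lemma:circuit-sum}, exactly one of three cases holds:
\begin{enumerate}
\item $C$ is a circuit of $M_{\mathcal{T}'}$ avoiding $S_f$. Set $C_\ell=\emptyset$ and recurse into $\mathcal{T}'$.
\item $C$ is a circuit of $M_\ell$ avoiding $S_f$. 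Set $C_\ell=C$ and $C_u=\emptyset$ elsewhere.
\item There are unique $e\in S_f$ and circuits $D\in M_{\mathcal{T}'}$, $C_\ell\in M_\ell$ with $D\cap S_f=C_\ell\cap S_f=\{e\}$ and $C=D\SymDif C_\ell$. Apply induction to $D$ in $\mathcal{T}'$.
\end{enumerate}
Note that $\mathcal{T}'$ is a valid unordered decomposition tree of the regular matroid $M_{\mathcal{T}'}$, with regularity coming from \Cref{lemma:12sum-regular} or from the subtree definition.

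\textbf{Verifying the properties.} In all three cases $C=\SymDif_u C_u$. Property 1 at $f$ holds since the intersection with $S_f$ is $\emptyset$ or $\{e\}$ on both sides: in case 3, the projection $D_v$ satisfies $D_v\cap S_f=D\cap S_f$, because $S_f\subseteq E_v$ only among the ground sets of $\mathcal{T}'$. Property 1 on the other edges is inherited from the induction. For property 2, in case 3 the set $D$ meets $S_f\subseteq E_v$, so $D_v\neq\emptyset$. Adding $\ell$ therefore keeps the support connected, and in the other cases the support is unchanged or a single node.

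Property 3 is the main obstacle. For a subtree $\mathcal{Q}$, show by induction (again peeling leaves of $\mathcal{Q}$ via \Cref{remark:subtree-sums}) that $C_{\mathcal{Q}}$ is a cycle of $M_{\mathcal{Q}}$, using the description of cycles as $\SymDif C_v$ with matching boundary intersections. Note that boundary edges leaving $\mathcal{Q}$ contribute at most one element each to $C_{\mathcal{Q}}$. For minimality, suppose $C_{\mathcal{Q}}$ strictly contains a nonempty circuit $K$ of $M_{\mathcal{Q}}$. Then the cycle $K\SymDif C_{\mathcal{Q}}$ or $K$ itself, combined with the projections outside $\mathcal{Q}$ (whose boundary intersections must match those of $K$ on the edges leaving $\mathcal{Q}$), should give a nonempty cycle of $M$ strictly inside $C$, contradicting that $C$ is a circuit.

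To make this precise, use the case-3 uniqueness in \Cref{lemma:circuit-sum} applied to $M=M_{\mathcal{Q}}\SymDif M_{\mathcal{T}\setminus\mathcal{Q}}$ split along each edge leaving $\mathcal{Q}$. This shows that a circuit meets each boundary set in at most one element and that the side pieces are circuits. Concretely, cut one boundary edge of $\mathcal{Q}$ at a time. Each cut expresses the current projection as a circuit of the piece containing $\mathcal{Q}$, by the lemma's trichotomy. Iterating inward then gives that $C_{\mathcal{Q}}$ is empty or a circuit.

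\textbf{Uniqueness.} This follows from the uniqueness of $e$, $C_1$, $C_2$ in \Cref{lemma:circuit-sum} at every step. Any valid decomposition, restricted to $\mathcal{T}'$ and $\ell$, must coincide with the split produced above, by property 3 applied to $\mathcal{T}'$ and to $\{\ell\}$.
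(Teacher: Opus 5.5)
\textbf{Gap: property 3 is not established.} Your construction is fine (leaf-peeling is one splitting order in the paper's recursive construction), and so is your verification of properties 1 and 2. The problem is property 3, which you call the crux but do not prove.

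In the version you call precise, cutting the edges leaving $\mathcal{Q}$ one at a time and applying \Cref{lemma:circuit-sum} produces \emph{some} circuit $X$ of the piece containing $\mathcal{Q}$. But your projections $C_u$ were fixed by the leaf-peeling order, and nothing shows that $X=\SymDif_u C_u$ over that piece.
\begin{itemize}
\item The uniqueness clause of \Cref{lemma:circuit-sum} only compares circuits. A priori, $\SymDif_u C_u$ over the piece is only a cycle; that it is a circuit is exactly what you are trying to prove.
\item Your uniqueness argument cannot identify the two decompositions either. It assumes property 3 for the competing decomposition, and property 3 is not known for the decomposition built in the other order. This is circular.
\item Also, $M_{\mathcal{T}\setminus\mathcal{Q}}$ is not a single summand when $\mathcal{T}\setminus\mathcal{Q}$ is disconnected.
\end{itemize}

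\textbf{How the paper closes it.} It proves uniqueness first, using only property 1 and the fact that each $C_u$ is empty or a circuit.
\begin{itemize}
\item At a leaf $u$ with boundary $S$, two decompositions agree on $E_u\setminus S$. So $C_u\SymDif D_u\subseteq S$ is a cycle of $M_u$ with at most two elements.
\item It is therefore empty: a singleton boundary is not a loop, and proper subsets of a $3$-element boundary circuit are independent.
\item Remove the leaf and repeat.
\end{itemize}
Then one runs the construction splitting along the edges leaving $\mathcal{Q}$ first. The set assigned to $\mathcal{Q}$ is empty or a circuit, and by uniqueness the resulting node projections are your $C_u$, so that set equals $C_{\mathcal{Q}}$.

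\textbf{Alternative fix.} Your first sketch can also be completed without uniqueness, but it needs a selective gluing and a nonemptiness argument that you omit.
\begin{itemize}
\item Suppose $C_{\mathcal{Q}}=K\sqcup K'$ with $K,K'$ nonempty cycles of $M_{\mathcal{Q}}$.
\item Let $\mathcal{R}_j$ be the components of $\mathcal{T}-\mathcal{Q}$, attached through $f_j$. By property 1, $C_{\mathcal{R}_j}$ is a cycle of $M_{\mathcal{R}_j}$ that agrees with $C_{\mathcal{Q}}$ on $S_{f_j}$.
\item Glue to $K$ only those $C_{\mathcal{R}_j}$ with $K\cap S_{f_j}\neq\emptyset$, and glue all the others to $K'$.
\item This splits $C$ into two disjoint cycles of $M$. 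Both are nonempty because a boundary element is never a loop of $M_{\mathcal{R}_j}$ (\Cref{remark:subtree-sums}). That contradicts $C$ being a circuit.
\end{itemize}

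\textbf{Smaller omissions.} Your uniqueness sketch skips case 2. There $\SymDif_{u\in V(\mathcal{T}')}D_u=\emptyset$, and you must still show every such $D_u$ is empty; the leaf-peeling argument handles this automatically. Also, \Cref{lemma:12sum-regular} gives regularity of $M_{\mathcal{T}'}$ only for $1$- and $2$-sums. It is simpler to run the induction for binary matroids, since that is all you use.
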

\begin{proof}
    We first construct the node projections and establish the first two properties. We split the tree $\mathcal{T}$ along an edge and express $M$ as the sums of the matroids represented by the two resulting subtrees, which is a valid sum by \Cref{remark:subtree-sums}. By \Cref{lemma:circuit-sum}, $C$ either belongs to one of these matroids and avoids the boundary set, or splits into two circuits meeting the boundary set in the same single element. We apply this procedure recursively, assigning empty projections throughout any subtree receiving an empty set. Whenever both sides receive nonempty circuits, their shared boundary element occurs at the endpoint of the splitting edge on each side. Thus, the construction satisfies the boundary condition and keeps the nodes with nonempty projections connected.

    We now show that the decomposition is unique. Suppose that sets $D_u$ give another such decomposition. Let $u$ be a leaf of $\mathcal{T}$, and let $S$ be the boundary set on its incident edge. Since every element of $E_u \setminus S$ occurs only at $u$, the two decompositions agree on these elements. Hence,
    \[
        C_u \SymDif D_u \subseteq S \qquad \text{and} \qquad |C_u \SymDif D_u|\leq |C_u \cap S|+|D_u \cap S|\leq 2.
    \]
    We recall that $M_u$ is a binary matroid, which implies that $C_u \SymDif D_u$ is a disjoint union of circuits of $M_u$. However, a singleton boundary is independent, and every proper subset of a three-element boundary is also independent. Therefore, $C_u=D_u$ and the two decompositions have equal symmetric differences over the remaining nodes. We may therefore remove the leaf $u$ and repeat the argument at a leaf of the remaining tree. This proves equality at each successive leaf. At the final node, the two remaining symmetric differences are precisely its two projections and these must also agree. This shows that $C_u=D_u$ for every node $u\in V(\mathcal{T})$.

    Finally, we prove that the third property holds. We fix a connected subtree $\mathcal{Q}$. In the recursive construction above, we may first split along the edges joining $\mathcal{Q}$ to the rest of the tree. These splits isolate $\mathcal{Q}$ without splitting any of its internal edges. The construction therefore assigns to $\mathcal{Q}$ a set $D$ that is either empty or a circuit of $M_{\mathcal{Q}}$. Completing the construction yields the same node projections $C_u$ by uniqueness. Hence,
    \[
        D = \SymDif_{u\in V(\mathcal{Q})}C_u=C_{\mathcal{Q}},
    \]
    proving the third property.
\end{proof}

\subsection{Clean Refinement}

We observe that every boundary set used in a $3$-sum is a three-element circuit in both node matroids. In a graphic matroid, these elements correspond to a triangle; in a cographic matroid, they correspond to a minimal cut. We emphasize that since $R_{10}$ has no three-element circuit, an $R_{10}$ node cannot be incident to a $3$-sum edge. We next strengthen the decomposition by imposing an additional condition at cographic nodes: every boundary set of size three must consist exactly of the three edges incident to a single vertex in the underlying graph.

\begin{definition}[Clean Unordered Decomposition Tree]
    Let $M=(E,\mathcal{I})$ be a regular matroid. An unordered decomposition tree $\mathcal{T}$ of $M$ is \emph{clean} if
    \begin{enumerate}
        \item every node matroid is graphic, cographic, or isomorphic to $R_{10}$, and
        \item for every cographic node $v\in V(\mathcal{T})$ with $M_v=M^*(G_v)$ that is not graphic and every tree edge $e\in E(\mathcal{T})$ incident to $v$ with $|S_e|=3$, there exists a vertex $w\in V(G_v)$ of degree three that is incident to all edges in $S_e$.
    \end{enumerate}

\end{definition}

We remark that the unordered decomposition tree in \Cref{figure:unordered-decomposition-tree} is not clean, since the three elements in $S_{e_1}$ are not incident to a single vertex in the underlying graph of the cographic matroid $M_{u_2}$. The clean property follows from the following theorem once the $1$- and $2$-sum decompositions have been performed.

\begin{theorem}[{\cite[Theorem 6 and Proposition 27]{aprile2022regular}}]\label{theorem:3-connected-decomposition}
    Let $M=(E,\mathcal{I})$ be a regular matroid that is not isomorphic to $R_{10}$ with at least four elements. Suppose that $M$ cannot be expressed as a $1$- or $2$-sum of smaller matroids. Then, $M$ admits a clean unordered decomposition tree in which every edge represents a $3$-sum.
\end{theorem}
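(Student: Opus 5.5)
Since this statement is quoted from \cite{aprile2022regular}, the plan is to recover it from Seymour's $3$-sum decomposition of $3$-connected regular matroids, followed by a local refinement at cographic nodes. The hypothesis says $M$ is not a $1$- or $2$-sum of smaller matroids; together with $|E|\geq 4$ and $M\not\cong R_{10}$, this makes $M$ $3$-connected. Seymour's theorem (\Cref{thm:seymour-decomposition}), in the sharper form for $3$-connected matroids, then gives one of two outcomes. Either $M$ is graphic or cographic, in which case the one-node tree is trivially clean. Or $M=M_1\SymDif M_2$ is a $3$-sum in which each $M_i$ is isomorphic to a minor of $M$, hence regular, and we may take each $M_i$ $3$-connected up to simple parallel/series reductions. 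Recursing on $M_1$ and $M_2$ and gluing along the common triangle $S$ gives an unordered decomposition tree with only $3$-sum edges and graphic, cographic or $R_{10}$ nodes. Here I would use \Cref{lemma:associative} to reattach each child subtree to the unique node whose ground set contains $S$. No $R_{10}$ node can be incident to a $3$-sum edge, since $R_{10}$ has no triangle. Termination follows because each $M_i$ has fewer elements than $M$.

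The cleaning step comes next. Take a cographic, non-graphic node $v$ with $M_v=M^*(G_v)$ and an incident edge $e$ whose boundary $S_e$ is a $3$-element bond $\delta_{G_v}(X)$ of $G_v$. If both $|X|\geq 2$ and $|V(G_v)\setminus X|\geq 2$, split the node into two pieces. Let $G'$ be $G_v$ with $V\setminus X$ contracted to a single new vertex $w'$, and let $G''$ be $G_v$ with $X$ contracted to a new vertex $w''$. In both graphs $S_e$ is exactly the star of the new degree-three vertex. Then $M^*(G_v)=M^*(G')\SymDif M^*(G'')$, a $3$-sum along $S_e$, because the bonds of $G_v$ are precisely symmetric differences of bonds of $G'$ and $G''$ that agree on $S_e$. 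This is the cographic analogue of splitting a graph along a $3$-edge cut. The tree edge $e$ is reattached to whichever of $G',G''$ now contains $S_e$ as a vertex star. Every other boundary set at $v$ lies entirely on one side or is itself a vertex star, so it is reattached accordingly. A global potential such as $\sum_v |V(G_v)|^2$, or simply the number of non-star $3$-boundaries, decreases with each split, so the process terminates.

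The main obstacle is checking that the new edge is a legitimate $3$-sum in the sense of \Cref{definition:matroid-sum}. We need $|E(G')|,|E(G'')|\geq 7$. We also need that $S_e$ contains no cocircuit of $M^*(G')$; equivalently, the three edges at $w'$ must contain no cycle of $G'$, which requires that they are not parallel edges. My approach would use the $3$-connectivity of the original $M$, which forces $G_v$ to be $3$-connected and simple up to the earlier reductions, to rule out parallel edges and small sides. If a side turns out to be too small, I would handle it with a $\Delta$-$Y$ exchange or by merging that side into the neighbouring node, and verify that the neighbour remains graphic or cographic. Finally I would confirm via \Cref{remark:subtree-sums} that every other adjacent sum stays valid after the split.
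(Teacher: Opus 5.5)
The paper does not prove this statement; it quotes it from Aprile and Fiorini. Your sketch therefore has to stand on its own, and the cleaning step fails at its key identity.

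\textbf{The cleaning identity is false.} You claim $M^*(G_v)=M^*(G')\SymDif M^*(G'')$ as a $3$-sum along $S_e$. By \Cref{definition:sum-matroid}, this sum lives on $E(G')\SymDif E(G'')=E(G_v)\setminus S_e$. Its cycles are the sets $C'\SymDif C''$ with $C'\cap S_e=C''\cap S_e$, and none of them meets $S_e$. So the sum is $M^*(G_v)\setminus S_e=M^*(G_v/S_e)$, not $M^*(G_v)$. Your description of the bonds of $G_v$ already fails for $S_e$ itself, which is a bond of $G_v$ because it is a circuit of $M_v$. What is true is that the cuts of $G_v$ are exactly the sets $C'\SymDif C''$ with \emph{no} agreement condition on $S_e$. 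In other words, $M^*(G_v)$ is the generalized parallel connection of $M^*(G')$ and $M^*(G'')$ along $S_e$, and no tree edge represents that operation.

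\textbf{The reattachment is then ill-defined.} $S_e$ would belong to $G'$, to $G''$, and to the neighbour across $e$. A boundary element must lie in exactly two adjacent nodes, and boundary sets at a node must be disjoint.

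\textbf{What a correct split needs.} It needs a fresh triangle plus extra structure. One option is a graphic hub node: three vertices, each pair joined by three parallel edges, which contains three disjoint triangles. $M^*(G')$ and $M^*(G'')$ (with their stars relabelled) and the old neighbour are each $3$-summed to this hub. You also need a real argument for the degenerate cases you only gesture at:
\begin{itemize}
\item $|E(G[X])|\leq 3$ or $|E(G[V\setminus X])|\leq 3$: the piece then has at most six elements (for example $M^*(K_4)$) and cannot be a part of a $3$-sum.
\item Two edges of $S_e$ share an endpoint in $X$: then $S_e$ contains a cocircuit of $M^*(G')$.
\end{itemize}

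\textbf{The first half also has gaps.} Seymour's parts need not satisfy the hypothesis of the statement. An element of $X_1$ in $\mathrm{cl}_M(X_2)$ becomes parallel in $M_1$ to an element of the triangle, so $M_1$ can be a $2$-sum. Your phrase ``$3$-connected up to parallel/series reductions'' is neither proved nor accompanied by a rule for reinserting the removed elements.

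More seriously, \Cref{lemma:associative} requires the gluing triangle $S$ to lie inside a single node of the tree built for $M_1$, and your recursion does not guarantee this. Suppose $M_1=N_1\SymDif N_2$ along $T'$, $\{t',a,b\}$ is a triangle of $N_1$, and $e$ is parallel to $t'\in T'$ in $N_2$. Then $\{a,b,e\}$ is a cycle of $M_1$ that can be the triangle $S$, yet it is split between $N_1$ and $N_2$. You must choose the separations (for instance, by relocating elements parallel to a boundary triangle) so that every later gluing triangle stays in one node. ``The unique node whose ground set contains $S$'' has to be proved, not assumed.

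Finally, you claim every other boundary set at $v$ lies on one side of $\delta(X)$. That holds when $G_v$ is $3$-edge-connected, since two disjoint $3$-edge cuts cannot cross in such a graph. Nothing in your construction gives this, because node matroids in such a tree need not be $3$-connected, as the parallel-element phenomenon above shows.
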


We now prove that every regular matroid admits a clean unordered decomposition tree. We first construct a decomposition tree by repeatedly splitting node matroids through $1$- or $2$-sums until neither decomposition is possible. Nodes whose matroids are already graphic, cographic, or isomorphic to $R_{10}$ require no further refinement. Every remaining node matroid satisfies the assumption in \Cref{theorem:3-connected-decomposition}, so we can replace the node with a clean unordered decomposition tree. Reconnecting these trees along the original $1$- and $2$-sum edges yields the desired clean decomposition. We formalize this process in the following proof.

\begin{lemma}\label{lemma:clean-unordered-decomposition-tree}
    Let $M$ be a regular matroid. Then, $M$ admits a clean unordered decomposition tree.
\end{lemma}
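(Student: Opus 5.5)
We build the tree in two phases. First we perform all $1$- and $2$-sum splits, then we replace each resulting node by a clean tree from \Cref{theorem:3-connected-decomposition}, and finally we check that the splice is still an unordered decomposition tree.

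\emph{Phase 1.} Start with the one-node tree whose node matroid is $M$. While some node $u$ has a matroid $M_u$ that is not graphic, cographic, or isomorphic to $R_{10}$, and $M_u$ can be written as a $1$- or $2$-sum $M_u=M'\SymDif M''$ of smaller matroids, replace $u$ by two adjacent nodes carrying $M'$ and $M''$. Each edge formerly incident to $u$ is reattached to whichever of the two new nodes contains its boundary set. This is well defined because boundary sets at $u$ are pairwise disjoint and lie in $E_u\setminus S$. For a $3$-sum boundary we must also check that it lies entirely on one side; I expect a circuit of size three cannot straddle a $1$- or $2$-separation without creating a small circuit through the new element. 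By \Cref{lemma:12sum-regular}, $M'$ and $M''$ are regular. The properties of \Cref{definition:unordered-decomposition-tree} are preserved: associativity follows from \Cref{lemma:associative}, and the sum conditions for reattached edges follow as in \Cref{remark:subtree-sums}. The process terminates because ground sets strictly shrink in a $1$-sum, and in a $2$-sum each side has at most $|E_u|-1$ elements since the other side has $|E|\ge 3$.

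\emph{Phase 2.} Every remaining node $u$ is graphic, cographic, $R_{10}$, or satisfies the hypotheses of \Cref{theorem:3-connected-decomposition}. It has at least four elements, since small regular matroids are graphic. Apply the theorem to obtain a clean tree $\mathcal{T}_u$ of $3$-sums for $M_u$, and reattach each $2$-sum edge of $u$ to the node of $\mathcal{T}_u$ containing its boundary element. A node with no Phase 1 edges needs nothing further.

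\textbf{Main obstacle: cleanliness.} Cleanliness only constrains $3$-sum edges at cographic nodes. Every $3$-sum edge after Phase 2 is internal to some $\mathcal{T}_u$, provided Phase 1 introduces no $3$-sum edges and the original tree has none. For graphic, cographic, or $R_{10}$ nodes left unrefined, I will simply not split them, so they carry no $3$-sum edges. Hence cleanliness is inherited from \Cref{theorem:3-connected-decomposition}. The remaining work is verifying the unordered-tree axioms after splicing. I would handle this with the cycle characterization given after \Cref{thm:unordered-decomposition-tree}: the cycles of the spliced tree's matroid are symmetric differences of compatible node cycles, and grouping the nodes of each $\mathcal{T}_u$ recovers the compatible cycles of $M_u$, hence those of $M$. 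The other point needing care is that a $2$-sum boundary element of $u$ is neither a loop nor a coloop inside the node of $\mathcal{T}_u$ it is attached to, and that this node has at least three elements. I expect this to follow because deleting or contracting within a $3$-sum preserves that property, via \Cref{remark:subtree-sums} applied in reverse.
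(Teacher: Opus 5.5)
Your architecture matches the paper's: split off $1$- and $2$-sums, refine each remaining non-basic node by \Cref{theorem:3-connected-decomposition}, and note that every $3$-sum edge is internal to a refined tree, so cleanliness is inherited. Phase~1, however, has a genuine gap. You allow $1$- and $2$-sum splits in arbitrary order and claim the reattached edges remain valid sums ``as in \Cref{remark:subtree-sums}''. That remark only transfers the sum conditions from a node to a larger subtree containing it, not to a piece of a split node. The loop/coloop condition does transfer to the piece: for an element outside the split boundary, a loop of a part is a loop of the sum, and a coloop of a part lies in no cycle of the sum. The size requirement $|E_i|\geq 3$, however, can fail after a $1$-sum split.

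Concretely, take $M=M(K_3)\oplus R_{10}$ with triangle $\{a,b,p\}$. It is a valid $2$-sum, along a new element $q$, of the triangle on $\{a,b,q\}$ and $U_{1,2}(\{p,q\})\oplus R_{10}$. Your Phase~1 may pick this split first. The second node is again neither graphic, cographic, nor $R_{10}$, so it is split by a $1$-sum into $U_{1,2}(\{p,q\})$ and $R_{10}$. The edge with boundary $\{q\}$ is then reattached to a two-element node. That edge is no longer a $2$-sum in the sense of \Cref{definition:matroid-sum}, so the output is not an unordered decomposition tree.

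The paper closes this by ordering the splits: perform $1$-sum splits until every node matroid is connected, and only then perform $2$-sum splits. A $2$-sum is connected if and only if both parts are. Hence every node carrying a $2$-sum edge stays connected with at least three elements, so it has no loops or coloops, and no $1$-sum split ever occurs at such a node. With this ordering your Phase~1 works; restricting splits to non-basic nodes is harmless.

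Some smaller points:
\begin{itemize}
\item Your concern about $3$-sum boundaries straddling a Phase~1 split is moot, since no $3$-sum edges exist at that stage.
\item In Phase~2 the check you flag is routine. Every node of a multi-node $\mathcal{T}_u$ lies on a $3$-sum edge and so has at least seven elements. The loop/coloop transfer above, applied edge by edge inside $\mathcal{T}_u$, handles the inherited boundary element. The paper instead cites connectivity of these nodes from Aprile--Fiorini.
\item You should say that new boundary elements get fresh labels.
\item You should also state that boundary sets at a spliced node stay pairwise disjoint. This holds because internal boundary elements of $\mathcal{T}_u$ do not lie in $E_u$, while inherited ones do.
\end{itemize}
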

\begin{proof}
    We first construct a tree using only $1$- and $2$-sums. We start with a single node whose matroid is $M$. We first perform $1$-sum splits until every node matroid is connected, and only then perform $2$-sum splits. Whenever a node matroid $M_v$ can be written as a $1$- or $2$-sum $M_v=M_{v_1} \SymDif M_{v_2}$ of two smaller nonempty matroids $M_{v_1}$ and $M_{v_2}$, we replace $v$ with two nodes $v_1$ and $v_2$ carrying $M_{v_1}$ and $M_{v_2}$ and join them by a new edge $e=\{v_1,v_2\}$. The boundary set of this edge is
    \[
        S_e:=E_{v_1}\cap E_{v_2}.
    \]
    For a $1$-sum, $S_e=\emptyset$. For a $2$-sum, $S_e$ consists of the single element shared by $M_{v_1}$ and $M_{v_2}$. In this case, we give this new boundary element a label not used elsewhere in the tree.

    We note that each edge previously incident to $v$ must also be reattached. Consider such an edge $f=\{u,v\}$. If $S_f=\emptyset$, we replace $f$ with an edge from $u$ to either $v_1$ or $v_2$, retaining the empty boundary set. If $S_f=\{q\}$, then $q\in E_{v} = E_{v_1}\SymDif E_{v_2}$, which implies that $q$ belongs to exactly one of the two node ground sets. We reattach $f$ to that node, retaining its boundary set $\{q\}$. This preserves pairwise disjointness of the boundary sets at every node.

    According to \Cref{lemma:12sum-regular}, the new node matroids remain regular. We repeat the splitting operation until no node matroid admits a $1$- or $2$-sum decomposition. Each split produces matroids whose overlap with the ground set is smaller, so the process terminates. We denote the resulting tree by $\mathcal{T}_0$. Since a $2$-sum is connected if and only if both of its parts are connected, the $2$-sum splits produce only connected node matroids. Thus, every node incident to a $2$-sum edge is connected and has at least three elements, so it has no loops or coloops. Hence, every edge of $\mathcal{T}_0$ is a valid $1$- or $2$-sum of its two node matroids.

    We next replace each node of $\mathcal{T}_0$ with a clean decomposition of its matroid. For every node $v$, we define a tree $\mathcal{T}_v$ as follows. If $M_v$ is graphic, cographic, or isomorphic to $R_{10}$, let $\mathcal{T}_v$ consist of the single node $v$. Otherwise, $M_v$ is regular, has at least four elements, and admits neither a $1$- nor $2$-sum decomposition. We apply \Cref{theorem:3-connected-decomposition} to obtain a clean decomposition tree $\mathcal{T}_v$ whose sums produce $M_v$.

    For each tree $\mathcal{T}_v$, we preserve the labels of the elements in $E_v$ and give the newly introduced boundary elements fresh labels. Every element of $E_v$ survives the sums within $\mathcal{T}_v$ and therefore belongs to exactly one node of that tree. In particular, this holds for every old boundary element shared with a neighbor of $\{v\}$ in $\mathcal{T}_0$.

    We now connect these trees according to the edges of $\mathcal{T}_0$. For every edge $e=\{u,v\}$, we add an edge with one point in $\mathcal{T}_u$ and the other in $\mathcal{T}_v$, retaining the boundary set $S_e$. If $S_e=\emptyset$, we choose either endpoint arbitrarily within its tree. If $S_e=\{p\}$, we choose the unique node containing $p$ in each tree. We denote the resulting tree by $\mathcal{T}$. If $\mathcal{T}_v$ has more than one node, every node $x$ of $\mathcal{T}_v$ is part of a $3$-sum and is connected (see the proof of \cite[Theorem 6]{aprile2022regular}), so $|E_x|\geq 7$ and $M_x$ has no loops or coloops. Hence, every reattached $2$-sum edge is a valid $2$-sum of its two node matroids. See \Cref{figure:clean-refinement} for an example.

\begin{figure}[ht]
    \centering
    \includegraphics[width=0.7\linewidth]{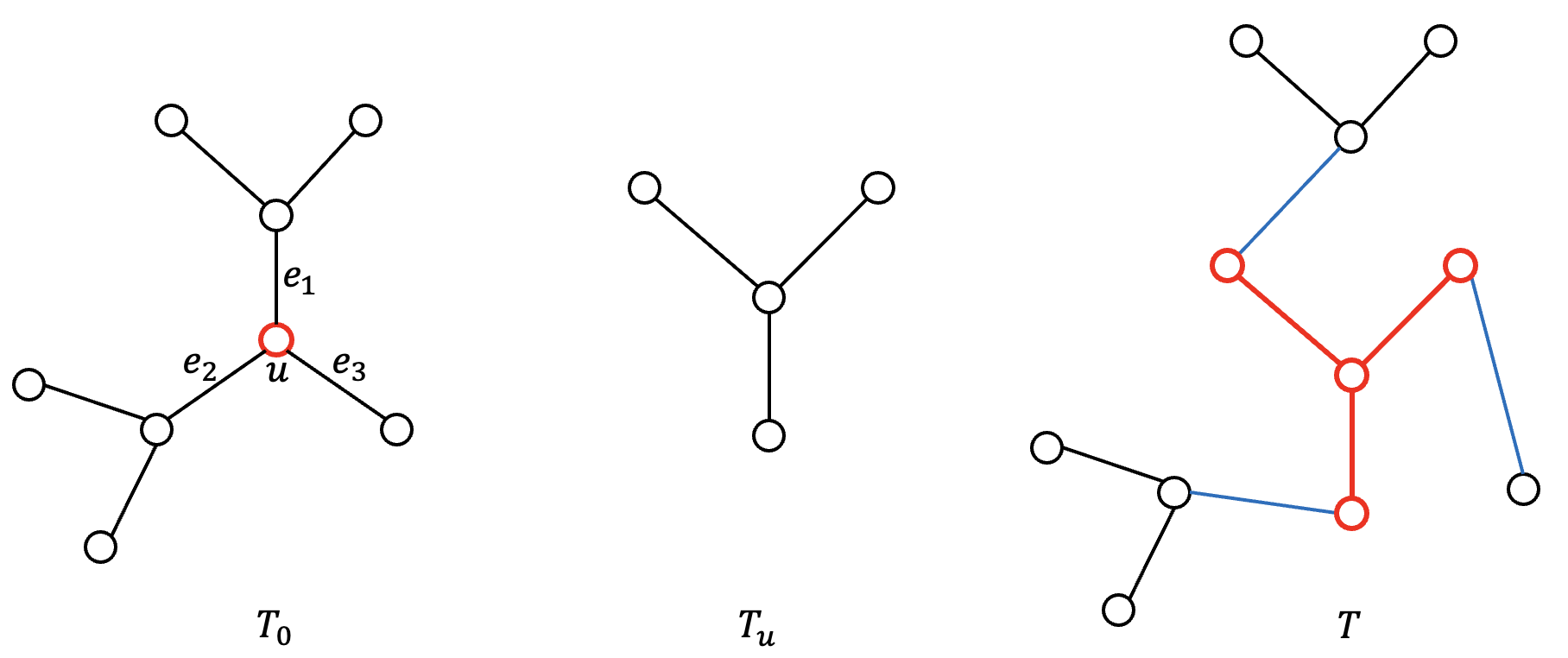}
    \caption{The left figure shows the unordered decomposition tree $\mathcal{T}_0$ of a regular matroid, where each node is associated with a matroid that does not admit a $1$- or $2$-sum decomposition. Moreover, $u$ is the only node in the tree such that matroid $M_u$ is not graphic, cographic, or isomorphic to $R_{10}$. The middle figure illustrates the clean unordered decomposition $\mathcal{T}_u$ of $M_u$. The right figure shows how to replace the node $u$ with $\mathcal{T}_u$, where the original three edges $e_1$, $e_2$, and $e_3$ that are incident to $u$ are now replaced by the three blue edges.
    }
    \label{figure:clean-refinement}
\end{figure}

    We now verify that the boundary sets at each node of $\mathcal{T}$ are pairwise disjoint. We fix a node $x \in V(\mathcal{T}_v)$. An edge incident to $x$ either lies inside $\mathcal{T}_v$ or comes from an edge incident to $v$ in $\mathcal{T}_0$. The boundary sets of the first type are pairwise disjoint because $\mathcal{T}_v$ is an unordered decomposition tree. Those of the second type are pairwise disjoint by the construction of $\mathcal{T}_0$. Moreover, these two collections are disjoint: every boundary element inherited from $\mathcal{T}_0$ belongs to $E_v$, whereas every boundary element of an edge inside $\mathcal{T}_v$ disappears when the sums within $\mathcal{T}_v$ are performed and therefore does not belong to $E_v$.
    
    We observe that performing the sums within each $\mathcal{T}_v$ first recovers $M_v$, after which the sums prescribed by $\mathcal{T}_0$ recover $M$. According to \Cref{lemma:associative}, all these sums may be performed in an arbitrary order. Since each $\mathcal{T}_v$ is clean and the edges between these trees represent only $1$- and $2$-sums, the resulting tree $\mathcal{T}$ is clean.
\end{proof}

\begin{remark}
    We emphasize that a clean unordered decomposition tree can be constructed in polynomial time given access to a rank oracle or an independence oracle for the regular matroid. More precisely, the decomposition trees in \Cref{thm:unordered-decomposition-tree,theorem:3-connected-decomposition} admit polynomial-time constructions using polynomially many oracle queries; see \cite{dinitz2014matroid,gurjar2019number,berczi2024reconfiguration}. We point out that our subsequent proofs use only the existence and structural properties of these trees. The efficient construction is not relevant.
\end{remark}
\section{Matroid Absorption}\label{section:absorption}

In this section, we introduce the \emph{matroid absorption} operation. 
Consider a clean unordered decomposition tree associated with a regular matroid. For a node of the tree, we will obtain a matroid by replacing the components attached to the node with weighted elements. 
The replacement allows us to analyze circuit projections in a modified weighted matroid, with weights accounting for the elements used inside the attached components.

\subsection{Completion and Expansion}\label{section:completion-expansion}

Let $\mathcal{T}$ be a clean unordered decomposition tree of a regular matroid $M=(E, \mathcal{I})$ and $u\in V(\mathcal{T})$ be a node in the tree. Let $M_u=(E_u, \mathcal{I}_u)$ be the matroid associated with node $u$. We observe that removing $u$ and its incident edges leaves several connected components, each joined to $u$ by a unique edge in $\mathcal{T}$. We refer to these as the components attached to $u$.

Let $\mathcal{Q}_1, \ldots, \mathcal{Q}_k$ be components attached to $u$ (not necessarily all of them) such that the unique edge joining each $\mathcal{Q}_i$ to $u$ represents a $2$- or $3$-sum. We select these components for absorption and denote their corresponding boundary sets by $S_1, \ldots, S_k$. All other components are left unabsorbed. Let $\mathcal{Q}$ be the connected subtree formed by $u$, the selected components $\mathcal{Q}_1, \ldots, \mathcal{Q}_k$, and their connecting edges. We work directly with the matroid $M_{\mathcal{Q}}$ throughout this section. We observe that
\[
    E_u \cap E_{\mathcal{Q}} = E_u \setminus \left(\bigcup_{i=1}^{k}S_i\right).
\]
In particular, the boundary elements connecting $u$ to unselected components remain in $M_{\mathcal{Q}}$.

For every $i\in [k]$ and $s\in S_i$, a \emph{completion} of $s$ in $\mathcal{Q}_i$ is a set $P\subseteq E_{\mathcal{Q}_i}\setminus S_i$ such that $P\cup\{s\}$ is a circuit of $M_{\mathcal{Q}_i}$. We note that a completion is always nonempty. The following lemma shows that such a completion always exists.

\begin{lemma}\label{lemma:exist-completion}
    Let $\mathcal{T}$ be a clean unordered decomposition tree of a regular matroid $M$, and let $u\in V(\mathcal{T})$ be a node of the tree. Let $\mathcal{Q}$ be a component attached to $u$ through $2$ or $3$-sum, with boundary set $S$. Then, for every $s\in S$, there exists a completion of $s$ in $\mathcal{Q}$.
\end{lemma}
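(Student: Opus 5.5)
We need to show: for each $s\in S$, there is a circuit of $M_{\mathcal{Q}}$ containing $s$ and avoiding the other elements of $S$. The plan is to handle the two sum types separately, using only \Cref{remark:subtree-sums}, which tells us what $S$ looks like inside $M_{\mathcal{Q}}$.

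\emph{Step 1 (setup).} Let $f=\{u,x\}$ be the tree edge joining $u$ to the node $x\in V(\mathcal{Q})$, so $S=S_f$. By \Cref{remark:subtree-sums} applied to the connected subtree $\mathcal{Q}$ and the edge $f$, we have $S\subseteq E_{\mathcal{Q}}$. Moreover, the conditions of \Cref{definition:matroid-sum} that $S$ satisfies in $M_x$ also hold in $M_{\mathcal{Q}}$. Concretely, if $S=\{s\}$ then $s$ is neither a loop nor a coloop of $M_{\mathcal{Q}}$. If $|S|=3$, then $S$ is a circuit of $M_{\mathcal{Q}}$ containing no cocircuit of $M_{\mathcal{Q}}$.

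\emph{Step 2 ($2$-sum).} Suppose $S=\{s\}$. Since $s$ is not a coloop of $M_{\mathcal{Q}}$, it lies in some circuit $D$ of $M_{\mathcal{Q}}$. Since $s$ is not a loop, $D\neq\{s\}$. Then $P:=D\setminus\{s\}$ is a nonempty subset of $E_{\mathcal{Q}}\setminus S$, and $P\cup\{s\}=D$ is a circuit, so $P$ is a completion.

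\emph{Step 3 ($3$-sum).} Suppose $S=\{s,a,b\}$. We need a circuit $D$ of $M_{\mathcal{Q}}$ with $D\cap S=\{s\}$. Write $N:=M_{\mathcal{Q}}\setminus\{a,b\}$.
\begin{itemize}
\item First I show $s$ is not a coloop of $N$. Suppose it were. Then every circuit of $M_{\mathcal{Q}}$ through $s$ meets $\{a,b\}$.
\item Equivalently, $\{s\}$ is a cocircuit of $N$. Cocircuits of a deletion are the minimal nonempty sets of the form $K\setminus\{a,b\}$ with $K$ a cocircuit of $M_{\mathcal{Q}}$.
\item So there is a cocircuit $K$ of $M_{\mathcal{Q}}$ with $\{s\}\subseteq K\subseteq\{s,a,b\}=S$. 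This contradicts the fact that $S$ contains no cocircuit of $M_{\mathcal{Q}}$.
\end{itemize}
Hence some circuit $D$ of $N$ contains $s$. It is also a circuit of $M_{\mathcal{Q}}$, and it avoids $a$ and $b$. Next, $D\neq\{s\}$, since otherwise $\{s\}$ would be a dependent proper subset of the circuit $S$. So $P:=D\setminus\{s\}\subseteq E_{\mathcal{Q}}\setminus S$ is a completion.

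The only delicate point is Step 3, specifically the passage from ``$s$ is a coloop of the deletion'' to a cocircuit of $M_{\mathcal{Q}}$ inside $S$. The key fact is that coloops of $M\setminus X$ correspond to cocircuits of $M$ contained in $\{s\}\cup X$. If one prefers to avoid this, an alternative is a binary-space argument: the circuit $S$ together with a circuit through $s$, combined via symmetric differences, yields the required circuit. I would use the cocircuit argument.
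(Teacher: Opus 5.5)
Your proof is correct and rests on the same key fact as the paper's: by \Cref{remark:subtree-sums}, the boundary set $S$ contains no cocircuit of $M_{\mathcal{Q}}$. The paper does not split into the $2$- and $3$-sum cases; it picks a basis $B\subseteq E_{\mathcal{Q}}\setminus S$ and takes the circuit of $B\cup\{s\}$ through $s$, which is the primal form of your argument that $s$ is not a coloop of $M_{\mathcal{Q}}\setminus(S\setminus\{s\})$.
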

\begin{proof}
    We recall that $M_{\mathcal{Q}}$ with ground set $E_{\mathcal{Q}}$ is the matroid obtained by performing all sums within $\mathcal{Q}$ in an arbitrary order. The edge joining $\mathcal{Q}$ to $u$ then defines a $2$- or $3$-sum with boundary set $S$. According to \Cref{remark:subtree-sums}, $S$ does not contain a cocircuit of $M_{\mathcal{Q}}$. We recall that an inclusion-wise minimal subset of $S$ that intersects every basis would be a cocircuit. This directly implies that $E_{\mathcal{Q}}\setminus S$ contains a basis of $M_{\mathcal{Q}}$, since otherwise every basis would intersect $S$, contradicting the fact that $S$ does not contain a cocircuit.

    Let $B\subseteq E_{\mathcal{Q}}\setminus S$ be a basis of $M_{\mathcal{Q}}$. We fix $s\in S$. We observe that $B\cup\{s\}$ is dependent and therefore contains a circuit $C$ of $M_{\mathcal{Q}}$. Moreover, $s\in C$ since $B$ is independent. Consequently, $C\setminus \{s\} \subseteq E_{\mathcal{Q}} \setminus S$ is a completion of $s$ in $\mathcal{Q}$.
\end{proof}

Let $P_{\mathcal{Q}_i,s}$ be a completion of $s$ in $\mathcal{Q}_i$ for every $i\in [k]$ and $s\in S_i$. Let $D$ be a circuit of $M_u$ with $|D\cap S_i|\leq 1$ for every $i\in [k]$. The \emph{expansion} of $D$ in $\mathcal{Q}$ is obtained by replacing each element in $s\in D\cap S_i$ with its chosen completion $P_{\mathcal{Q}_i,s}$, i.e.,
\[
    D^+:= \left(D\setminus \left(\bigcup_{i=1}^{k} S_i\right) \right) \cup \left(\bigcup_{i\in [k]}\left( \bigcup_{s\in D\cap S_i}P_{\mathcal{Q}_i,s} \right)\right).
\]
See \Cref{figure:expansion} for an example.

\begin{figure}[ht]
    \centering
    \includegraphics[width=0.85\linewidth]{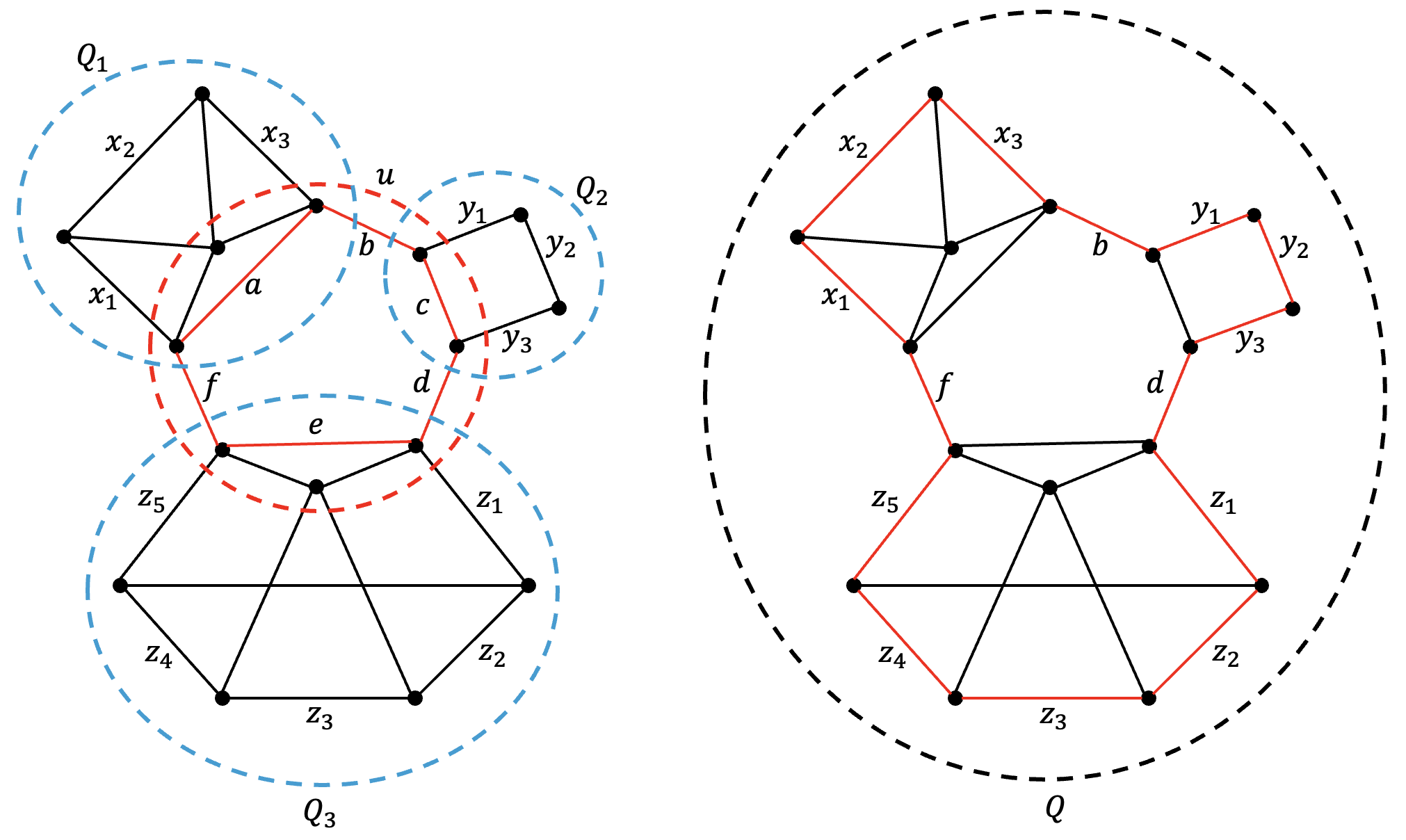}
    \caption{The left picture shows a node $u\in V(\mathcal{T})$ together with three components $\mathcal{Q}_1$, $\mathcal{Q}_2$, and $\mathcal{Q}_3$ attached to $u$. All four matroids $M_u$, $M_{\mathcal{Q}_1}$, $M_{\mathcal{Q}_2}$, and $M_{\mathcal{Q}_3}$ are graphic matroids. Let $D=\{a,b,c,d,e,f\}$ be a circuit of $M_u$. Let $P_{\mathcal{Q}_1,a}:=\{x_1, x_2, x_3\}$ be the completion of $a$ in $\mathcal{Q}_1$, $P_{\mathcal{Q}_2,c}:=\{y_1, y_2, y_3\}$ be the completion of $c$ in $\mathcal{Q}_2$, and $P_{\mathcal{Q}_3,e}:=\{z_1, z_2, z_3, z_4, z_5\}$ be the completion of $e$ in $\mathcal{Q}_3$, respectively. The expansion of $D$ in $\mathcal{Q}$ is shown in the right picture. For readability, the right picture still shows the boundary elements of the three sums (the edges $a$, $c$, $e$ and the remaining triangle edges, in black); they are not elements of $M_{\mathcal{Q}}$.
    }
    \label{figure:expansion}
\end{figure}

We observe that replacing $s\in D\cap S_i$ with its completion $P_{\mathcal{Q}_i,s}$ is equivalent to taking the symmetric difference with the circuit $P_{\mathcal{Q}_i,s}\cup \{s\}$ of $M_{\mathcal{Q}_i}$. Thus, applying \Cref{lemma:sum-circuit} successively to these sums shows that $D^+$ is a disjoint union of circuits of $M_{\mathcal{Q}}$. The following lemma shows that sizes of the minimum-cardinality completions of three elements in a $3$-sum boundary set satisfy the triangle inequalities.

\begin{lemma}\label{lemma:absorb-triangle-inequality}
     Let $\mathcal{T}$ be a clean unordered decomposition tree of a regular matroid $M$, and let $u\in V(\mathcal{T})$ be a node of the tree. Let $\mathcal{Q}$ be a component attached to $u$ through $3$-sum, with boundary set $S$ containing three elements $a$, $b$, and $c$. Let $P_{\mathcal{Q},a}$, $P_{\mathcal{Q},b}$, and $P_{\mathcal{Q},c}$ be the minimum-cardinality completions of $a$, $b$, and $c$ in $\mathcal{Q}$, respectively. Then,
     \[
        |P_{\mathcal{Q},a}|+|P_{\mathcal{Q},b}|\geq |P_{\mathcal{Q},c}|.
     \]
\end{lemma}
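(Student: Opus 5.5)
We will use the binary structure of $M_{\mathcal{Q}}$ together with the fact that $S=\{a,b,c\}$ is a circuit of $M_{\mathcal{Q}}$ (\Cref{remark:subtree-sums}). Set $X:=P_{\mathcal{Q},a}\cup\{a\}$ and $Y:=P_{\mathcal{Q},b}\cup\{b\}$; both are circuits of $M_{\mathcal{Q}}$. Since $M_{\mathcal{Q}}$ is binary, the set
\[
    Z:=X\SymDif Y\SymDif S=(P_{\mathcal{Q},a}\SymDif P_{\mathcal{Q},b})\cup\{c\}
\]
is a disjoint union of circuits of $M_{\mathcal{Q}}$. The equality uses that $a,b,c$ are distinct and that completions avoid $S$. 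In particular $Z$ is a cycle containing $c$, so some circuit $C'\subseteq Z$ contains $c$.

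Let $P:=C'\setminus\{c\}$. Then $P\subseteq P_{\mathcal{Q},a}\SymDif P_{\mathcal{Q},b}\subseteq E_{\mathcal{Q}}\setminus S$ and $P\cup\{c\}=C'$ is a circuit, so $P$ is a completion of $c$ in $\mathcal{Q}$. The minimality of $P_{\mathcal{Q},c}$ then gives
\[
    |P_{\mathcal{Q},c}|\le |P|\le |P_{\mathcal{Q},a}\SymDif P_{\mathcal{Q},b}|\le |P_{\mathcal{Q},a}|+|P_{\mathcal{Q},b}|.
\]

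The only step needing care is justifying that a binary cycle containing $c$ contains a circuit through $c$. This follows directly from the decomposition of $Z$ into disjoint circuits: the circuit in that decomposition containing $c$ is the desired $C'$. We also need that $X$, $Y$, and $S$ are circuits of the same binary matroid $M_{\mathcal{Q}}$. For $X$ and $Y$ this holds by the definition of completion. For $S$ it holds because the $3$-sum edge joining $\mathcal{Q}$ to $u$ makes $S$ a circuit of $M_{\mathcal{Q}}$ by \Cref{remark:subtree-sums}, and regularity implies that $M_{\mathcal{Q}}$ is binary. The argument does not use the clean property.
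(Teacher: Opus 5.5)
Your proposal is correct and follows essentially the same route as the paper: take the symmetric difference of the two completion circuits with the boundary circuit $S$, extract a circuit through $c$ inside $(P_{\mathcal{Q},a}\SymDif P_{\mathcal{Q},b})\cup\{c\}$, and invoke minimality of $P_{\mathcal{Q},c}$. Your added justifications are sound and only make explicit what the paper leaves implicit: distinctness of $a,b,c$, completions avoiding $S$, and the circuit through $c$ coming from the disjoint-circuit decomposition.
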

\begin{proof}
    According to the definition of completions, $P_{\mathcal{Q},a}\cup \{a\}$, $P_{\mathcal{Q},b}\cup \{b\}$, and $P_{\mathcal{Q},c}\cup \{c\}$ are circuits of matroid $M_{\mathcal{Q}}$. Since $\mathcal{Q}$ is attached to $u$ through $3$-sum, by \Cref{remark:subtree-sums}, $S=\{a,b,c\}$ is also a circuit of matroid $M_{\mathcal{Q}}$. Therefore,
    \[
        \left(P_{\mathcal{Q},a}\cup \{a\}\right) \SymDif \left(P_{\mathcal{Q},b}\cup \{b\}\right) \SymDif \{a,b,c\} = \left( P_{\mathcal{Q},a} \SymDif P_{\mathcal{Q},b} \right) \cup \{c\}
    \]
    is a union of circuits of matroid $M_{\mathcal{Q}}$, which further implies that there exists a set $P\subseteq  P_{\mathcal{Q},a} \SymDif P_{\mathcal{Q},b}$ such that $P\cup \{c\}$ is a circuit of matroid $M_{\mathcal{Q}}$. Therefore,
    \[
        |P_{\mathcal{Q},a}|+|P_{\mathcal{Q},b}|\geq |P|\geq |P_{\mathcal{Q},c}|.
    \]
\end{proof}

We now formalize the absorption lemma as follows.

\begin{lemma}\label{lemma:matroid-absorption}
    Let $\mathcal{T}$ be a clean unordered decomposition tree of a regular matroid $M$, and let $u\in V(\mathcal{T})$ be a node of the tree. Let $\mathcal{Q}_1, \ldots, \mathcal{Q}_k$ be distinct components attached to $u$ through $2$- or $3$-sums, with boundary sets $S_1, \ldots, S_k$. Let $\mathcal{Q}$ be the subtree formed by $u$, these $k$ components, and their connecting edges. For every $i\in [k]$ and $s\in S_i$, let $P_{\mathcal{Q}_i,s}$ be a completion of $s$ in $\mathcal{Q}_i$.
    Suppose that for every $i \in [k]$ with $|S_i|=3$, the completion sizes satisfy the triangle inequalities: 
    \[
        |P_{\mathcal{Q}_i,s}| \leq \sum_{t\in S_i\setminus \{s\}} |P_{\mathcal{Q}_i,t}| \qquad \text{for every } s\in S_i.
    \]
    
    Then, there exists a matroid $\widehat{M_u}=(\widehat{E_u}, \widehat{\mathcal{I}_u})$ with nonnegative weights on elements $w:\widehat{E}_u\rightarrow \R_{\geq 0}$ satisfying the following properties:
    \begin{enumerate}
        \item $E_u\cap E_{\mathcal{Q}} \subseteq \widehat{E_u}$ and $|\widehat{E_u}|= |E_u|$;
        \item There is a bijection between the circuits $D$ of $M_u$ satisfying $|D \cap S_i|\leq 1$ for every $i\in [k]$ and the circuits $\widehat{D}$ of $\widehat{M_u}$. Corresponding circuits $D$ and $\widehat{D}$ satisfy
        \[
            D \cap \left(E_u \cap E_{\mathcal{Q}}\right) = \widehat{D} \cap \left(E_u \cap E_{\mathcal{Q}}\right) \qquad \text{and} \qquad w(\widehat{D}) = |D^+|,
        \]
        where $D^+$ is the expansion of $D$ in $\mathcal{Q}$.
        \item $\widehat{M}_u$ is graphic if $M_u$ is graphic, cographic if $M_u$ is cographic but not graphic, and isomorphic to $R_{10}$ if $M_u$ is isomorphic to $R_{10}$, respectively.
    \end{enumerate}
\end{lemma}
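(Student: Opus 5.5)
We will construct $\widehat{M_u}$ by handling the absorbed components one at a time, each step changing only the boundary elements $S_i$. Since the $S_i$ are pairwise disjoint, the steps do not interact, and the elements of $E_u\cap E_{\mathcal{Q}}$ stay untouched and keep weight $1$. The weight of a circuit $\widehat{D}$ will be its number of untouched elements plus the weights of the new gadget elements it uses. We will arrange that these gadget weights add up exactly to $\sum_i\sum_{s\in D\cap S_i}|P_{\mathcal{Q}_i,s}|$. Together with the disjointness of the completions (they lie in distinct components and avoid $S_i$) and of the untouched part, this gives $w(\widehat{D})=|D^+|$.

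\emph{Two-sum case} ($S_i=\{s\}$). We keep $s$ as an element but rename it and give it weight $w(s):=|P_{\mathcal{Q}_i,s}|$. The constraint $|D\cap S_i|\le 1$ is vacuous here, the matroid is unchanged, and both the bijection and the type (graphic, cographic, $R_{10}$) are preserved trivially.

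\emph{Three-sum case} ($S_i=\{a,b,c\}$). Here $S_i$ is a triangle circuit of $M_u$, so $M_u$ is not $R_{10}$ (as noted before the definition of clean trees). The circuits $D$ of $M_u$ with $|D\cap S_i|\le 1$ are exactly those not using two triangle edges. We perform a $\Delta$–$Y$ exchange on $S_i$, obtaining three new elements $y_a,y_b,y_c$ forming a triad. Define weights by solving
\[
w(y_a)+w(y_b)=|P_a|,\quad w(y_a)+w(y_c)=|P_b|,\quad w(y_b)+w(y_c)=|P_c|,
\]
with the intent that replacing $a$ by the $Y$-path costs $|P_a|$, and similarly for $b,c$. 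For instance, $w(y_c)=(|P_b|+|P_c|-|P_a|)/2$, and the other two are analogous. These weights are nonnegative exactly by the assumed triangle inequalities. In the $\Delta$–$Y$ exchanged matroid, we claim that circuits using none of $y_a,y_b,y_c$ correspond to circuits of $M_u$ avoiding $S_i$. We also claim that a circuit using exactly two of the $y$'s corresponds to a circuit of $M_u$ using the one triangle edge "opposite" that pair. A circuit cannot use exactly one or all three $y$'s, since $\{y_a,y_b,y_c\}$ is a cocircuit and binary circuits meet cocircuits evenly. This yields the bijection with the right weight, and the cardinality of the ground set is preserved.

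\emph{Type preservation.} If $M_u=M(G)$ is graphic, the triangle is a cycle in $G$, and the classical graph $\Delta$–$Y$ replaces it by a new degree-three vertex, so the result stays graphic with the claimed circuit correspondence. If $M_u=M^*(G)$ is cographic but not graphic, cleanness says $S_i$ is the star of a degree-three vertex $w$. The dual operation, $Y$–$\Delta$ in $G$, removes $w$ and puts a triangle on its three neighbours, so $\widehat{M_u}$ is again the cographic matroid of a graph. We must still ensure it is not accidentally graphic; either the statement only needs ``cographic'', or we should argue that $\Delta$–$Y$ preserves non-graphicness, which is plausible via the minor relationship. In the cographic setting, the circuit correspondence follows by translating bonds: bonds of $G$ containing at most one star edge at $w$ correspond to bonds of the new graph through the triangle.

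\emph{Main obstacle.} The main difficulty is making the circuit bijection precise in the abstract binary setting rather than only pictorially. We would verify it via the cycle space: the exchanged matroid's cycles are $Z\SymDif$ (appropriate substitutions) for cycles $Z$ of $M_u$. We would also need to check that minimality of circuits is preserved both ways, which uses that $S_i$ is a circuit containing no cocircuit. A secondary worry is the cographic graph having parallel edges or being disconnected after the $Y$–$\Delta$ step, which we expect to be harmless.
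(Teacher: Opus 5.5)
Your proposal is correct and follows essentially the same route as the paper. In both, a $2$-sum boundary element is kept with weight $|P_{\mathcal{Q}_i,s}|$, which also covers the $R_{10}$ case, since such nodes see only $2$-sums. A graphic triangle undergoes $\Delta$--$Y$, and in the cographic-not-graphic case the degree-three star guaranteed by cleanness undergoes $Y$--$\Delta$, with the same half-sum weights made nonnegative by the triangle inequalities and the bijection proved directly on cycles and bonds.

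Both of your hedges resolve as in the paper. The statement only asks that $\widehat{M_u}$ be cographic, not non-graphic, so no non-graphicness argument is needed. The star vertex's three neighbours are distinct, since otherwise two edges of $S_i$ would form a $2$-cycle of $G_u$, i.e., a cocircuit of $M_u$ inside $S_i$, which a $3$-sum forbids.
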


We describe the absorption constructions for graphic and cographic matroids in \Cref{subsection:graphic-absorb,subsection:cographic-absorb}, respectively, and complete the proof of \Cref{lemma:matroid-absorption} in \Cref{subsection:absorb}.

\subsection{Graphic Matroids}\label{subsection:graphic-absorb}

In this section, we prove the correctness of \Cref{lemma:matroid-absorption} when $M_u$ is a graphic matroid.

\begin{lemma}\label{lemma:matroid-absorption-graphic}
    \Cref{lemma:matroid-absorption} holds when $M_u$ is a graphic matroid.
\end{lemma}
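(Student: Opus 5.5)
Let $M_u=M(G_u)$ for a graph $G_u$. Since every element of $\mathcal{Q}_i\setminus\{u\}$ is removed from $M_u$ only through the boundary sets, we build $\widehat{M_u}$ by keeping $G_u$ and changing only the labels and weights of the boundary edges. Every element of $E_u\cap E_{\mathcal{Q}}$ keeps its label and gets weight $1$. For a $2$-sum boundary $S_i=\{s\}$ we replace the edge $s$ by a fresh element $\hat s$ on the same endpoints, with weight $w(\hat s):=|P_{\mathcal{Q}_i,s}|$.

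For a $3$-sum boundary $S_i=\{a,b,c\}$, the three edges form a triangle in $G_u$ on vertices $x,y,z$; say $a=yz$, $b=xz$, $c=xy$. We perform a $\Delta$--$Y$ transformation. We delete $a,b,c$, add a new vertex $t_i$, and add edges $\hat a=t_ix$, $\hat b=t_iy$, $\hat c=t_iz$. Each new edge is opposite its original in the sense that $\hat a$ is incident to $x$, the vertex not on $a$. The weights are chosen so that $w(\hat b)+w(\hat c)=|P_{\mathcal{Q}_i,a}|$, and cyclically. This gives
\[
w(\hat a)=\tfrac12\bigl(|P_{\mathcal{Q}_i,b}|+|P_{\mathcal{Q}_i,c}|-|P_{\mathcal{Q}_i,a}|\bigr).
\]
This weight is nonnegative exactly by the assumed triangle inequalities, which is why that hypothesis appears. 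The boundary sets at $u$ are pairwise disjoint, so these local modifications commute. The resulting graph $\widehat{G_u}$ has $|E_u|$ edges and contains $E_u\cap E_{\mathcal{Q}}$, which gives property 1; and $\widehat{M_u}:=M(\widehat{G_u})$ is graphic, which gives property 3.

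For property 2, we map a cycle $D$ of $G_u$ with $|D\cap S_i|\le 1$ for all $i$ to $\widehat D$ as follows.
\begin{itemize}
\item Every non-boundary edge is kept.
\item A $2$-sum edge $s$ is replaced by $\hat s$.
\item If $D$ uses the triangle edge $a=yz$, it is replaced by the path $y\,t_i\,z$, consisting of $\hat b,\hat c$.
\item If $D$ avoids $S_i$, nothing changes, and $t_i$ is not visited.
\end{itemize}
The result is a simple cycle in $\widehat{G_u}$, since $t_i$ is used at most once.

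Conversely, any cycle of $\widehat{G_u}$ passes through $t_i$ at most once, using exactly two of its three edges, say $\hat b,\hat c$. Contracting this pair back to $a$ gives a cycle of $G_u$ meeting $S_i$ only in $a$. The two maps are inverse to each other. They also fix $E_u\cap E_{\mathcal{Q}}$ pointwise, which gives the first equation of property 2.

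For the weights, $|D^+|$ equals the number of non-boundary edges of $D$ plus $\sum |P_{\mathcal{Q}_i,s}|$ over the used boundary elements $s$. This uses that the completions lie in the disjoint sets $E_{\mathcal{Q}_i}\setminus S_i$, so the union in the definition of $D^+$ is disjoint. On the other side, $w(\widehat D)$ is the same quantity, because $w(\hat b)+w(\hat c)=|P_{\mathcal{Q}_i,a}|$.

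The main point to check carefully is the case of a triangle at $u$ that is not a genuine triangle of a simple graph, for example when parallel edges are present. We must make sure that the $\Delta$--$Y$ step preserves the cycle correspondence even if $x,y,z$ are not distinct. But $S_i$ is a $3$-element circuit, so it is a genuine triangle, and the vertices $x,y,z$ are distinct. Everything else is bookkeeping.
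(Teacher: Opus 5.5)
Your proposal is correct and follows essentially the same route as the paper. Both arguments weight the retained elements by $1$ and each $2$-sum edge by $|P_{\mathcal{Q}_i,s}|$, and apply a $\Delta$--$Y$ transformation at each selected $3$-sum triangle, with weights made nonnegative by the triangle inequalities. Both then use the cycle bijection that swaps the single used triangle edge for the two-edge path through the new vertex. The only cosmetic difference is that you relabel each $2$-sum edge $s$ as a fresh $\hat s$, whereas the paper keeps $s$ itself, which changes none of the three properties.
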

\begin{proof}
    Let $G_u=(V_u,E_u)$ be the underlying graph of the graphic matroid $M_u$. We construct a weighted graph $\widehat{G_u}$ by retaining every edge in $E_u\cap E_{\mathcal{Q}}$ with weight $1$ and modifying the selected boundary sets as follows:
    \begin{enumerate}
        \item For every selected component $\mathcal{Q}_i$ attached to $u$ through a $2$-sum with boundary set $S_i=\{s\}$, we retain the edge $s$ and assign it weight
        \[
            w(s):=|P_{\mathcal{Q}_i,s}|.
        \]
        \item For every selected component $\mathcal{Q}_i$ attached to $u$ through a $3$-sum with boundary set $S_i=\{a,b,c\}$, we observe that $a$, $b$, and $c$ form a triangle in $G_u$. Let $v_1$, $v_2$, and $v_3$ be the three vertices of this triangle, where $a=\{v_1,v_2\}$, $b=\{v_2,v_3\}$, and $c=\{v_3,v_1\}$. We remove these three edges in $S_i$, introduce a new vertex $t$, and add the following three new edges
        \[
            x=\{v_1,t\}, \qquad y=\{v_2,t\}, \qquad \text{and} \qquad z=\{v_3,t\}.
        \]
        This $\Delta$ - $Y$ transformation replaces a triangle with a three-edge star. We assign the new edges weight
        $$\begin{aligned}
            w(x) &:= \frac{|P_{\mathcal{Q}_i,a}|+|P_{\mathcal{Q}_i,c}|-|P_{\mathcal{Q}_i,b}|}{2}, \\
            w(y) &:= \frac{|P_{\mathcal{Q}_i,a}|+|P_{\mathcal{Q}_i,b}|-|P_{\mathcal{Q}_i,c}|}{2}, \qquad \text{and} \\
            w(z) &:= \frac{|P_{\mathcal{Q}_i,b}|+|P_{\mathcal{Q}_i,c}|-|P_{\mathcal{Q}_i,a}|}{2},
        \end{aligned}$$
        which are all nonnegative by the assumption.
    \end{enumerate}
    We show an example in \Cref{figure:absorption-graphic}.

\begin{figure}[ht]
    \centering
    \includegraphics[width=0.75\linewidth]{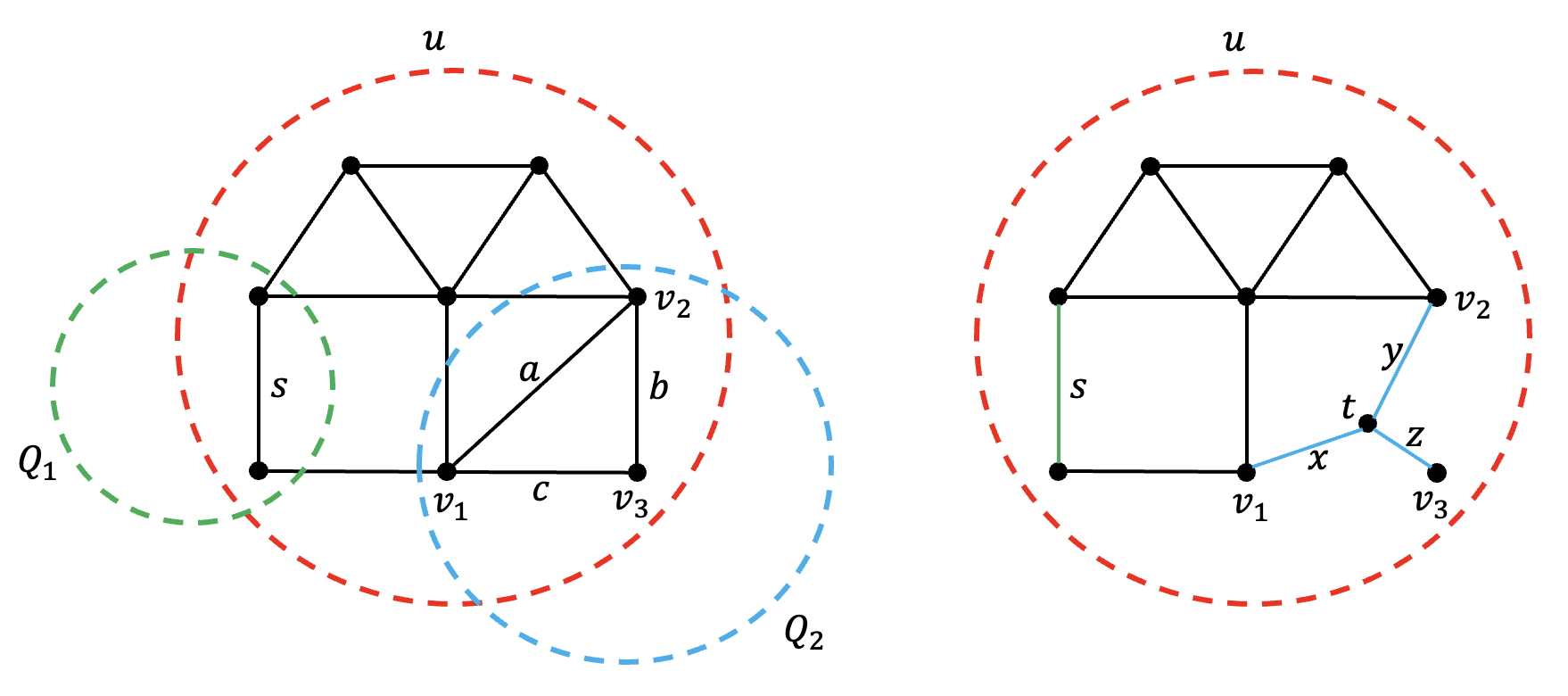}
    \caption{The left picture shows a node $u\in V(\mathcal{T})$ together with two components $\mathcal{Q}_1$ and $\mathcal{Q}_2$ attached to $u$. Here, $\mathcal{Q}_1$ is attached to $u$ via a $2$-sum and $\mathcal{Q}_2$ is attached to $u$ via a $3$-sum. The weighted graph $\widehat{G_u}$ is shown in the right picture, where every black edge has weight $1$.
    }
    \label{figure:absorption-graphic}
\end{figure}

    Let $\widehat{E_u}$ be the edge set of the resulting graph $\widehat{G_u}$, and $\widehat{M_u}:=M(\widehat{G_u})$ be the corresponding graphic matroid. We observe that every edge in $E_u \cap E_{\mathcal{Q}}$ is preserved, elements in boundary sets corresponding to $2$-sums are preserved, and elements in boundary sets corresponding to $3$-sums are replaced with exactly three new elements. Thus,
    \[
        E_u \cap E_{\mathcal{Q}} \subseteq \widehat{E_u}
        \qquad\text{and}\qquad
        |\widehat{E_u}|=|E_u|.
    \]
    This proves the first property in \Cref{lemma:matroid-absorption}, and $\widehat{M_u}$ is graphic by construction.

    We next establish the circuit bijection. Let $D$ be a circuit of $M_u$ satisfying $|D\cap S_i|\leq 1$ for every $i\in [k]$, which corresponds to a cycle in graph $G_u$. For every selected boundary set $S_i$ of size three that meets $D$, we replace the unique edge in $D\cap S_i$ with the two-edge path in the corresponding new star joining the same endpoints. All other edges of $D$ remain unchanged. We denote the resulting edge set by $\widehat{D}$. Since $D$ uses at most one edge of every selected boundary set of size three, each replacement introduces a distinct new vertex along the cycle. Therefore, $\widehat{D}$ is a cycle in $\widehat{G_u}$.

    Conversely, a cycle $\widehat{D}$ in $\widehat{G_u}$ uses either zero or two edges of each new star. We replace each such two-edge path with the edge of the corresponding triangle joining its endpoints. This recovers a cycle $D$ in $G_u$ satisfying $|D\cap S_i|\leq 1$ for every $i\in [k]$. We observe that these operations are inverses and leave all edges in $E_u\cap E_{\mathcal{Q}}$ unchanged, establishing the required bijection.

    Finally, we verify that the weight of $\widehat{D}$ equals the size of the expansion $D^+$. We observe that for every selected boundary set $S_i=\{a,b,c\}$ of size three, the assigned weights satisfy
    \[
        w(x)+w(y) = |P_{\mathcal{Q}_i,a}|, \qquad w(y)+w(z) = |P_{\mathcal{Q}_i,b}|, \qquad \text{and} \qquad w(z)+w(x)=|P_{\mathcal{Q}_i,c}|.
    \]
    Together with the weights assigned to the retained edges, these imply that
    \[
        w(\widehat{D}) = |D\cap E_{\mathcal{Q}}| + \sum_{i=1}^{k} \left( \sum_{s\in D\cap S_i} |P_{\mathcal{Q}_i,s}| \right) = |D^+|,
    \]
    where the last equality holds since the expansion uses at most one completion from each attached component, and these completions are pairwise disjoint and also disjoint from $E_u\cap E_{\mathcal{Q}}$.
\end{proof}

\subsection{Cographic Matroids}\label{subsection:cographic-absorb}

We now prove the correctness of \Cref{lemma:matroid-absorption} when $M_u$ is a cographic matroid.

\begin{lemma}\label{lemma:matroid-absorption-cographic}
    \Cref{lemma:matroid-absorption} holds when $M_u$ is a cographic matroid but not graphic.
\end{lemma}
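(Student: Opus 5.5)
We mirror the graphic case (\Cref{lemma:matroid-absorption-graphic}), working in the underlying graph $G_u=(V_u,E_u)$ with $M_u=M^*(G_u)$, where circuits of $M_u$ are bonds of $G_u$. We construct a weighted graph $\widehat{G_u}$ and set $\widehat{M_u}:=M^*(\widehat{G_u})$, which is cographic by construction. Edges of $E_u\cap E_{\mathcal{Q}}$ are retained with weight $1$. For a $2$-sum boundary $S_i=\{s\}$ we retain $s$ with weight $w(s):=|P_{\mathcal{Q}_i,s}|$. The bijection for these edges is the identity, and the $2$-sum edges contribute exactly $|P_{\mathcal{Q}_i,s}|$ whenever $s\in D$.

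For a $3$-sum boundary $S_i=\{a,b,c\}$, cleanliness gives a degree-three vertex $t\in V(G_u)$ whose incident edges are exactly $a,b,c$. We perform the dual transformation, a $Y$-$\Delta$ transformation. Write $a=\{t,v_1\}$, $b=\{t,v_2\}$, $c=\{t,v_3\}$. We delete $t$ and add a triangle $x=\{v_2,v_3\}$, $y=\{v_1,v_3\}$, $z=\{v_1,v_2\}$, with weights
\[
w(x)=\tfrac{|P_{\mathcal{Q}_i,b}|+|P_{\mathcal{Q}_i,c}|-|P_{\mathcal{Q}_i,a}|}{2},
\]
and the symmetric expressions for $w(y)$ and $w(z)$. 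These weights are nonnegative by the triangle inequality hypothesis, and they give $w(y)+w(z)=|P_{\mathcal{Q}_i,a}|$ and its analogues. The edge count is preserved, since three edges are replaced by three, so $|\widehat{E_u}|=|E_u|$ and $E_u\cap E_{\mathcal{Q}}\subseteq\widehat{E_u}$.

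The key step is the bond bijection. Let $D=\delta_{G_u}(X)$ be a bond with $|D\cap S_i|\le 1$, and take $t\notin X$ by complementing if needed. Then $|D\cap\{a,b,c\}|$ equals the number of $v_j$ in $X$. This count is $0$ or $1$, so at most one neighbour of $t$ lies in $X$. If none does, we keep $X$ in $\widehat{G_u}$ and the cut avoids the triangle. If exactly $v_1$ does, so that $D\cap S_i=\{a\}$, then in $\widehat{G_u}$ the set $X$ cuts exactly $\{y,z\}$, which is the pair opposite $x$. We set $\widehat{D}:=\delta_{\widehat{G_u}}(X)$.

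Conversely, every cut of $\widehat{G_u}$ meets each triangle in $0$ or $2$ edges, and those two edges determine which single $v_j$ is separated. Reinserting $t$ on the side with two of its neighbours recovers a cut of $G_u$ meeting $S_i$ in at most one element. The two maps are inverse and fix $E_u\cap E_{\mathcal{Q}}$.

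We must also show that bonds map to bonds, i.e.\ that minimality is preserved. We argue this through connectivity of both shores. Removing $t$ from the side that contains at least two of its neighbours keeps that side connected, since $t$'s neighbours there are joined via triangle edges. Symmetrically, adding $t$ back to that side keeps it connected. Alternatively, we can invoke that $Y$-$\Delta$ and $\Delta$-$Y$ are dual operations and reduce to the graphic argument in the planar-dual sense. The direct cut argument is more robust, because $G_u$ need not be planar, so we plan to use it.

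The weight identity then follows exactly as in the graphic case: $w(\widehat{D})=|D\cap E_{\mathcal{Q}}|+\sum_i\sum_{s\in D\cap S_i}|P_{\mathcal{Q}_i,s}|=|D^+|$. We expect the main obstacle to be the bond-minimality verification and the requirement that the elements of the various $S_i$ do not share the vertex $t$. The latter holds because distinct boundary sets are disjoint and each $t$ has degree three, so its star is exactly one $S_i$. Edge cases where $v_j$ coincide, or where parallel edges appear, must be checked so that the new triangle edges are well defined. Parallel edges are acceptable in $\widehat{G_u}$.
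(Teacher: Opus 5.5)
Your proposal is correct and uses the paper's construction:
\begin{itemize}
\item the clean degree-three vertex $t$;
\item the $Y$--$\Delta$ replacement with the half-sum weights (your labelling of $x,y,z$ differs from the paper's but is consistent, since the two triangle edges at $v_1$ carry total weight $|P_{\mathcal{Q}_i,a}|$);
\item the inverse map that reinserts each removed vertex on the side containing two of its neighbours;
\item the same weight identity.
\end{itemize}

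The one real difference is how you show that bonds map to bonds. The paper argues directly from inclusion-minimality. Suppose $\widehat{D}$ properly contained a nonempty cut $\widehat{D'}$. Since $\widehat{D'}\subseteq\widehat{D}$ and every cut meets a triangle in $0$ or $2$ edges, $\widehat{D'}$ uses on each new triangle either nothing or exactly the pair used by $\widehat{D}$. Reinserting the removed vertices then gives a nonempty proper subcut of $D$; the converse direction is symmetric.

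Your shore-connectivity argument also works, but it rests on the characterization ``$\delta(X)$ is a bond iff both shores induce connected subgraphs'', which holds only in a connected graph. You should therefore do one of the following:
\begin{itemize}
\item take $G_u$ connected without loss of generality, by identifying one non-removed vertex from each component; this preserves $M^*(G_u)$ and all degree-three vertices $t$;
\item or argue componentwise.
\end{itemize}
You also need to check that no shore becomes empty after deleting removed vertices. A connected shore made only of removed vertices would be a single $t$, forcing $D=S_i$. The paper's subcut argument needs neither connectivity nor this check, so it is slightly more self-contained. Yours gives a more visual picture of why the $Y$--$\Delta$ move preserves bonds.

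Two small points close your open edge cases. First, the vertices $v_1,v_2,v_3$ are distinct and no edge of $S_i$ is a loop. Otherwise some edges of $S_i$ would form a cycle of $G_u$, that is, a cocircuit of $M_u=M^*(G_u)$ contained in $S_i$, which the definition of a $3$-sum forbids. This is exactly how the paper guarantees the triangle is well defined.

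Second, distinct removed vertices are pairwise nonadjacent, because the boundary sets at $u$ are disjoint. Hence all their neighbours are retained vertices. This is what lets you handle several $3$-sum boundaries at once in the shortcut argument.
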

\begin{proof}
    Let $G_u=(V_u, E_u)$ be the underlying graph of the cographic matroid $M_u$, so that $M_u=M^*(G_u)$. We construct a weighted graph $\widehat{G_u}$ by retaining every edge in $E_u\cap E_{\mathcal{Q}}$ with weight $1$ and modifying the selected boundary sets as follows:
    \begin{enumerate}
        \item For every selected component $\mathcal{Q}_i$ attached to $u$ through a $2$-sum with boundary set $S_i=\{s\}$, we retain the edge $s$ and assign it weight
        \[
            w(s):=|P_{\mathcal{Q}_i,s}|.
        \]
        \item For every selected component $\mathcal{Q}_i$ attached to $u$ through a $3$-sum with boundary set $S_i=\{a,b,c\}$, we observe that $a$, $b$, and $c$ form a cut in $G_u$ and the clean property ensures that $S_i$ consists exactly of the three edges incident to a degree-three vertex $t$. Let $v_1$, $v_2$, and $v_3$ be the other endpoints of edges $a$, $b$, and $c$, respectively. We note that $v_1$, $v_2$, and $v_3$ are distinct. Otherwise, two edges of $S_i$ would form a cycle in $G_u$, and hence a cocircuit of $M_u$ contained in $S_i$, contradicting \Cref{definition:matroid-sum}. We remove $t$ together with its three incident edges, and add the new edges
        \[
            x=\{v_1, v_3\}, \qquad y=\{v_1,v_2\}, \qquad \text{and} \qquad z=\{v_2,v_3\}.
        \]
        This $Y$ - $\Delta$ transformation replaces the three-edge star with a triangle. We assign the new edges weight
        $$\begin{aligned}
            w(x) &:= \frac{|P_{\mathcal{Q}_i,a}|+|P_{\mathcal{Q}_i,c}|-|P_{\mathcal{Q}_i,b}|}{2}, \\
            w(y) &:= \frac{|P_{\mathcal{Q}_i,a}|+|P_{\mathcal{Q}_i,b}|-|P_{\mathcal{Q}_i,c}|}{2}, \qquad \text{and} \\
            w(z) &:= \frac{|P_{\mathcal{Q}_i,b}|+|P_{\mathcal{Q}_i,c}|-|P_{\mathcal{Q}_i,a}|}{2},
        \end{aligned}$$
        which are all nonnegative by the assumption.
    \end{enumerate}
    We show an example in \Cref{figure:absorption-cographic}.

\begin{figure}[ht]
    \centering
    \includegraphics[width=0.75\linewidth]{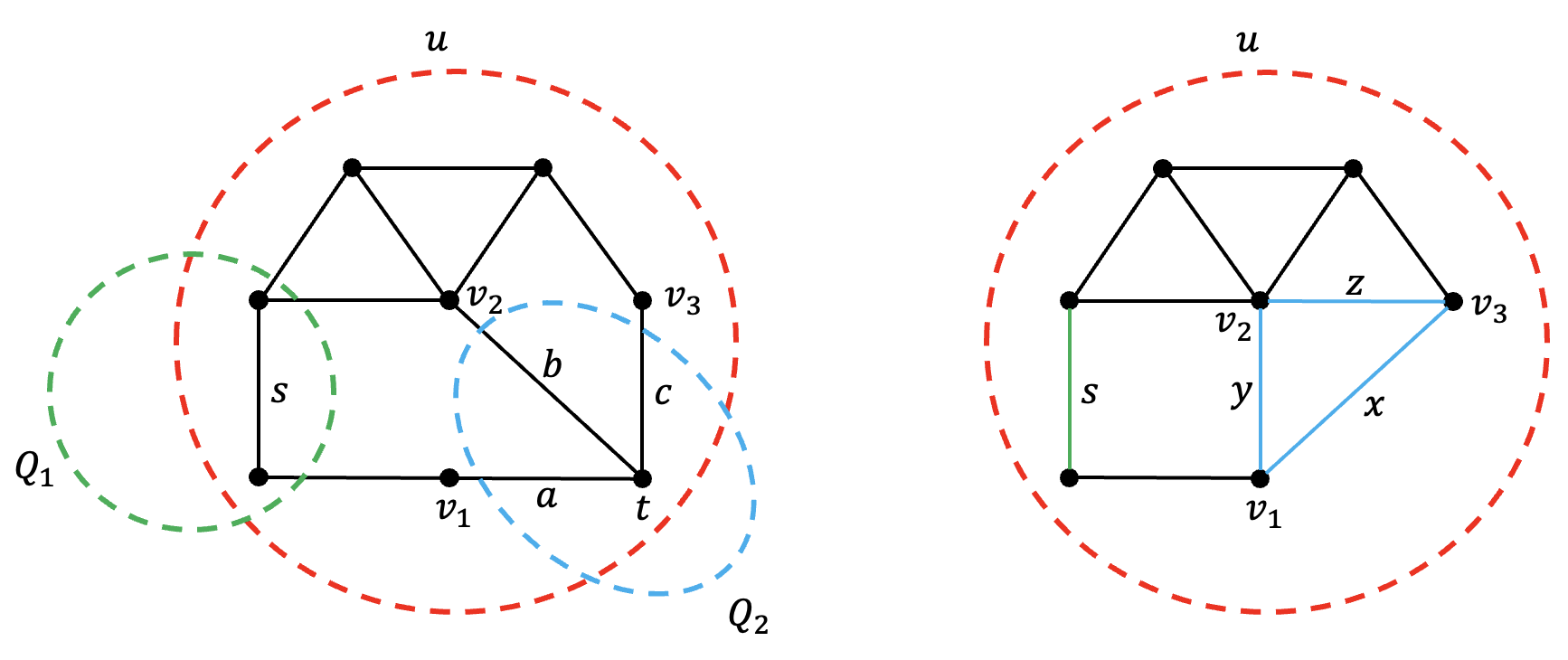}
    \caption{The left picture shows a node $u\in V(\mathcal{T})$ together with two components $\mathcal{Q}_1$ and $\mathcal{Q}_2$ attached to $u$. Here, $\mathcal{Q}_1$ is attached to $u$ via a $2$-sum and $\mathcal{Q}_2$ is attached to $u$ via a $3$-sum. The edges $a, b, c$ are in the ground set of $M_{\mathcal{Q}_2}$. The weighted graph $\widehat{G_u}$ is shown in the right picture, where every black edge has weight $1$.
    }
    \label{figure:absorption-cographic}
\end{figure}

    Let $\widehat{E_u}$ be the edge set of $\widehat{G_u}$, and $\widehat{M_u}:=M^*(\widehat{G_u})$ be the corresponding cographic matroid. We observe that every edge in $E_u\cap E_{\mathcal{Q}}$ is preserved, elements in boundary sets corresponding to $2$-sums are preserved, and elements in boundary sets corresponding to $3$-sums are replaced with exactly three new elements. Thus,
    \[
        E_u\cap E_{\mathcal{Q}}\subseteq \widehat{E_u} \qquad \text{and} \qquad |\widehat{E_u}|=|E_u|.
    \]
    This proves the first property in \Cref{lemma:matroid-absorption}, and $\widehat{M_u}$ is cographic by construction.

    We next establish the circuit bijection. Let $D$ be a circuit of $M_u$ satisfying $|D\cap S_i|\leq 1$ for every $i\in [k]$, which corresponds to a bond (inclusion-wise minimal nonempty cut) of $G_u$. Since $D$ is a cut of $G_u$, we can divide $V_u$ into two sets such that $D$ consists exactly of the edges joining them. We refer to these sets as the two sides of the cut.

    For each selected three-element boundary, the condition $|D\cap S_i|\leq 1$ implies that the removed vertex lies on the same side as at least two of its three neighbors. Removing these vertices leaves two sets of vertices in $\widehat{G_u}$, which defines a cut $\widehat{D}$. If $|D\cap S_i|=1$, exactly one neighbor of the removed vertex lies on the opposite side, and the two edges of the new triangle incident to that neighbor belong to $\widehat{D}$. If $|D\cap S_i|=0$, no edge of the new triangle belongs to $\widehat{D}$. All other edges of $D$ remain unchanged. In particular, $\widehat{D}$ is nonempty.

    We now show that $\widehat{D}$ is a circuit of $\widehat{M_u}$. Suppose that it contains a nonempty proper cut $\widehat{D'}$. We recall that removed vertices are pairwise nonadjacent and all their neighbors remain in $\widehat{G_u}$. Thus, we can restore each removed vertex on the side of $\widehat{D'}$ containing at least two of its neighbors. We denote the resulting cut of $G_u$ by $D'$. Since a cut uses either zero or two edges of a triangle, $\widehat{D'}$ uses either the same pair as $\widehat{D}$ on every new triangle or no edge. Thus, $D'$ is a nonempty proper subset of $D$, contradicting the inclusion-wise minimality of $D$. Therefore, $\widehat{D}$ is a circuit of $\widehat{M_u}$.

    Conversely, let $\widehat{D}$ be a circuit of $\widehat{M_u}$. We partition the vertices of $\widehat{G_u}$ into the two sides of this cut, and restore each removed vertex on the side containing at least two of its three neighbors. This produces a nonempty cut $D$ of $G_u$ satisfying $|D\cap S_i|\leq 1$ for every $i\in [k]$. Suppose that $D$ contains a nonempty proper cut $D'$. Since $D'\subseteq D$, it also meets every selected boundary set in at most one element. Removing the restored vertices from the two sides of $D'$ therefore produces a cut $\widehat{D'}$ in $\widehat{G_u}$. The same edge-replacement rule shows that $\widehat{D'}$ is a nonempty proper subset of $\widehat{D}$, contradicting its inclusion-wise minimality. Hence, $D$ is a circuit of $M_u$. We observe that these operations are inverses and leave all edges in $E_u\cap E_{\mathcal{Q}}$ unchanged, establishing the required bijection.

    Finally, we verify that the weight of $\widehat{D}$ equals the size of the expansion $D^+$. We observe that for every selected boundary set $S_i=\{a,b,c\}$ of size three, the assigned weights satisfy
    \[
        w(x)+w(y) = |P_{\mathcal{Q}_i,a}|, \qquad w(y)+w(z) = |P_{\mathcal{Q}_i,b}|, \qquad \text{and} \qquad w(z)+w(x)=|P_{\mathcal{Q}_i,c}|.
    \]
    Together with the weights assigned to the retained edges, these imply that
    \[
        w(\widehat{D}) = |D\cap E_{\mathcal{Q}}| + \sum_{i=1}^{k} \ \sum_{s\in D\cap S_i} |P_{\mathcal{Q}_i,s}| = |D^+|,
    \]
    where the last equality holds since the expansion uses at most one completion from each attached component, and these completions are pairwise disjoint and also disjoint from $E_u\cap E_{\mathcal{Q}}$.
\end{proof}

\subsection{Proof of \Cref{lemma:matroid-absorption}}\label{subsection:absorb}

We now finish the proof of \Cref{lemma:matroid-absorption} as follows.

\begin{proof}[Proof of \Cref{lemma:matroid-absorption}]
    Based on \Cref{lemma:matroid-absorption-graphic,lemma:matroid-absorption-cographic}, the statement holds when $M_u$ is graphic or cographic. It remains to consider the case where $M_u$ is isomorphic to $R_{10}$. Since $R_{10}$ contains no circuits of size three, every selected component is attached to $u$ through a $2$-sum. Thus, each selected boundary set is a singleton. Let $S_i:=\{s_i\}$ for every $i\in [k]$.

    We set $\widehat{M_u}:=M_u$. Moreover, we assign weight $1$ to every element in $E_u\cap E_{\mathcal{Q}}$ and set
    \[
        w(s_i):=|P_{\mathcal{Q}_i,s_i}| \qquad \text{for every } i\in[k].
    \]
    Since the ground set is unchanged, the first property holds, and $\widehat{M_u}$ remains isomorphic to $R_{10}$.

    We observe that every circuit $D$ of $M_u$ satisfies $|D\cap S_i|\leq 1$ because every $S_i$ is a singleton. Therefore, the correspondence $\widehat{D}=D$ gives the required circuit bijection and preserves all elements in $E_u\cap E_{\mathcal{Q}}$. Moreover,
    \[
        w(\widehat{D}) = |D\cap E_{\mathcal{Q}}| + \sum_{i=1}^{k} \ \sum_{s\in D\cap S_i} |P_{\mathcal{Q}_i,s}| = |D^+|,
    \]
    where the last equality follows because the retained elements and the completions used in $D^+$ are pairwise disjoint.
\end{proof}

\section{Random Deletion returns an Independent Set}\label{section:target-free}
In this section, we analyze the random deletion process to show that it returns an independent set with high probability and thereby prove \Cref{theorem:target-free}. In \Cref{section:target-free-event}, we root a clean unordered decomposition tree and capture good events by defining the notion of $R$-goodness of a node for a set $R\subseteq E$. In \Cref{section:target-free-absorbing}, we use matroid absorption to reduce the analysis at a node with $R$-good children to a weighted matroid that is graphic, cographic, or isomorphic to $R_{10}$. We establish the $R$-goodness using certificates: if no certificate survives, an induction from the leaves to the root shows that every node is $R$-good and hence the remaining set is independent. In \Cref{section:target-free-graphic,section:target-free-cographic}, we construct the certificates and bound their survival probabilities for graphic and cographic nodes, respectively. In \Cref{section:target-free-proof}, we handle $R_{10}$ nodes and combine these bounds to show that no certificate survives with high probability and finish the proof.

We emphasize that we use different certificate constructions for graphic and cographic nodes. At cographic nodes, we take circuit expansions as certificates. These circuits correspond to minimal cuts in the underlying graph, and the fact that the symmetric difference of two cuts is again a cut allows us to apply Karger's cut counting bound (\Cref{theorem:counting-cut}). However, at graphic nodes, the circuits correspond to cycles and the symmetric difference of two cycles need not be a cycle; it may be a disjoint union of cycles, to which Subramanian's cycle counting bound (\Cref{theorem:counting-cycle}) does not directly apply. We therefore use expansions of individual edges and suitably chosen paths as certificates.

\subsection{Rooted Decomposition and Good Events}\label{section:target-free-event}

Let $M=(E,\mathcal{I})$ be a regular matroid with $m\geq 2$ elements and minimum circuit size $g$, and let $R\subseteq E$ be a subset of the ground set. By \Cref{lemma:clean-unordered-decomposition-tree}, let $\mathcal{T}$ be a clean unordered decomposition tree of $M$ and we root it at an arbitrary node $r$. For every node $u\in V(\mathcal{T})$, let $\mathcal{Q}_u$ be the subtree consisting of $u$ and all its descendants. We emphasize the difference between $M_u$ and $M_{\mathcal{Q}_u}$: $M_u$ is the matroid associated with node $u$, whereas $M_{\mathcal{Q}_u}$ is obtained by combining the matroids at $u$ and all its descendants through the sums specified by the edges of $\mathcal{Q}_u$.
Let $S_u$ be the boundary set on the edge joining $u$ to its parent for every node $u\in V(\mathcal{T})\setminus \{r\}$ and $S_r:=\emptyset$. We observe that for every node $u\in V(\mathcal{T})$,
\[
    E_{\mathcal{Q}_u} \setminus S_u = E_{\mathcal{Q}_u}\cap E.
\]
Thus, within $M_{\mathcal{Q}_u}$, the elements outside $S_u$ belong to $E$, whereas the boundary elements in $S_u$ do not belong to $E$ and hence cannot belong to $R$.

We now define $R$-goodness for every node in the decomposition tree, which rules out surviving circuits inside its subtree and controls the completions that can connect the subtree to its parent.

\begin{definition}\label{definition:good-events}
    Let $M=(E,\mathcal{I})$ be a regular matroid with $m\geq 2$ elements and minimum circuit size $g$. Let $R\subseteq E$ be a subset of the ground set, and $\mathcal{T}$ be a clean unordered decomposition tree of $M$. For every node $u\in V(\mathcal{T})$, we say that $\mathcal{G}_u^R$ holds if the following properties hold:
    \begin{enumerate}
        \item No circuit of $M_{\mathcal{Q}_u}$ avoiding $S_u$ is contained in $R$.
        \item For every $s\in S_u$, every completion $P$ of $s$ in $\mathcal{Q}_u$ with $P\subseteq R$ has size $|P|<g/2$.
    \end{enumerate}
    We call the node $u$ to be \emph{$R$-good} if $\mathcal{G}_u^R$ holds.
\end{definition}

We observe that $M_{\mathcal{Q}_r}=M$ and $S_r=\emptyset$, which implies that the root is $R$-good if and only if $R$ is independent. The following lemma shows that a completion of size less than $g/2$, if one exists, is uniquely determined.

\begin{lemma}\label{lemma:unique-completion}
    Let $\mathcal{T}$ be a clean unordered decomposition tree of a regular matroid $M=(E,\mathcal{I})$ with minimum circuit size $g$. For every node $u\in V(\mathcal{T})$ and every $s\in S_u$, there is at most one completion $P$ of $s$ in $\mathcal{Q}_u$ satisfying $|P|<g/2$.
\end{lemma}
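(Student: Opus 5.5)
Suppose $P$ and $P'$ are two distinct completions of $s$ in $\mathcal{Q}_u$, both of size less than $g/2$. The plan is to produce a nonempty disjoint union of circuits of $M$ of size less than $g$, which is impossible since every circuit of $M$ has size at least $g$.

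Because $P\cup\{s\}$ and $P'\cup\{s\}$ are circuits of the binary matroid $M_{\mathcal{Q}_u}$, their symmetric difference
\[
    (P\cup\{s\})\SymDif(P'\cup\{s\}) = P\SymDif P'
\]
is a disjoint union of circuits of $M_{\mathcal{Q}_u}$. It is nonempty because $P\neq P'$. It avoids $S_u$: both $P$ and $P'$ lie in $E_{\mathcal{Q}_u}\setminus S_u$, and the element $s$ cancels. Its size satisfies $|P\SymDif P'|\leq |P|+|P'|<g$.

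Next I transfer these circuits to $M$. Take any circuit $C'\subseteq P\SymDif P'$ of $M_{\mathcal{Q}_u}$; it avoids $S_u$.
\begin{itemize}
\item If $u=r$, then $S_u=\emptyset$ and $M_{\mathcal{Q}_u}=M$, so $C'$ is already a circuit of $M$.
\item Otherwise, by \Cref{remark:subtree-sums}, $M$ is the $2$- or $3$-sum of $M_{\mathcal{Q}_u}$ with the matroid of the complementary subtree $\mathcal{T}\setminus\mathcal{Q}_u$, along $S_u$.
\end{itemize}
In the second case, apply \Cref{definition:sum-matroid} with $C_1=C'$ and $C_2=\emptyset$. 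The compatibility condition $C_1\cap S_u=\emptyset=C_2\cap S_u$ holds, so by \Cref{lemma:sum-circuit} $C'$ is a disjoint union of circuits of $M$. Since $C'$ is nonempty, it contains a circuit of $M$ of size at most $|C'|<g$, which contradicts the definition of $g$.

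The only step needing care is this transfer of circuits from $M_{\mathcal{Q}_u}$ avoiding $S_u$ to circuits of $M$, and \Cref{remark:subtree-sums} together with \Cref{lemma:sum-circuit} handle it directly. The $1$-sum case is covered as well, since then $S_u=\emptyset$ and the same argument applies.
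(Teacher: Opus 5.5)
Your proposal is correct and follows essentially the same argument as the paper: $P\SymDif P'$ is a nonempty disjoint union of circuits of $M_{\mathcal{Q}_u}$ that avoids $S_u$. \Cref{lemma:sum-circuit}, applied with the empty set on the complementary side, transfers it to $M$, which contradicts $|P\SymDif P'|<g$. Your separate root case is vacuous, since $S_r=\emptyset$ contains no $s$, but it does no harm.
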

\begin{proof}
    We fix a node $u\in V(\mathcal{T})$ and an element $s\in S_u$ in the boundary set. Suppose that there are two distinct completions $P$ and $P'$ of $s$ with $|P|, |P'|<g/2$. Since $M_{\mathcal{Q}_u}$ is binary, the symmetric difference
    \[
        (P\cup \{s\}) \SymDif (P' \cup \{s\}) = P \SymDif P'
    \]
    is a nonempty disjoint union of circuits of $M_{\mathcal{Q}_u}$. We observe that this set avoids $S_u$. According to \Cref{lemma:sum-circuit}, $P \SymDif P'$ is also a disjoint union of circuits of $M$. Consequently,
    \[
        |P\SymDif P'|\leq |P|+|P'|<g,
    \]
    which contradicts the definition of $g$.
\end{proof}

\subsection{Absorbing Child Components}\label{section:target-free-absorbing}

In this section, we use matroid absorption to replace the child components of each node with weighted elements. Then, we show that when all children are $R$-good, the parental $R$-good failure can be certified by a circuit in this weighted matroid.

We fix a node $u\in V(\mathcal{T})$ in the clean unordered decomposition tree. For every child $v$ attached through a $2$- or $3$-sum and every element $s\in S_v$ in the corresponding boundary set, we choose a minimum-cardinality completion $P_{\mathcal{Q}_v,s}$. We emphasize that such completions exist by \Cref{lemma:exist-completion} and that these choices are fixed before the random deletion process. By \Cref{lemma:absorb-triangle-inequality}, the chosen minimum-cardinality completions satisfy the triangle inequalities for every child boundary set of size three. We may therefore apply \Cref{lemma:matroid-absorption} to these child components using the chosen completions and denote the resulting weighted matroid by $\widehat{M_u}$.

The parent boundary $S_u$ is left unabsorbed, and its elements are preserved. We assign weight $0$ to these elements since they are not sampled in \Cref{algo:random-delete} and retain all other weights from the absorption construction. Let $w_u:\widehat{E_u}\rightarrow \R_{\geq 0}$ be the adjusted weights on elements in $\widehat{M_u}$. For every circuit $D$ of $M_u$ meeting each selected child boundary in at most one element, let $\widehat{D}$ be its corresponding circuit in $\widehat{M_u}$, and let $D^+$ be its expansion through the selected child components. According to \Cref{lemma:matroid-absorption},
\[
    \widehat D\cap S_u = D\cap S_u
    \qquad \text{and} \qquad
    w_u(\widehat{D})=|D^+\setminus S_u|.
\]
Thus, the weight counts exactly the elements of $E$ in the expansion. We emphasize that both $\widehat{M_u}$ and $w_u$ depend only on the decomposition and the chosen completions and are unrelated to the random deletion process. The following lemma explains how to use this construction to capture parental $R$-good failure when all children are $R$-good.

\begin{lemma}\label{lemma:target-free-absorp}
    Let $\mathcal{T}$ be a clean unordered decomposition tree of a regular matroid $M=(E,\mathcal{I})$ with minimum circuit size $g$. Let $R\subseteq E$ be a subset of the ground set and $u\in V(\mathcal{T})$ be a node in the tree. Suppose that $u$ is not $R$-good but all its children are $R$-good. Then, $\widehat{M_u}$ contains a circuit $\widehat{D}$ with corresponding circuit $D$ of $M_u$ satisfying the following:
    \begin{enumerate}
        \item $D$ meets every child boundary $S_v$ in at most one element,
        \item $|D\cap S_u|\leq 1$, $D^+\setminus S_u\subseteq R$, and $w_u(\widehat{D})\geq g/2$, and
        \item if $D\cap S_u=\emptyset$, then $w_u(\widehat{D})\geq g$.
    \end{enumerate}
\end{lemma}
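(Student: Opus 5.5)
The plan is to start from a witness of the failure of $\mathcal{G}_u^R$, which is a set $X\subseteq R$ with $X\cup Y$ a circuit of $M_{\mathcal{Q}_u}$. In the first case $Y=\emptyset$: the circuit avoids $S_u$ and is contained in $R$. In the second case $Y=\{s\}$ for some $s\in S_u$: $X$ is a completion of $s$ with $|X|\geq g/2$.

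Call the circuit $C:=X\cup Y$. Apply \Cref{lemma:circuit-projection} to the subtree $\mathcal{Q}_u$, viewed as a decomposition tree of $M_{\mathcal{Q}_u}$, which is valid by \Cref{remark:subtree-sums}. This decomposes $C$ into node projections and, for each child $v$, a projection $C_{\mathcal{Q}_v}$ that is empty or a circuit of $M_{\mathcal{Q}_v}$, and these meet $S_v$ in at most one element. Let $D:=C_u$ be the projection onto $u$. Property~1 and $|D\cap S_u|\leq 1$ follow directly from the boundary condition in \Cref{lemma:circuit-projection}, since $C$ meets $S_u$ only in $Y$.

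First, I show $D\neq\emptyset$. If $D=\emptyset$, connectivity of the nonempty projections puts $C$ entirely inside a single child subtree $\mathcal{Q}_v$, avoiding $S_v$. That would make $C$ a circuit of $M_{\mathcal{Q}_v}$ avoiding $S_v$ and contained in $R$, which contradicts $v$ being $R$-good (and $C\cap S_u$ would then be empty too). So $D$ is a circuit of $M_u$.

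Next, I show $D^+\setminus S_u\subseteq R$, where the completion substitution enters. For each child $v$ with $D\cap S_v=\{s\}$, the set $C_{\mathcal{Q}_v}\setminus\{s\}$ is a completion of $s$ in $\mathcal{Q}_v$. It lies in $R$ because it is a subset of $C\setminus Y\subseteq R$: the sets are disjoint across children and from $E_u$. Since $v$ is $R$-good, it has size $<g/2$. The chosen minimum completion $P_{\mathcal{Q}_v,s}$ also has size $<g/2$, so by \Cref{lemma:unique-completion} the two coincide. Hence $P_{\mathcal{Q}_v,s}\subseteq R$.

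Children with $D\cap S_v=\emptyset$ must have $C_{\mathcal{Q}_v}=\emptyset$. Otherwise $C_{\mathcal{Q}_v}$ would be a circuit avoiding $S_v$ and lying in $R$, which is again ruled out by goodness. Consequently $D^+$ equals $C$ exactly: the expansion replaces each boundary element by precisely the projection part. Thus $D^+\setminus S_u=X\subseteq R$.

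Finally, for the weights, \Cref{lemma:matroid-absorption} together with the weight-$0$ convention gives $w_u(\widehat D)=|D^+\setminus S_u|=|X|$. In the second case $|X|\geq g/2$. In the first case $X$ is a circuit of $M_{\mathcal{Q}_u}$ avoiding $S_u$, hence a circuit of $M$ via \Cref{lemma:sum-circuit}, so $|X|\geq g$. This gives properties~2 and~3, since $D\cap S_u=\emptyset$ holds exactly in the first case.

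The main obstacle I expect is the bookkeeping in identifying $D^+$ with $C$. I need the disjointness of the child ground sets, and I need to handle the fact that the definition of expansion in \Cref{lemma:matroid-absorption} is relative to the selected components (all children attached via $2$/$3$-sums). Children attached via $1$-sums are not absorbed, but by the same goodness argument their projections are empty, since a $1$-sum child has $S_v=\emptyset$.
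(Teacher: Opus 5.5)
Your proposal is correct and follows essentially the same route as the paper: you project the witnessing circuit $C$ of $M_{\mathcal{Q}_u}$ onto $u$ and the child subtrees via \Cref{lemma:circuit-projection}, use goodness of the children together with \Cref{lemma:unique-completion} to show that each nonempty child projection is exactly the fixed minimum completion (so $D^+=C$), and read off the weight bounds from $w_u(\widehat{D})=|D^+\setminus S_u|$. One minor wording point: \Cref{lemma:sum-circuit} directly gives only that $X$ is a nonempty disjoint union of circuits of $M$ (the paper phrases it as ``contains a circuit of $M$''), and that is all you need for $|X|\geq g$.
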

\begin{proof}
    Since $u$ is not $R$-good and $\mathcal{G}_u^R$ fails, there exists a circuit $C$ of $M_{\mathcal{Q}_u}$ witnessing the failure. According to \Cref{definition:good-events}, either $C$ avoids $S_u$ and is contained in $R$, or $C$ consists of one element of $S_u$ and a completion contained in $R$ of size at least $g/2$. By \Cref{lemma:circuit-projection}, $C$ has a projection $D$ onto $u$ and a projection onto each child subtree. The projection $D$ is nonempty: otherwise, $C$ would lie entirely in one child subtree and avoid its boundary, contradicting that the child is $R$-good. Moreover, \Cref{lemma:circuit-projection} ensures that $|D\cap S_v|\leq 1$ for every child $v$ of $u$.
    
    We consider a nonempty projection onto a child subtree $\mathcal{Q}_v$. It cannot avoid $S_v$, since otherwise it would be a circuit contained in $R$ in $M_{\mathcal{Q}_v}$, contradicting $\mathcal{G}_v^R$. It therefore meets $S_v$ in a single element $s$, and its remaining elements form a completion $P$ of $s$ contained in $R$. Because $v$ is $R$-good,
    \[
        |P_{\mathcal{Q}_v,s}|\leq |P|<g/2.
    \]
    \Cref{lemma:unique-completion} implies that $P=P_{\mathcal{Q}_v,s}$. Thus, every nonempty child projection uses exactly the selected minimum-cardinality completion. We remark that children attached through $1$-sums have empty projections.

    Consequently, expanding $D$ using the fixed completions recovers $C$, i.e., $D^+=C$. Moreover, $D\cap S_u=C\cap S_u$, which implies that $|D\cap S_u|\leq 1$ and $D^+\setminus S_u\subseteq R$. The weight identity gives
    \[
        w_u(\widehat{D})=|C\setminus S_u|.
    \]
    If $|C\cap S_u|\geq 1$, this quantity is at least $g/2$ by the choice of $C$. If $|C\cap S_u|=0$, applying \Cref{lemma:sum-circuit} with the empty set outside $\mathcal{Q}_u$ shows that $C$ contains a circuit of $M$. Hence, $|C|\geq g$.
\end{proof}

\paragraph{Proof Outline for Theorem \ref{theorem:target-free}.} \Cref{lemma:target-free-absorp} certifies  failure of $R$-goodness at a node with all children being $R$-good through a surviving circuit expansion of sufficiently large weight. To bound the probability of such a failure, we will identify fixed subsets of these expansions whose survival is unlikely. For every node $u\in V(\mathcal{T})$, the following subsections will specify a family $\mathcal{W}_u$ of nonempty subsets of $E_{\mathcal{Q}_u}\setminus S_u$. We call these sets \emph{certificates} at $u$ and denote their collection by $\mathcal{W}_u$. Each certificate consists of elements of $M$ inside the subtree $\mathcal{Q}_u$ and is chosen before random deletion. A certificate $W$ survives if all its elements remain, i.e., $W\subseteq R$. The certificates will be chosen to guarantee the following: if all children of $u$ are $R$-good and no certificate at $u$ survives, then $u$ is $R$-good. Consequently, if no certificate survives anywhere in the tree, induction from the leaves to the root shows that every node is $R$-good and hence $R$ is independent. When $R$ is the remaining set returned by \Cref{algo:random-delete}, every fixed certificate $W$ satisfies
\[
    \Pr[W\subseteq R]=m^{-40|W|/g}.
\]
We will count certificates according to their sizes and apply a union bound to show that none survive with high probability.

\subsection{Certificates at Graphic Nodes}\label{section:target-free-graphic}

In this section, we construct certificates at graphic nodes and bound their survival probabilities. 
Let $u\in V(\mathcal{T})$ be a graphic node, with the graphic matroid $M_u=M(G_u)$ associated with $u$ and the underlying graph $G_u=(V_u,E_u)$. We recall that for every child $v$ attached through a $2$- or $3$-sum and every element $s\in S_v$ in the corresponding boundary set, $P_{\mathcal{Q}_v,s}$ is the completion of $s$ in $\mathcal{Q}_v$ with minimum cardinality. Let $S_u$ be the parent boundary set. As discussed in \Cref{section:target-free-absorbing}, absorbing the child components produces a weighted graphic matroid $\widehat{M_u}=M(\widehat{G_u})$. We emphasize that graphs $G_u$ and $\widehat{G_u}$ may have parallel edges. 

Let $H_u$ be the graph obtained from $G_u$ by deleting the parent boundary edges $S_u$. We observe that every edge of $H_u$ is either an element of $E$ or an element of exactly one child boundary set. For every $e\in E(H_u)$, we define
\[
    A_e:=
    \begin{cases}
        \{e\} & \text{if }e\in E_u\cap E,\\
        P_{\mathcal Q_v,e} & \text{if }e\in S_v
        \text{ for a child }v\text{ of }u,
    \end{cases}
    \qquad \text{and} \qquad
    \ell_u(e):=|A_e|.
\]
We note that $A_e\subseteq E_{\mathcal{Q}_u}\setminus S_u$ and $\ell_u(e)>0$ for every $e\in E(H_u)$.

We call an edge set $F\subseteq E(H_u)$ \emph{admissible} if $|F\cap S_v|\leq 1$ for every child $v$ of $u$. We further define its expansion and length by
\[
    F^+:=\bigcup_{e\in F}A_e
    \qquad\text{and}\qquad
    \ell_u(F):=\sum_{e\in F}\ell_u(e).
\]
We emphasize that $F^+$ uses at most one completion from each child, and distinct child components contain disjoint subsets of $E$, which implies that $|F^+|=\ell_u(F)$. We call a path \emph{simple} if it does not contain repeated vertices. The following lemma shows that an admissible simple path of length less than $g/2$ is uniquely determined by its endpoints. The lemma is trivial if $u$ is the root node of $\mathcal{T}$ and has no children, and it should be seen as a generalization of this special case.

\begin{lemma}\label{lemma:graphic-short-path}
    For every pair of distinct vertices $x,y \in V(H_u)$, there is at most one admissible simple $x$-$y$ path $F$ in $H_u$ with length $\ell_u(F)<g/2$.
\end{lemma}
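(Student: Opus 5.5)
Suppose, for contradiction, that $F\neq F'$ are two admissible simple $x$-$y$ paths in $H_u$ with $\ell_u(F),\ell_u(F')<g/2$. The plan is to turn $F\SymDif F'$ into a nonempty disjoint union of circuits of $M$ of size less than $g$, in the spirit of \Cref{lemma:unique-completion}.

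\textbf{Step 1: a short cycle in $G_u$.} Since $F$ and $F'$ are distinct paths with the same endpoints, $F\SymDif F'$ is a nonempty edge set with every vertex of even degree. Hence it is a nonempty disjoint union of cycles of $G_u$, that is, a nonempty cycle (union of circuits) of $M_u$. Let $Z:=F\SymDif F'$, which avoids $S_u$.

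\textbf{Step 2: lift to $M_{\mathcal{Q}_u}$.} For each child $v$ and each $s\in Z\cap S_v$, take the symmetric difference with the circuit $P_{\mathcal{Q}_v,s}\cup\{s\}$ of $M_{\mathcal{Q}_v}$. Admissibility of $F$ and $F'$ gives $|Z\cap S_v|\le 2$. Let $Z_v:=\SymDif_{s\in Z\cap S_v}(P_{\mathcal{Q}_v,s}\cup\{s\})$. This is a disjoint union of circuits of $M_{\mathcal{Q}_v}$ with $Z_v\cap S_v=Z\cap S_v$. By \Cref{lemma:sum-circuit}, applied successively over the children (and with the empty cycle on every other component), the set
\[
  X:=Z\SymDif \SymDif_v Z_v=(Z\cap E)\SymDif \SymDif_v \SymDif_{s\in Z\cap S_v}P_{\mathcal{Q}_v,s}
\]
is a disjoint union of circuits of $M$. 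Its size satisfies $|X|\le \sum_{e\in Z}\ell_u(e)\le \ell_u(F)+\ell_u(F')<g$. So if $X\neq\emptyset$, it contains a circuit of $M$ of size less than $g$, a contradiction.

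\textbf{Step 3: rule out $X=\emptyset$.} This is the main obstacle, since two completions in the same child might cancel. The only possible cancellation is inside a child $v$ with $|Z\cap S_v|=2$, meaning $F$ uses some $s\in S_v$ and $F'$ uses a different $t\in S_v$. This forces $|S_v|=3$.

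In that case, use the $3$-circuit $S_v=\{s,t,r\}$ of $M_{\mathcal{Q}_v}$. The set $(P_{\mathcal{Q}_v,s}\cup\{s\})\SymDif(P_{\mathcal{Q}_v,t}\cup\{t\})\SymDif S_v$ equals $(P_{\mathcal{Q}_v,s}\SymDif P_{\mathcal{Q}_v,t})\cup\{r\}$, which is a union of circuits containing $r$. Therefore $P_{\mathcal{Q}_v,s}\neq P_{\mathcal{Q}_v,t}$, and indeed their symmetric difference is nonempty, since otherwise $\{r\}$ alone would be a union of circuits, making $r$ a loop; but $S_v$ is a circuit.

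So the contributions from distinct children, and from $Z\cap E$, live in disjoint parts of $E$, and each child contribution is nonempty. The only remaining case is $Z\cap E=\emptyset$ with every child contributing zero edges. That is impossible because $Z\neq\emptyset$: a single edge $s\in Z\cap S_v$ contributes the nonempty set $P_{\mathcal{Q}_v,s}$, and a pair contributes the nonempty set $P_{\mathcal{Q}_v,s}\SymDif P_{\mathcal{Q}_v,t}$. Hence $X\neq\emptyset$, which completes the contradiction.
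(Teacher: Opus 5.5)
Your proof is correct and is essentially the paper's argument: your set $X$ is exactly $F^+\SymDif (F')^+$, obtained by combining the even-degree set $F\SymDif F'$ with the completion circuits via \Cref{lemma:sum-circuit}, and nonemptiness rests on the fact that two distinct elements of a three-element child boundary cannot have identical completions. The only cosmetic difference is that you get this by showing the third element $r$ would become a loop, whereas the paper shows $\{e_1,e_2\}\subsetneq S_v$ would be dependent; both contradict $S_v$ being a circuit of $M_{\mathcal{Q}_v}$.
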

\begin{proof}
    Suppose that there are two distinct admissible simple $x$-$y$ paths $F_1$ and $F_2$ with length less than $g/2$. We now show that $F_1^+\SymDif F_2^+$ is a nonempty disjoint union of circuits of $M$.

    We observe that every vertex has even degree in $F_1 \SymDif F_2$, which implies that this edge set is a disjoint union of cycles in $G_u$, and hence a disjoint union of circuits of $M_u$. We recall that the expansion of each admissible path can be expressed as its symmetric difference with the completion circuits $P_{\mathcal{Q}_v,e}\cup \{e\}$ corresponding to its child boundary elements $e\in S_v$, i.e., for every $i\in [2]$,
    \[
        F_i^+=\bigcup_{e\in F_i}A_e=F_i \SymDif \left(\SymDif_{v \text{ is a child of } u}\left(\SymDif_{e \in F_i\cap S_v}\left(P_{\mathcal{Q}_v,e}\cup \{e\}\right)\right)\right).
    \]
    Thus, in $F_1^+ \SymDif F_2^+$, completion circuits corresponding to boundary elements shared by the two paths occur twice and cancel. Consequently,
    \[
        F_1^+\SymDif F_2^+ = (F_1 \SymDif F_2) \SymDif \left(\SymDif_{v \text{ is a child of } u} \left(\SymDif_{e\in (F_1\SymDif F_2)\cap S_v} \left(P_{\mathcal{Q}_v,e}\cup \{e\}\right)\right)\right).
    \]

    For every child $v$ of $u$, we define
    \[
        B_v:=\SymDif_{e\in (F_1\SymDif F_2)\cap S_v} \left(P_{\mathcal{Q}_v,e}\cup \{e\}\right),
    \]
    which is a disjoint union of circuits of $M_{\mathcal{Q}_v}$. We observe that $B_v\cap S_v = (F_1\SymDif F_2)\cap S_v$. Thus, applying \Cref{lemma:sum-circuit} successively as we combine $M_u$ with its child components shows that $F_1^+\SymDif F_2^+$ is a disjoint union of circuits of $M_{\mathcal Q_u}$. Since this set avoids the parent boundary $S_u$, applying \Cref{lemma:sum-circuit} again with the empty set outside $\mathcal{Q}_u$ shows that $F_1^+ \SymDif F_2^+$ is a disjoint union of circuits of $M$.

    We next prove that this set is nonempty. Suppose that $F_1^+=F_2^+$. According to the definition of expansions, the two paths $F_1$ and $F_2$ must use the same edges in $E_u \cap E$. Moreover, they must meet the same child boundary sets, since every completion is nonempty and completions from distinct children lie in disjoint subsets of $E$. Since $F_1$ and $F_2$ are distinct, there exists a child $v$ at which the two paths use distinct boundary elements, i.e.,
    \[
        F_1 \cap S_v =\{e_1\}, \qquad F_2 \cap S_v=\{e_2\}, \qquad \text{and} \qquad e_1\neq e_2.
    \]
    Since $F_1^+=F_2^+$, we have $P_{\mathcal{Q}_v,e_1}=P_{\mathcal{Q}_v,e_2}$. Consequently,
    \[
        \left(P_{\mathcal{Q}_v,e_1}\cup \{e_1\}\right) \SymDif \left(P_{\mathcal{Q}_v,e_2}\cup \{e_2\}\right) = \{e_1,e_2\}
    \]
    is a disjoint union of circuits of the matroid $M_{\mathcal{Q}_v}$. This contradicts \Cref{remark:subtree-sums}, since $\{e_1,e_2\}\subsetneq S_v$ contains a circuit of $M_{\mathcal{Q}_v}$.
    
    Hence, $F_1^+\SymDif F_2^+$ is a nonempty disjoint union of circuits of $M$. It follows that
    \[
        g \leq |F_1^+\SymDif F_2^+| \leq |F_1^+|+|F_2^+| = \ell_u(F_1)+\ell_u(F_2) < g,
    \]
    which leads to a contradiction. Therefore, there exists at most one admissible simple $x$-$y$ path of length less than $g/2$.
\end{proof}

We now construct certificates at node $u$ as follows. For a first pass, we encourage readers to consider the special case where $u$ is the root node of $\mathcal{T}$ and has no children.

\begin{lemma}\label{lemma:graphic-certificates}
    Let $u\in V(\mathcal{T})$ be a graphic node and $m_u:=|E_u|$. Then, there exists a collection $\mathcal{W}_u$ of at most $3m_u^2$ sets satisfying:
    \begin{enumerate}
        \item Every set $W\in \mathcal{W}_u$ is a subset of $E_{\mathcal{Q}_u}\setminus S_u$ with $|W|\geq g/2$. 
        \item For every subset $R\subseteq E$, if $u$ is not $R$-good and all its children are $R$-good, then there exists $W\in \mathcal{W}_u$ with $W\subseteq R$.
    \end{enumerate}
    Furthermore, if $R$ is the remaining set returned by \Cref{algo:random-delete}, then 
    \[
        \Pr[\exists\ W\in \mathcal{W}_u \text{ with } W\subseteq R] \leq \frac{3m_u^2}{m^{20}}.
    \]
\end{lemma}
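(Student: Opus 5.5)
The probability bound is the easy part, so most of the work is constructing a small certificate family with the covering property. The certificates are built from the unique short admissible paths of \Cref{lemma:graphic-short-path}, each extended by one edge. Concretely, $\mathcal{W}_u$ consists of two types of sets. The first type is $A_e$ for every edge $e\in E(H_u)$ with $\ell_u(e)\geq g/2$. The second type is indexed by a vertex $x$ of $H_u$ and an edge $e=\{y,z\}\in E(H_u)$ with $y\neq x$. For such a pair, if an admissible simple $x$-$y$ path $F_{x,y}$ in $H_u$ with $\ell_u(F_{x,y})<g/2$ exists (it is unique by \Cref{lemma:graphic-short-path}), we add $W:=(F_{x,y}\cup\{e\})^+$, provided $F_{x,y}\cup\{e\}$ is admissible and $\ell_u(F_{x,y})+\ell_u(e)\geq g/2$.

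Since expansions of admissible sets satisfy $|F^+|=\ell_u(F)$, every certificate has size at least $g/2$ and lies in $E_{\mathcal{Q}_u}\setminus S_u$. Only non-isolated vertices matter, and there are at most $2m_u$ of them. Hence there are at most $2m_u\cdot 2m_u$ choices of $(x,e)$, counting the orientation of $e$, plus $m_u$ sets of the first type. This gives $O(m_u^2)$ certificates. To reach the stated constant $3m_u^2$, I would count more carefully: for instance, orient each edge $e$ only toward the endpoint that is not $x$, and note that the number of non-isolated vertices of $H_u$ is at most $m_u+1$ on each component. The probability bound then follows by a union bound, since $\Pr[W\subseteq R]=m^{-40|W|/g}\leq m^{-20}$.

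For the covering property, suppose $u$ is not $R$-good but all its children are. \Cref{lemma:target-free-absorp} gives a cycle $D$ of $G_u$ with $|D\cap S_u|\leq 1$ that meets each child boundary in at most one element, with $D^+\setminus S_u\subseteq R$ and $w_u(\widehat{D})\geq g/2$. Set $F:=D\setminus S_u\subseteq E(H_u)$. Then $F$ is admissible, $F^+=D^+\setminus S_u\subseteq R$, and $\ell_u(F)=w_u(\widehat D)$. Moreover, $F$ is either a simple path (if $|D\cap S_u|=1$, with $\ell_u(F)\geq g/2$) or a cycle (if $D\cap S_u=\emptyset$, with $\ell_u(F)\geq g$).

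Now traverse $F$ starting from an endpoint $x$ (or from any vertex, if $F$ is a cycle), and stop at the shortest prefix $F'$ with $\ell_u(F')\geq g/2$. Let $e=\{y,z\}$ be its last edge. If $F'=\{e\}$, then $A_e$ is a first-type certificate and $A_e\subseteq F^+\subseteq R$. Otherwise, $F'\setminus\{e\}$ is an admissible simple $x$-$y$ path of length less than $g/2$, so by \Cref{lemma:graphic-short-path} it equals $F_{x,y}$. Hence $(F_{x,y}\cup\{e\})^+=F'^+\subseteq R$ is a surviving certificate.

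The main obstacle is the cycle case. I must check that the prefix never wraps around, so that $F'\setminus\{e\}$ is a genuine simple path with $y\neq x$. This follows from $\ell_u(F)\geq g$: if the closing edge were needed, the rest of the cycle would have length less than $g/2$, forcing $\ell_u(e)>g/2$. In that case I would choose the starting vertex to be an endpoint of such a heavy edge, so that $e$ is the first edge and $A_e$ is itself a first-type certificate.
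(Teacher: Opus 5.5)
Your approach is essentially the paper's: heavy single edges $A_e$, plus expansions of admissible paths that are minimal for reaching length $g/2$, counted via the uniqueness of short admissible paths (\Cref{lemma:graphic-short-path}). The two routes differ only in how the paths are trimmed and counted.
\begin{itemize}
\item The paper trims \emph{both} ends. It encodes an oriented qualifying path by its directed first and last edges, and halves by orientation symmetry to get at most $2m_u^2$ second-type sets.
\item You trim only at the far end and index by the start vertex $x$ and the directed last edge $(y,z)$. This is why your naive count is $4m_u^2+m_u$.
\end{itemize}
Your suggested refinement does reach the constant, but it needs the observation that $x$ must lie in the same component $K$ of $H_u$ as $e$ for $F_{x,y}$ to exist. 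Then each directed edge $(y,z)$ of $K$ admits at most $|V(K)|-1\le |E(K)|$ choices of $x$. This gives at most $2\sum_K |E(K)|^2\le 2m_u^2$ second-type sets, and at most $2m_u^2+m_u\le 3m_u^2$ sets in total. Your treatment of the cycle case is correct. Wrap-around would in fact be harmless in your family, because you never require $F_{x,y}\cup\{e\}$ to be a simple path.

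There is one concrete slip in the definition of the second type. Your claim that every certificate has size at least $g/2$ uses $\ell_u(F_{x,y}\cup\{e\})=\ell_u(F_{x,y})+\ell_u(e)$, and this requires $e\notin F_{x,y}$. The admissibility and length conditions you impose do not exclude the case where $e$ is the last edge of $F_{x,y}$.

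For example, take $F_{x,y}=\{e\}$ with $e=\{x,y\}$, index the pair as $(x,(y,x))$, and suppose $g/4\le \ell_u(e)<g/2$. Then $F_{x,y}\cup\{e\}=\{e\}$ is admissible and $\ell_u(F_{x,y})+\ell_u(e)=2\ell_u(e)\ge g/2$, yet the certificate $W=A_e$ has $|W|<g/2$. So property 1 fails for this $W$, and so does the bound $\Pr[W\subseteq R]\le m^{-20}$ used in the union bound.

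The fix costs nothing: require $e\notin F_{x,y}$, or equivalently impose $\ell_u(F_{x,y}\cup\{e\})\ge g/2$. Your covering argument only ever uses pairs with $F_{x,y}=F'\setminus\{e\}$, which never contains $e$.
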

\begin{proof}
    We first construct the collection $\mathcal{W}_u$ of sets as follows:
    \begin{enumerate}
        \item For every edge $e\in E(H_u)$ with length $\ell_u(e)\geq g/2$, we add $A_e$ to $\mathcal{W}_u$.
        \item For every admissible simple path $F$ of $H_u$ with at least two edges and length $\ell_u(F)\geq g/2$, if deleting its first edge and deleting its last edge both make its length strictly less than $g/2$, then we add its expansion $F^+$ to $\mathcal{W}_u$.
    \end{enumerate}

    We observe that every set $W\in \mathcal{W}_u$ has $W\subseteq E_{\mathcal{Q}_u}\setminus S_u$ since $H_u$ is obtained from $G_u$ by deleting the parent boundary edges $S_u$. Moreover, we have $|A_e|=\ell_u(e)$ for every edge $e\in E(H_u)$ and $|F^+|=\ell_u(F)$ for every admissible path $F$ of $H_u$. Thus, every set $W\in \mathcal{W}_u$ has $|W|\geq g/2$.

    We now bound the number of sets in $\mathcal{W}_u$. There are at most $m_u$ sets of the first type. For the second type, we consider both orientations of every qualifying path. We encode each oriented path $F$ by the ordered pair consisting of its first and last edges, with both edges directed according to the orientation of $F$. Let $x$ and $y$ be the initial endpoints of the first edge and the last edge, respectively. We observe that deleting the last edge from $F$ leaves an admissible simple $x$-$y$ path of length less than $g/2$. Since $F$ has at least two edges, $x\neq y$. By \Cref{lemma:graphic-short-path}, this shorter path is uniquely determined. Appending the recorded last edge then recovers $F$. Therefore, the encoding uniquely determines the oriented path. Since there are $2m_u$ directed edges, the number of possible encodings is bounded by $4m_u^2$. We observe that each undirected path has two orientations, which implies that there are at most $2m_u^2$ paths of the second type. Therefore,
    \[
        |\mathcal{W}_u|\leq m_u+2m_u^2\leq 3m_u^2.
    \]

    We next show that $\mathcal{W}_u$ satisfies the second property. We fix an arbitrary set $R\subseteq E$ such that $u$ is not $R$-good and all its children are $R$-good. According to \Cref{lemma:target-free-absorp}, there exist corresponding circuits $D$ of $M_u$ and $\widehat{D}$ of $\widehat{M_u}$ such that
    \[
        |D\cap S_u|\leq 1,
        \qquad
        D^+\setminus S_u\subseteq R,
        \qquad\text{and} \qquad
        w_u(\widehat{D})\geq g/2.
    \]
    Moreover, $w_u(\widehat{D})\geq g$ when $D\cap S_u=\emptyset$. Let $F:=D\setminus S_u$, which meets every child boundary in at most one element and is therefore admissible. The parent boundary elements $S_u$ remain unchanged under expansion, which implies that
    \[
        F^+=D^+\setminus S_u \subseteq R.
    \]
    Since $F$ is admissible, its expansion has size $|F^+|=\ell_u(F)$. Furthermore, the weight function established in \Cref{section:target-free-absorbing}, with the parent boundary weights set to zero, gives $w_u(\widehat{D})=|D^+ \setminus S_u|$. Consequently,
    \[
        \ell_u(F)=|F^+|=|D^+ \setminus S_u|=w_u(\widehat{D}).
    \]
    We recall that $D$ is a cycle in $G_u$ and $F=D\setminus S_u$. This implies that $F$ is an admissible simple path of length at least $g/2$ when $|D\cap S_u|=1$, and an admissible cycle of length at least $g$ otherwise.

    If $F$ contains an edge $e$ with $\ell_u(e)\geq g/2$, then $A_e\in \mathcal{W}_u$ and $A_e\subseteq R$. Thus, we may assume that every edge of $F$ has length less than $g/2$. If $F$ is a cycle, deleting an arbitrary edge leaves a simple path of length greater than $g/2$. Hence, in either case, we obtain an admissible simple path of length at least $g/2$ whose expansion is contained in $R$. We repeatedly delete the first edge or the last edge whenever the remaining path still has length at least $g/2$, and let $F'\subseteq F$ be the path obtained when no further deletion is possible. Then, $\ell_u(F')\geq g/2$ and deleting the first edge and deleting the last edge both make its length less than $g/2$. Moreover, $F'$ contains at least two edges since every edge has length less than $g/2$. This implies that its expansion $(F')^+\in \mathcal{W}_u$ and $(F')^+\subseteq R$.

    Finally, suppose that $R$ is the remaining set returned by \Cref{algo:random-delete}. Every set $W\in \mathcal{W}_u$ has size at least $g/2$, which implies that
    \[
        \Pr[W\subseteq R] = m^{-40|W|/g} \leq m^{-20}.
    \]
    Thus, a union bound over the sets in $\mathcal{W}_u$ yields
    \[
        \Pr[\exists\ W\in \mathcal{W}_u \text{ with } W\subseteq R] \leq m^{-20}\cdot |\mathcal{W}_u| \leq \frac{3m_u^2}{m^{20}}.
    \]
\end{proof}

\subsection{Certificates at Cographic Nodes}\label{section:target-free-cographic}

In this section, we construct certificates at cographic nodes and bound their survival probabilities. Let $u\in V(\mathcal{T})$ be a cographic node that is not graphic. Let $M_u=M^*(G_u)$ be the cographic matroid associated with $u$ with $G_u=(V_u,E_u)$ being the underlying graph. We recall that for every child $v$ attached through a $2$- or $3$-sum and every element $s\in S_v$ in the corresponding boundary set, $P_{\mathcal{Q}_v,s}$ is the completion of $s$ in $\mathcal{Q}_v$ with minimum cardinality. Let $S_u$ be the parent boundary set.

As discussed in \Cref{section:target-free-absorbing}, absorbing the child components produces a weighted cographic matroid $\widehat{M_u}=M^*(\widehat{G_u})$, where the parent boundary edges $S_u$ are preserved and have weight zero. We emphasize that graphs $G_u$ and $\widehat{G_u}$ may have parallel edges. For every circuit $\widehat{D}$ of $\widehat{M_u}$, let $D$ be its corresponding circuit of $M_u$ and let $D^+$ be its expansion through the child components. We recall that
\[
    \widehat{D} \cap S_u = D\cap S_u \qquad \text{and} \qquad w_u(\widehat{D})=|D^+\setminus S_u|.
\]
We first establish a bound on the number of circuits in the lemma below. The lemma is trivial if $u$ is the root node of $\mathcal{T}$ and has no children, and it should be seen as a generalization of this special case.

\begin{lemma}\label{lemma:cographic-circuit-counting}
    Let $u\in V(\mathcal{T})$ be a cographic node that is not graphic and $m_u:=|E_u|$. Then, for every $\alpha\geq 1/2$, the number of circuits $\widehat{D}$ of $\widehat{M_u}$ satisfying
    \[
        |\widehat{D} \cap S_u|\leq 1 \qquad \text{and} \qquad w_u(\widehat{D})\leq \alpha g
    \]
    is at most $5(2m_u)^{4\alpha}$.
\end{lemma}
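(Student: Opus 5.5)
The plan is to reduce the count to Karger's cut-counting bound (\Cref{theorem:counting-cut}) on a reweighted copy of $\widehat{G_u}$, splitting by how many parent boundary edges a cut uses. Circuits of $\widehat{M_u}=M^*(\widehat{G_u})$ are bonds of $\widehat{G_u}$. The only problem in applying \Cref{theorem:counting-cut} directly is that the parent boundary edges $S_u$ have weight $0$, so the minimum cut of $\widehat{G_u}$ under $w_u$ may be $0$. I split the bonds $\widehat{D}$ with $|\widehat{D}\cap S_u|\le 1$ into those avoiding $S_u$ and those meeting it in one fixed $s\in S_u$, at most $|S_u|+1\le 4$ classes. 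The aim is a bound of roughly $(2m_u)^{4\alpha}$ per class plus a lower-order term, which gives the constant $5$. I use $|V(\widehat{G_u})|\le |V(G_u)|\le 2m_u$, since $G_u$ may be taken connected and the $Y$-$\Delta$ step only removes vertices.

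\textbf{Key lower bound.} For every cut $K$ of $\widehat{G_u}$ with $K\cap S_u=\emptyset$, I claim $w_u(K)\ge g$. Write $K$ as a disjoint union of bonds. Each bond $\widehat{D}$ avoiding $S_u$ corresponds to a circuit $D$ of $M_u$ avoiding $S_u$. As in the proof of \Cref{lemma:graphic-short-path}, the expansion $D^+$ is $D$ symmetric-differenced with completion circuits $P_{\mathcal{Q}_v,s}\cup\{s\}$. Applying \Cref{lemma:sum-circuit} repeatedly, and finally with the empty set outside $\mathcal{Q}_u$, shows $D^+$ is a nonempty disjoint union of circuits of $M$. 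It is nonempty because completions are nonempty. Hence $w_u(\widehat{D})=|D^+|\ge g$.

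\textbf{Counting.} Define $w'$ by $w'(s):=g/2$ for $s\in S_u$ and $w'=w_u$ elsewhere. Every nonempty cut of $\widehat{G_u}$ then has $w'$-weight at least $g/2$: cuts avoiding $S_u$ by the key bound, and cuts meeting $S_u$ trivially. So $\lambda'\ge g/2$. For bonds avoiding $S_u$, the condition $w'\le\alpha g\le 2\alpha\lambda'$ gives at most $(2m_u)^{4\alpha}$ such cuts by \Cref{theorem:counting-cut}. For bonds meeting $S_u$ in one fixed $s$, the $w'$-weight is at most $\alpha g+g/2$. A direct application would give exponent $4\alpha+2$, so I would sharpen this case rather than apply Karger naively. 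My first attempt is to choose $w'(s):=\varepsilon g$ with $\varepsilon\to 0$ instead of $g/2$, and to recover $\lambda'$ by a different argument. If two distinct cuts $K,K'$ both meet $S_u$ exactly in $\{s\}$, then $K\SymDif K'$ avoids $S_u$ and is nonempty. The key bound then gives $w_u(K)+w_u(K')\ge g$. Hence at most one such cut has $w_u$-weight below $g/2$, mirroring \Cref{lemma:unique-completion}, and this exceptional cut contributes only the additive lower-order term.

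\textbf{Main obstacle.} The hard step is making Karger's argument respect this structure: contracting edges in proportion to $w_u$ while never contracting $s$, and bounding the survival probability of a target cut through $s$ of weight at most $\alpha g$. At each stage the contracted graph's minimum cut must still be at least about $g/2$, apart from the single exceptional cut. I expect to handle the exceptional cut by also forbidding contraction of its edges, or by removing it via symmetric difference with the unique light cut. If the exponent cannot be brought down cleanly, the fallback is to accept $(2m_u)^{4\alpha+2}$ per class, with the constant absorbed. The later union bound against $m^{-40|W|/g}$ only needs an $m_u^{O(\alpha)}$ count.
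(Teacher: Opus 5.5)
\textbf{There is a genuine gap: the bound for bonds meeting $S_u$ is never established.}
Several parts of your plan are correct:
\begin{itemize}
\item the split into at most $|S_u|+1\le 4$ classes;
\item the key lower bound that every nonempty cut of $\widehat{G_u}$ avoiding $S_u$ has $w_u$-weight at least $g$;
\item the treatment of the class avoiding $S_u$.
\end{itemize}
The problem is the classes of bonds meeting $S_u$ in a fixed $s$. There you never obtain a bound of the form $1+(2m_u)^{4\alpha}$. The contraction-based modification of Karger's argument (never contracting $s$, protecting the exceptional light cut) is left as an unresolved ``main obstacle''. Your fallback of $(2m_u)^{4\alpha+2}$ per class loses a factor $(2m_u)^2$, which is not a constant and cannot be ``absorbed''. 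So the proposal does not prove the stated bound $5(2m_u)^{4\alpha}$. It may still suffice for the later union bound, but that is a weaker lemma.

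\textbf{The missing step is an idea you already wrote down.} You observed that for distinct cuts $K,K'$ with $K\cap S_u=K'\cap S_u=\{s\}$, the set $K\SymDif K'$ is a nonempty cut avoiding $S_u$, of weight at most $w_u(K)+w_u(K')$. You used this only to show there is at most one light cut; use it as an injection instead.
\begin{itemize}
\item Fix one such cut $C_1$ with $w_u(C_1)\le\alpha g$.
\item The map $C\mapsto C_1\SymDif C$ sends the remaining such cuts injectively into nonempty cuts avoiding $S_u$ with $w_u$-weight at most $2\alpha g$.
\item Those are exactly the cuts of the graph obtained from $\widehat{G_u}$ by contracting $S_u$ (after merging one vertex per component to make it connected, with at most $2m_u$ vertices). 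By your key bound, its minimum cut is at least $g$.
\item Karger's theorem with ratio $2\alpha\geq 1$ then gives at most $(2m_u)^{4\alpha}$ such cuts, so each $s$-class has at most $1+(2m_u)^{4\alpha}$ members.
\end{itemize}
For this count you must compare against minimum cut $g$, not $g/2$. With your reweighting $w'(s)=g/2$ the ratio would be $4\alpha$ and the exponent $8\alpha$. Contract $S_u$ or set $w'(s)\geq g$ instead. Summing then gives at most $(2m_u)^{4\alpha}+3\bigl(1+(2m_u)^{4\alpha}\bigr)\le 5(2m_u)^{4\alpha}$, using $(2m_u)^{4\alpha}\ge 4$. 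No modification of the contraction process is needed; this is exactly the paper's argument.
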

\begin{proof}
    We recall that $\widehat{M_u}$ is a cographic matroid and every circuit $\widehat{D}$ is a nonempty bond (minimal cut) in the underlying graph $\widehat{G_u}$. We first consider cuts that avoid the parent boundary. Let $H_u$ be the graph obtained from $\widehat{G_u}$ by contracting all edges in $S_u$, with all other edge weights unchanged. We observe that a cut avoids $S_u$ precisely when the endpoints of every edge in $S_u$ lie on the same side, which implies that the cuts of $H_u$ are exactly the cuts of $\widehat{G_u}$ avoiding $S_u$.

    We now claim that every nonempty cut of $H_u$ has weight at least $g$. Let $C$ be an arbitrary nonempty cut of $H_u$. We note that there is a bond $\widehat{D}\subseteq C$ of $\widehat{G_u}$ avoiding $S_u$, which is also a circuit of matroid $\widehat{M_u}$. Let $D$ be the corresponding circuit of $M_u$ and its expansion $D^+$ is a disjoint union of circuits of $M_{\mathcal{Q}_u}$. Moreover, $D^+$ in nonempty since $D$ and every chosen completion are nonempty. Since $D^+\cap S_u=\emptyset$, applying \Cref{lemma:sum-circuit} with the empty set outside $\mathcal{Q}_u$ shows that $D^+$ contains a circuit of $M$. Consequently,
    \[
        w_u(C)\geq w_u(\widehat{D})=|D^+|\geq g.
    \]

    We next bound the number of nonempty cuts avoiding the parent boundary $S_u$. If $H_u$ contains no edge, then the number of nonempty cuts is zero. Otherwise, we may choose one vertex from each connected component of $H_u$ and merge them into one single vertex. This produces a connected graph with the same cut edge sets and weights. The resulting graph has at most $m_u$ edges and hence at most $2m_u$ vertices. As established above, its minimum cut weight is at least $g$. Thus, according to \Cref{theorem:counting-cut}, the number of nonempty cuts avoiding $S_u$ with weight at most $\alpha g$ is at most $(2m_u)^{2\alpha}$ (for $\alpha<1$ there are no such cuts, since every nonempty cut of $H_u$ has weight at least $g$).

    We next consider circuits meeting $S_u$ in exactly one element. We fix an element $s\in S_u$ and let $C_1, \ldots, C_L$ be the distinct nonempty cuts of $\widehat{G_u}$ satisfying $C_i\cap S_u = \{s\}$ and $w_u(C_i)\leq \alpha g$. We claim that $L\leq 1+(2m_u)^{4\alpha}$. This is because
    \[
        C_1 \SymDif C_2, C_1 \SymDif C_3, \ldots, C_1 \SymDif C_L
    \]
    are $L-1$ distinct nonempty cuts of $\widehat{G_u}$ avoiding $S_u$, each with weight at most $2\alpha g$. They therefore correspond to cuts of $H_u$. Applying the preceding counting bound gives
    \[
        L-1 \leq (2m_u)^{4\alpha}.
    \]

    In conclusion, the number of circuits $\widehat{D}$ of $\widehat{M_u}$ satisfying $|\widehat{D}\cap S_u|\leq 1$ and $w_u(\widehat{D})\leq \alpha g$ is at most
    \[
        (2m_u)^{2\alpha}+3\left(1+(2m_u)^{4\alpha}\right) \leq 5(2m_u)^{4\alpha},
    \]
    where the last inequality follows from $m_u\geq 1$ and $\alpha\geq 1/2$.
\end{proof}

We now construct certificates at node $u$ as follows. For a first pass, we encourage readers to consider the special case where $u$ is the root node of $\mathcal{T}$ and has no children.

\begin{lemma}\label{lemma:cographic-certificates}
    Let $u\in V(\mathcal{T})$ be a cographic node that is not graphic and $m_u:=|E_u|$. Then, there exists a collection $\mathcal{W}_u$ of sets satisfying:
    \begin{enumerate}
        \item Every set $W\in \mathcal{W}_u$ is a subset of $E_{\mathcal{Q}_u}\setminus S_u$ with $|W|\geq g/2$.
        \item For every subset $R\subseteq E$, if $u$ is not $R$-good and all its children are $R$-good, then there exists $W\in \mathcal{W}_u$ with $W\subseteq R$.
    \end{enumerate}
    Moreover, if $R$ is the remaining set returned by \Cref{algo:random-delete}, then 
    \[
        \Pr[\exists\ W\in \mathcal{W}_u \text{ with } W\subseteq R] \leq \frac{160m_u^4}{m^{20}}.
    \]
\end{lemma}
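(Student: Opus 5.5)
We take as certificates the expansions of the relevant circuits themselves. Concretely, we define $\mathcal{W}_u$ to be the family of sets $D^+\setminus S_u$, where $\widehat{D}$ ranges over circuits of $\widehat{M_u}$ with $|\widehat{D}\cap S_u|\leq 1$ and $w_u(\widehat{D})\geq g/2$, and additionally $w_u(\widehat{D})\geq g$ when $\widehat{D}\cap S_u=\emptyset$. Here $D$ is the corresponding circuit of $M_u$. These sets are fixed in advance, since $\widehat{M_u}$, $w_u$ and the chosen completions do not depend on the randomness.

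\textbf{First property.} Each such set lies in $E_{\mathcal{Q}_u}\setminus S_u$, because expansions replace child boundary elements by completions lying inside the child subtrees. Its size is $|D^+\setminus S_u|=w_u(\widehat{D})\geq g/2$.

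\textbf{Second property.} This is immediate from \Cref{lemma:target-free-absorp}. If $u$ is not $R$-good and all its children are $R$-good, that lemma produces exactly such a $\widehat{D}$ with $D^+\setminus S_u\subseteq R$.

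\textbf{Probability bound.} This is the main step. We group certificates by weight. Since weights are half-integers (they are sums of triangle-type weights), we could group by dyadic or integer ranges; the simplest choice is to use the continuous counting bound from \Cref{lemma:cographic-circuit-counting}.

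For $\alpha\geq 1/2$, let $N(\alpha)$ denote the number of certificates of weight at most $\alpha g$. By \Cref{lemma:cographic-circuit-counting}, $N(\alpha)\leq 5(2m_u)^{4\alpha}$. A certificate of weight $w$ survives with probability $m^{-40w/g}$. We bucket weights into intervals $[jg/2,(j+1)g/2)$ for $j\geq 1$. The number of certificates in bucket $j$ is at most $5(2m_u)^{2(j+1)}$, and each survives with probability at most $m^{-20j}$. The union bound then gives
\[
\sum_{j\geq 1} 5(2m_u)^{2j+2}m^{-20j}.
\]
Using $2m_u\leq m_u^2$ when $m_u\geq 2$ (or handling small $m_u$ directly), together with $m_u\leq O(m)$, each term for $j\geq 2$ is dominated by a geometric factor. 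We expect the $j=1$ term, $5(2m_u)^4m^{-20}=80m_u^4/m^{20}$, to dominate. We then need the tail to be at most another $80m_u^4/m^{20}$.

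The obstacle is relating $m_u$ to $m$ so that the ratio $(2m_u)^2/m^{20}$ is at most $1/2$. The issue is that $m_u$ counts boundary elements as well and could exceed $m$. However, by \Cref{lemma:unordered-decomposition-tree-size}, $m_u=O(m)$; more crudely, $m_u\leq m+3\deg(u)\leq 7m$. With $m\geq 2$, this gives $(2m_u)^2\leq 196m^2\leq m^{20}/2$, so the geometric tail sums to at most twice the first term. This yields the claimed bound of $160m_u^4/m^{20}$.
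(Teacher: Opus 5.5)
Your proposal is correct and follows the paper's proof essentially step for step. You use the same certificate family (your extra requirement $w_u(\widehat{D})\ge g$ for circuits avoiding $S_u$ holds automatically and does no harm), the same second-property argument via \Cref{lemma:target-free-absorp}, the same bucketing by $|W|/g\in[j/2,(j+1)/2)$ with \Cref{lemma:cographic-circuit-counting}, and the same geometric sum $80m_u^4/(m^{20}-4m_u^2)$. The only deviation is minor: you bound $m_u\le 7m$ via \Cref{lemma:unordered-decomposition-tree-size}, whereas the paper uses $m_u\le m$ because each attached component contributes at least $|S|$ elements of $E$ by the $2$-/$3$-sum size conditions; either bound gives $4m_u^2\le m^{20}/2$.
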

\begin{proof}
    We first construct the collection $\mathcal{W}_u$ of sets as follows. For every circuit $\widehat{D}$ of $\widehat{M_u}$ satisfying $|\widehat{D}\cap S_u|\leq 1$ and $w_u(\widehat{D})\geq g/2$, we let $D$ be its corresponding circuit of $M_u$ and add $D^+\setminus S_u$ to $\mathcal{W}_u$. We observe that every set $W\in \mathcal{W}_u$ satisfies $W\subseteq E_{\mathcal{Q}_u}\setminus S_u$. Moreover, the weight function established in \Cref{section:target-free-absorbing} gives
    \[
        |W|=|D^+\setminus S_u|=w_u(\widehat{D})\geq g/2.
    \]

    We next show that $\mathcal{W}_u$ satisfies the second property. We fix an arbitrary set $R\subseteq E$ such that $u$ is not $R$-good and all its children are $R$-good. According to \Cref{lemma:target-free-absorp}, there exist corresponding circuits $D$ of $M_u$ and $\widehat{D}$ of $\widehat{M_u}$ such that
    \[
        |D\cap S_u|\leq 1,
        \qquad
        D^+\setminus S_u\subseteq R,
        \qquad\text{and} \qquad
        w_u(\widehat{D})\geq g/2.
    \]
    Since $\widehat{D}\cap S_u=D\cap S_u$, the circuit $\widehat{D}$ satisfies the conditions in our construction and hence, $D^+\setminus S_u\in \mathcal{W}_u$ and $D^+\setminus S_u\subseteq R$.

    Finally, suppose that $R$ is the remaining set returned by \Cref{algo:random-delete}. Based on \Cref{lemma:cographic-circuit-counting}, for every $\alpha\geq 1/2$, the number of sets in $\mathcal{W}_u$ of size at most $\alpha g$ is at most $5(2m_u)^{4\alpha}$. We recall that for every set $W\in \mathcal{W}_u$, its survival probability is
    \[
        \Pr[W\subseteq R]=m^{-40|W|/g}.
    \]
    Therefore, applying a union bound shows that
    $$\begin{aligned}
        \Pr[\exists\ W\in \mathcal{W}_u \text{ with } W\subseteq R] & \leq \sum_{W\in \mathcal{W}_u} m^{-40|W|/g} \\
        &\leq \sum_{\ell=1}^{+\infty} \left( \sum_{\substack{W\in \mathcal{W}_u \\ \frac{|W|}{g}\in \big[\frac{\ell}{2},\frac{\ell+1}{2}\big)}} m^{-40|W|/g} \right) \\
        &\leq \sum_{\ell=1}^{+\infty} m^{-20\ell}\left|\left\{W\in \mathcal{W}_u:\frac{|W|}{g}\in \big[\frac{\ell}{2},\frac{\ell+1}{2}\big)\right\}\right| \\
        &\leq \sum_{\ell=1}^{+\infty} \left(m^{-20\ell}\right) \left(5(2m_u)^{2\ell+2}\right) \\
        &= \frac{80 m_u^4}{m^{20}-4m_u^2} \leq \frac{160m_u^4}{m^{20}},
    \end{aligned}$$
    where the last inequality is because $m^{20}\geq 8m_u^2$ since $m\geq 2$ and $m\geq m_u$. The bound $m_u\leq m$ holds for every node $u\in V(\mathcal{T})$: each component of $\mathcal{T}-u$ attached to $u$ through a boundary set $S$ contains at least $|S|$ elements of $E$, by the size conditions on $2$- and $3$-sums.
\end{proof}

\subsection{Proof of \Cref{theorem:target-free}}\label{section:target-free-proof}

In this section, we complete the proof of \Cref{theorem:target-free} by constructing certificates at $R_{10}$ nodes and combining the survival bounds over the rooted decomposition tree.

\begin{proof}[Proof of \Cref{theorem:target-free}]
    Let $R$ be the remaining set returned by \Cref{algo:random-delete}.
    For the case of $2\leq m \leq 6$, there are at most $2^m$ circuits of $M$, and each circuit has at least $g$ elements and therefore survives with probability at most $m^{-40}$. Thus,
    \[
        \Pr[R \not\in \mathcal{I}] \leq 2^m m^{-40}\leq \frac{1}{m^5}.
    \]
    We henceforth assume that $m\geq 7$.

    We first construct a collection of sets at $R_{10}$ nodes. Let $u \in V(\mathcal{T})$ be a node whose matroid $M_u$ is isomorphic to $R_{10}$. Since $R_{10}$ contains no circuit of size three, every boundary set incident to $u$ has size at most one. Thus, every circuit of $M_u$ meets every child boundary in at most one element and admits an expansion through the child components. We define a collection $\mathcal{W}_u$ of sets as follows:
    \[
        \mathcal{W}_u:=\left\{D^+\setminus S_u: D \text{ is a circuit of } M_u, \ |D^+\setminus S_u|\geq g/2\right\}.
    \]
    We observe that every set $W\in \mathcal{W}_u$ satisfies $W\subseteq E_{\mathcal{Q}_u}\setminus S_u$ and $|W|\geq g/2$. Moreover, based on \Cref{lemma:target-free-absorp}, for a subset $X\subseteq E$, if $u$ is not $X$-good and all its children are $X$-good, then there exists $W\in\mathcal{W}_u$ with $W\subseteq X$. Since $R_{10}$ has ten elements, it has at most $2^{10}=1024$ circuits. Therefore,
    \[
        \Pr[\exists\ W\in \mathcal{W}_u \text{ with } W\subseteq R] \leq \frac{1024}{m^{20}}.
    \]

    Together with \Cref{lemma:graphic-certificates,lemma:cographic-certificates}, we have constructed a collection $\mathcal{W}_u$ of sets at every node $u$ such that
    \[
        \Pr[\exists\ W\in \mathcal{W}_u \text{ with } W\subseteq R] \leq \frac{160m_u^4}{m^{20}},
    \]
    where $m_u=|E_u|$. Suppose that no set in any collection $\mathcal{W}_u$ is contained in $R$. By \Cref{lemma:graphic-certificates,lemma:cographic-certificates} and the preceding argument for $R_{10}$ nodes, every node whose children are all $R$-good is itself $R$-good. In particular, every leaf is $R$-good, since it has no children. Induction from the leaves to the root therefore shows that every node is $R$-good. Thus, the root is $R$-good and $R$ is independent.

    Finally, we bound the probability that $R$ is dependent. By \Cref{lemma:unordered-decomposition-tree-size}, $|V(\mathcal{T})|\leq 2m$. Since $m_u\leq m$ for every node $u\in V(\mathcal{T})$, applying a union bound gives
    $$\begin{aligned}
        \Pr[R\not\in \mathcal{I}] &\leq \sum_{u\in V(\mathcal{T})}\Pr[\exists\ W\in \mathcal{W}_u \text{ with } W\subseteq R] \\
        &\leq \sum_{u\in V(\mathcal{T})} \frac{160m_u^4}{m^{20}} \leq \frac{320m^5}{m^{20}} \leq \frac{1}{m^5},
    \end{aligned}$$
    where the last inequality is because $m\geq 7$.
\end{proof}

\section{Conditional Nullity and Circuit Counting}\label{section:new:conditional-nullity}

In this section, we prove \Cref{theorem:nullity} and derive the counting and enumeration corollaries. In \Cref{subsection:regular-erdos-posa}, we state a bound relating the minimum circuit size of a regular matroid to its nullity, which generalizes a theorem of Erd\H{o}s and P\'osa that relates girth to cyclomatic number in graphs. We then use this bound to analyze nullity conditioned on circuit survival in \Cref{subsection:conditional-nullity}. Finally, \Cref{subsection:counting-enumerating} establishes the counting and enumeration corollaries.

\subsection{Minimum Circuit Bound for Regular Matroids}\label{subsection:regular-erdos-posa}
Erd\H{o}s and P\'osa~\cite{erdHos1962maximal} showed that a graph with $m$ edges and cyclomatic number $s\geq 1$ contains a cycle of length $O(m\log(s+1)/s)$. 
Knauer's recent work \cite{Kn26} extends this asymptotic bound to regular matroids as stated below. 

\begin{theorem}\label{theorem:regular-girth-nullity}
    Let $M=(E,\mathcal{I})$ be a regular matroid with $m$ elements, minimum circuit size $g$, and nullity $s:=\nu_M(E)\geq 1$. Then,
    \[
        g \leq \frac{168m\ln(s+1)}{s}.
    \]
\end{theorem}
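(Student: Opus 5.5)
The plan is to prove \Cref{theorem:regular-girth-nullity} by contradiction from \Cref{theorem:target-free}. The key is to apply the random deletion argument not to $M$ but to a compressed matroid whose number of elements is polynomial in $s$. That way the factor $\ln m$ coming from the deletion probability becomes $\ln(s+1)$.

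\textbf{Step 1: reduce to few series classes.} Deleting coloops changes neither $g$ nor $s$ and only decreases $m$, so we may assume $M$ is coloop-free. Partition $E$ into series classes, i.e.\ the equivalence classes of ``$e=f$ or $\{e,f\}$ is a cocircuit''. Every circuit contains either all or none of each class. Contracting all but one element of each class gives a cosimple regular matroid $N$ with the same nullity $s$ (contractions inside series classes reduce rank and size equally).

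Give each element of $N$ a weight equal to the size of its class. Then circuits of $N$ correspond bijectively to circuits of $M$, with $w(\widehat C)=|C|$. Hence $g_w(N)=g$ and $w(E(N))=m$. The dual $N^*$ is a simple regular matroid of rank $s$, so by Heller's bound it has at most $s(s+1)/2$ elements. Therefore $|E(N)|\le (s+1)^2$.

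\textbf{Step 2: weighted random deletion on $N$.} I would rerun the proof of \Cref{theorem:target-free} for $N$ with positive integer weights. Keep an element of weight $k$ with probability $q^{k}$, where $q=n^{-40/g}$ and $n=|E(N)|\le (s+1)^2$. This is exactly the law of survival of a whole series class under \Cref{algo:random-delete} with parameter $n$ in place of $m$. The certificate machinery only ever uses two facts: each certificate $W$ has total weight at least $g/2$, so $\Pr[W\subseteq R]\le n^{-20}$; and the number of certificates is $\mathrm{poly}(n)$. The latter comes from \Cref{lemma:graphic-certificates}, \Cref{lemma:cographic-circuit-counting}, and the $R_{10}$ bound, all of which are already stated for weighted lengths $\ell_u$ and $w_u$. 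So I expect the surviving set $R$ of $N$ to be independent with probability at least $1-n^{-5}$.

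\textbf{Step 3: a lower bound on surviving nullity.} Using duality,
\[
\nu_N(R)=s-r_{N^*}(E(N)\setminus R)\ \ge\ s-\bigl|E(N)\setminus R\bigr|_{\text{restricted to a cobasis}}.
\]
More usefully, fix a basis $B^*$ of $N^*$ of size $s$. Let $X$ be the set of elements of $E(N)\setminus B^*$ whose fundamental circuit in $N^*$ lies entirely within the deleted set, and let $Y=B^*\setminus R$. Then $r_{N^*}(E(N)\setminus R)\le |X|+|Y|$, since the deleted set is spanned by $X\cup Y$.

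Taking expectations and using the weights, I want
\[
\mathbb{E}[\nu_N(R)]\ \ge\ s\cdot \min_e q^{w(e)}\text{-type averages}\ \gtrsim\ s\,n^{-40 m/(s g)},
\]
by convexity of $k\mapsto q^k$ (Jensen). The total weight is $m$ spread over $s$ relevant elements, so the average weight is about $m/s$. If $g>168\,m\ln(s+1)/s$, then the exponent $40m/(sg)\cdot\ln n$ is less than $\ln s-\Omega(1)$. This forces $\Pr[\nu_N(R)\ge 1]>n^{-5}$, contradicting Step 2.

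\textbf{Main obstacle.} I expect Step 3 to be the hard part: getting a clean lower bound on $\mathbb{E}[\nu_N(R)]$, or even on $\Pr[\nu_N(R)\ge1]$, in terms of the per-element survival probabilities. The naive bound $s-|E\setminus R|$ is far too weak. My first attempt would be to pick $B^*$ so that the complement of $B^*$ in $N$ (a basis of $N$'s dual side) carries total weight at most $m$ across $s$ elements. Then I would apply Jensen's inequality to $\sum_{e\notin B}q^{w(e)}$ together with the bound $\nu_N(R)\ge |R\setminus B|-|B\setminus R|$, suitably restricted.

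Failing that, I would switch to a second-moment or thinning argument. The idea is to first subsample the cobasis so that only elements of weight at most $2m/s$ are used; there are at least $s/2$ of them by Markov's inequality. Then apply the deletion only to these elements together with $B$, and track the constant $168$ from $40\cdot 2\cdot 2$ and the $(s+1)^2$ size bound.
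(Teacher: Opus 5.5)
There is a genuine gap, and it is the step you flagged. Step 3 is where independence of the surviving set has to be turned into a girth bound, and it is never carried out.

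The one concrete inequality you write there is false. Suppose the deleted set $D=E(N)\setminus R$ is a single element $e\notin B^*$ whose fundamental circuit in $N^*$ contains a surviving element of $B^*$. Then $X=Y=\emptyset$, but $r_{N^*}(D)=1$ because $N^*$ is simple. Elements of $X$ are already spanned by $Y$; it is the deleted non-basis elements \emph{not} in $X$ that go unaccounted for. The Jensen and thinning sketches are left unfinished.

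The missing idea is that the bound you dismissed as ``far too weak'' is exactly the right one. From your own display, $\nu_N(R)=s-r_{N^*}(D)\ge s-|D|$. So whenever $R$ is independent in $N$, at least $s$ elements were deleted, and Step 2 gives $\mathbb{E}|D|\ge s(1-n^{-5})$. On the other hand, linearity of expectation and $1-e^{-x}\le x$ give
\[
\mathbb{E}|D|=\sum_{e\in E(N)}\bigl(1-q^{w(e)}\bigr)\le \frac{40\ln n}{g}\sum_{e\in E(N)}w(e)\le \frac{40\,m\ln n}{g}.
\]
Hence $g\le 42m\ln n/s\le 84m\ln(s+1)/s$, using $n\le s(s+1)/2$. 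When $s=1$ you have $n=1$, and $g\le m$ holds trivially. This argument needs only few deletions on average, not many survivors. Your inequality $\nu_N(R)\ge|R\setminus B|-|B\setminus R|$ collapses to the same bound once you write $q^{w(e)}=1-(1-q^{w(e)})$, so Jensen is unnecessary. This is exactly the paper's \Cref{lemma:girth-from-sampling}.

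With that fix, your route differs from the paper's only in Step 2. The paper never reruns the proof of \Cref{theorem:target-free}. In \Cref{lemma:regular-nullity-compression} it replaces each parallel class $P_i$ of $M^*$ by $\lceil (k+1)|P_i|/W\rceil$ parallel copies. This yields an \emph{unweighted} regular matroid $\tilde M$ with the same nullity, $\tilde m\le s(s+1)$, and $\tilde g/\tilde m\ge g/(2m)$. The unweighted lemma is then applied as a black box, at the cost of a factor $2$ (hence the constant $168$).

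Your weighted series-class compression avoids that factor-$2$ loss. The price is that Step 2 requires re-verifying all of \Cref{section:target-free} with weights, using $n$ rather than the number of elements of $M$ in the deletion probability, so \Cref{theorem:target-free} applied to $M$ does not cover it. This does appear to go through, using minimum-weight completions and weighted Karger counting. Note, though, that the cographic certificate families are not polynomially large; what the union bound uses is the graded count of \Cref{lemma:cographic-circuit-counting}.
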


The Erd\H{o}s-P\'osa bound is essentially tight via the existence of cubic graphs with logarithmic girth \cite{ErdosSachs1963}. 
Although the above-mentioned generalization to regular matroids is known via \cite{Kn26}, we give an alternate proof in this section using Theorem \ref{theorem:target-free}. 
We first derive a weaker bound $g=O(m\log m/s)$ from \Cref{theorem:target-free}. Then, we construct a smaller regular matroid with the same nullity and only $O(s^2)$ elements, without decreasing the ratio of minimum circuit size to ground set size by more than a constant factor. Applying the weaker bound to this smaller matroid replaces $\log m$ with $O(\log(s+1))$ and yields the desired bound. We formalize the weaker bound as follows.

\begin{lemma}\label{lemma:girth-from-sampling}
    Let $M=(E,\mathcal{I})$ be a regular matroid with $m\geq 2$ elements, minimum circuit size $g$, and nullity $s:=\nu_M(E)\geq 1$. Then,
    \[
        g \leq \frac{42m\ln m}{s}.
    \]
\end{lemma}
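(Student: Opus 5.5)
We prove \Cref{lemma:girth-from-sampling} by combining \Cref{theorem:target-free} with a lower bound on the rank of the sampled set $R$ returned by \Cref{algo:random-delete}. The idea is that $R$ is independent with high probability, so $|R|\leq r_M(E)=m-s$ must hold with high probability. On the other hand, $|R|$ is a sum of independent Bernoulli variables with mean $m\cdot m^{-40/g}$. If $g$ were much larger than $m\ln m/s$, each element would survive with probability close to $1$, and $R$ would typically contain more than $m-s$ elements. These two facts contradict each other, which forces $g\leq 42m\ln m/s$.

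First dispose of trivial cases. If $g>m$, then $M$ has no circuit of size at most $m$, hence no circuit at all, so $s=0$, contradicting $s\geq 1$. Thus $g\leq m$. Also, if $s\leq 42\ln m$, the claimed bound is at least $m\geq g$ and holds trivially. So we may assume $s>42\ln m$.

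Now run \Cref{algo:random-delete} and let $X:=|E\setminus R|$ be the number of deleted elements. By \Cref{theorem:target-free}, $R$ is independent with probability at least $1-m^{-5}$. On that event $|R|\leq r_M(E)=m-s$, so $X\geq s$. Hence
\[
    \Pr[X\geq s]\geq 1-m^{-5}.
\]
Moreover,
\[
    \mathbb{E}[X]=mp=m\left(1-m^{-40/g}\right)\leq \frac{40m\ln m}{g},
\]
using $1-e^{-x}\leq x$.

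Suppose for contradiction that $g>42m\ln m/s$. Then $\mathbb{E}[X]<\tfrac{40}{42}s$. Markov's inequality alone only gives $\Pr[X\geq s]\leq 40/42$. This is still a contradiction, because $1-m^{-5}>40/42$ for $m\geq 2$: indeed $1-1/32\approx 0.969>0.952$. So the contradiction follows directly from Markov's inequality, and no Chernoff bound or use of the assumption $s>42\ln m$ is needed. Concluding, $g\leq 42m\ln m/s$.

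The main obstacle is getting the constant right. One must check that the slack between $40$ in the deletion probability and $42$ in the statement suffices for Markov's inequality against the $1-m^{-5}$ success probability. It does, since $40/42<31/32$. The only remaining care point is the inequality $1-m^{-40/g}\leq 40\ln m/g$, which holds for all $g>0$.
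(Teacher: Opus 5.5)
Your proof is correct and is essentially the paper's argument: the paper writes $\mathbb{E}[d]\ge s\Pr[d\ge s]\ge s(1-m^{-5})\ge \tfrac{31}{32}s$ and compares this with $\mathbb{E}[d]\le 40m\ln m/g$, which is exactly your Markov step run directly rather than by contradiction. As you note yourself, the preliminary case analysis ($g\le m$, $s>42\ln m$) is not needed.
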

\begin{proof}
    We apply \Cref{algo:random-delete} to the regular matroid $M$, and let $d$ be a random variable indicating the number of deleted elements. Whenever the remaining set is independent, at least $s$ elements have been deleted, which implies that $d\geq s$. According to \Cref{theorem:target-free}, the remaining set is independent with probability at least $(1-1/m^5)$. Hence,
    \[
        \mathbb{E}[d]\geq s(1-m^{-5})\geq \frac{31}{32}s,
    \]
    where the last inequality is because $m\geq 2$. On the other hand, every element is deleted with probability $1-m^{-40/g}$. Therefore,
    \[
        \mathbb{E}[d]=m\left(1-m^{-40/g}\right) \leq \frac{40m\ln m}{g},
    \]
    where the inequality follows from $1-\exp(-x)\leq x$ for every $x\geq 0$. Combining these two bounds gives
    \[
        g\leq \frac{40m\ln m}{\mathbb{E}[d]}\leq \frac{40m\ln m}{\frac{31}{32}s}\leq \frac{42m\ln m}{s}.
    \]
\end{proof}

We next use Heller's theorem to construct a smaller regular matroid with the same nullity. We first introduce the necessary terminology. An element $e\in E$ is a \emph{loop} if $\{e\}$ is a circuit. Two distinct elements $e, f\in E$ are \emph{parallel} if $\{e,f\}$ is a circuit. The non-loop elements of a matroid can be partitioned into \emph{parallel classes}, each of which is an inclusion-wise maximal set whose distinct elements are pairwise parallel; singleton classes are allowed. A matroid is \emph{simple} if it has no loops and no pair of parallel elements. Heller's theorem relates the rank and the size of the ground set in simple regular matroids as follows.

\begin{theorem}[Heller~\cite{heller1957linear}]\label{theorem:heller}
    Let $M=(E,\mathcal{I})$ be a simple regular matroid of rank $r\geq 1$. Then,
    \[
        |E|\leq \frac{r(r+1)}{2}.
    \]
\end{theorem}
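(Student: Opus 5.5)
We would prove Heller's bound by induction along Seymour's decomposition (\Cref{thm:seymour-decomposition}), writing $f(r):=r(r+1)/2$ and proving $|E|\le f(r)$ for every simple regular matroid of rank $r\ge 1$. We allow $1$-, $2$- and $3$-sums in the induction, since a simple regular matroid is either basic (graphic, cographic, or $R_{10}$) or such a sum of two smaller regular matroids.

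\textbf{Base cases.} For the basic matroids we would argue directly.
\begin{enumerate}
\item A simple graphic matroid of rank $r$ may be assumed to come from a connected simple graph on $r+1$ vertices, so it has at most $\binom{r+1}{2}=f(r)$ edges.
\item A simple cographic matroid $M^*(G)$ comes from a graph with no bridges and no two edges forming a cut. Hence we may take $G$ $3$-edge-connected, so every vertex has degree at least $3$, and $|E|\ge 3|V|/2$. Its rank is $r=|E|-|V|+1\ge |E|/3+1$, which gives $|E|\le 3r-3\le f(r)$.
\item $R_{10}$ has rank $5$ and $10\le 15$ elements.
\end{enumerate}

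\textbf{Sums.} Write $M=M_1\SymDif M_2$ with ranks $r_1,r_2$ and sizes $n_1,n_2$. Then:
\begin{enumerate}
\item a $1$-sum has $r=r_1+r_2$ and $n=n_1+n_2$;
\item a $2$-sum has $r=r_1+r_2-1$ and $n=n_1+n_2-2$;
\item a $3$-sum has $r=r_1+r_2-2$ and $n=n_1+n_2-6$.
\end{enumerate}
By convexity of $f$, for $a,b\ge k$ we have $f(a)+f(b)\le f(a+b-k)+f(k)$. Taking $k=0,1,2$ and using $f(0)=0$, $f(1)=1$, $f(2)=3$, the induction hypothesis gives the following.
\begin{enumerate}
\item For a $1$-sum, $n\le f(r)$.
\item For a $2$-sum, $n\le f(r)+1-2$.
\item For a $3$-sum, $n\le f(r)+3-6$.
\end{enumerate}
This leaves slack of $1$ and $3$ respectively. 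The parts have rank at least $1$, or at least $2$ in the $3$-sum case, because $S$ is a $3$-element circuit there.

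\textbf{Main obstacle: simplicity of the parts.} The induction hypothesis only applies to simple matroids, and $M_1,M_2$ need not be simple even when $M$ is.
\begin{enumerate}
\item Two non-boundary elements of $M_i$ that are parallel, or a non-boundary loop, would form a circuit of $M_i$ avoiding $S$. By \Cref{lemma:circuit-sum} that would be a circuit of $M$, contradicting simplicity of $M$. Boundary elements are not loops by the definition of the sums.
\item Hence the only possible non-simplicity is a non-boundary element $e$ parallel to a boundary element $s\in S$. There is at most one such $e$ per $s$. In the $3$-sum, two elements of $S$ cannot be parallel, since $S$ contains no $2$-circuit.
\item We would delete these elements $e$ to obtain simple parts $M_i'$ of the same rank. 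Each deleted $e$ is then counted back explicitly, so we must show this extra count is absorbed by the slack.
\end{enumerate}

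\textbf{The accounting.} Here we must be careful.
\begin{enumerate}
\item In a $2$-sum, $s$ cannot be parallel to a non-boundary element in both parts simultaneously. Otherwise the two elements would form a $2$-circuit of $M$ via the third case of \Cref{lemma:circuit-sum}. So at most one extra element appears, and the slack of $1$ suffices.
\item In a $3$-sum, the same argument shows each $s\in S$ has a parallel partner in at most one side. That gives at most $3$ extra elements against slack $3$.
\item If the crude convexity estimate is not tight enough in degenerate small-rank cases, we would fall back on a direct check. In the worst case we would argue via the alternative route of contracting a point $e$ and bounding the lines through it, using that binary matroids have no $U_{2,4}$ minor.
\end{enumerate}
We expect this simplicity bookkeeping at the $3$-sum to be the delicate step. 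The remaining algebra is routine.
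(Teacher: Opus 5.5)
The paper gives no proof of this statement. It imports Heller's theorem as a black box from Heller's 1957 linear-algebraic work on unimodular systems, which does not use Seymour's theorem. Your strong induction on $|E|$ over Seymour's decomposition is a different route, and as far as I can check it is correct.

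\begin{itemize}
\item \textbf{Convexity step.} This is cleaner than you suggest: $f(a+b-k)+f(k)-f(a)-f(b)=(a-k)(b-k)$. So the inequality holds exactly whenever $a,b\ge k$, there are no degenerate small-rank cases, and your fallback via lines through a point is unnecessary.
\item \textbf{Simplicity bookkeeping.} This is right. Two partners of the same $s$ on one side would be parallel to each other, giving a $2$-circuit of $M$ that avoids $S$. Partners $e_1,e_2$ on opposite sides would make $\{e_1,e_2\}=\{e_1,s\}\SymDif\{e_2,s\}$ a $2$-circuit of $M$ by \Cref{lemma:sum-circuit}. So at most one (resp.\ three) elements are added back, against slack $1$ (resp.\ $3$).
\item \textbf{Cographic base case.} $G$ may have loops, which are coloops of $M^*(G)$ and allowed in a simple matroid. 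Count non-loop edges instead. The bound $|E|\ge 3|V|/2$ also needs $|V|\ge 2$. The one-vertex case is a free matroid with $|E|=r\le f(r)$.
\item \textbf{What you rely on.} State explicitly that you use the strong form of \Cref{thm:seymour-decomposition}: a non-basic regular matroid is a $1$-, $2$-, or $3$-sum of two regular matroids isomorphic to proper minors of it. You also use the rank formulas $r=r_1+r_2-1$ and $r=r_1+r_2-2$, which rely on $S$ containing no cocircuit.
\end{itemize}

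As for the trade-off: your argument stays inside the toolkit the paper already uses (Seymour's decomposition, sums, \Cref{lemma:sum-circuit}). It would make this section self-contained, though it rests on a deep theorem. Heller's original proof is elementary and independent of the decomposition, and citing it costs the paper nothing.
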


We now apply Heller's theorem to the dual matroid, whose rank equals the nullity of the original matroid. By adjusting the sizes of its parallel classes, we obtain the following reduction.

\begin{lemma}\label{lemma:regular-nullity-compression}
    Let $M=(E,\mathcal{I})$ be a regular matroid with $m:=|E|$ elements, minimum circuit size $g$, and nullity $s:=\nu_M(E)\geq 1$. There exists a regular matroid $\tilde{M}=(\tilde{E},\tilde{\mathcal{I}})$ with $\tilde{m}:=|\tilde{E}|$ elements, minimum circuit size $\tilde{g}$, and nullity $\tilde{s}$ such that
    \begin{enumerate}
        \item $\tilde{s}=s$;
        \item $2\leq \tilde{m}\leq s(s+1)$; and
        \item $\tilde{g}/\tilde{m} \geq g/(2m)$.
    \end{enumerate}
\end{lemma}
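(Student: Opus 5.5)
The plan is to pass to the dual, where nullity becomes rank, and to apply Heller's theorem (\Cref{theorem:heller}) to the simplification of the dual. Then I would rescale the series classes of $M$ so that the total size drops to $O(s^2)$ while every circuit shrinks by at most the same proportion. Three facts are used throughout. Series classes of $M$ are the parallel classes of $M^*$. A circuit of $M$ that meets a series class contains all of it. Replacing a series class of size $k$ by a series class of any size $k'\geq 1$ is a series extension or contraction, so it preserves regularity and nullity.

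\textbf{Step 1 (remove coloops).} Let $M_0$ be obtained from $M$ by deleting all coloops. No circuit contains a coloop, so the circuits, $g$ and $s$ are unchanged, and $m_0:=|E(M_0)|\leq m$. Hence $g/m_0\geq g/m$. Since $s\geq 1$, $M_0$ is nonempty. Moreover, $M_0^*$ has no loops, so its non-loop elements are all of its elements.

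\textbf{Step 2 (count series classes).} $M_0^*$ is regular of rank $r^*=\nu_{M_0}(E(M_0))=s$. Its simplification is a simple regular matroid whose elements correspond to the parallel classes of $M_0^*$, that is, to the series classes $K_1,\dots,K_P$ of $M_0$. It has rank $s\geq 1$, so \Cref{theorem:heller} gives $P\leq s(s+1)/2$.

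\textbf{Step 3 (rescale).} Let $k_j:=|K_j|$. Form $\tilde M$ from $M_0$ by replacing each $K_j$ with a series class of size $k'_j:=\lceil k_jP/m_0\rceil\geq 1$. This is a sequence of series contractions and extensions, so $\tilde M$ is regular with $\tilde s=s$.

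\textbf{Step 4 (size bound).} We have
\[
\tilde m=\sum_j k'_j\leq \sum_j\left(\frac{k_jP}{m_0}+1\right)=2P\leq s(s+1).
\]

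\textbf{Step 5 (girth bound).} Circuits of $\tilde M$ correspond to circuits of $M_0$ via the same index sets of classes. A circuit $C$ that is a union of classes indexed by $J$ therefore has image of size
\[
\sum_{j\in J}k'_j\geq \frac{P}{m_0}\sum_{j\in J}k_j=\frac{P|C|}{m_0}\geq \frac{Pg}{m_0}.
\]
Combining with Step 4,
\[
\frac{\tilde g}{\tilde m}\geq \frac{Pg/m_0}{2P}=\frac{g}{2m_0}\geq \frac{g}{2m}.
\]

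\textbf{Step 6 (ensure $\tilde m\geq 2$).} This is the main fiddly point. If $\tilde m=1$, then $\tilde M$ is a single loop, so $s=1$, and we may take $\tilde m=2$ by using a two-element circuit instead. Concretely, a series class of size $2$ replacing the loop's class is not available, since a loop is its own class. In that case I would instead take $\tilde M:=U_{1,2}$, two parallel elements. This matroid is regular, has nullity $1$ and $\tilde g/\tilde m=1\geq g/(2m)$.

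More generally, I would check the edge case $P=1$ separately. There $M_0$ is a single circuit of size $m_0$. One can then take $\tilde M$ to be a $2$-element circuit, with $\tilde g/\tilde m=1$ and $\tilde m=2\leq s(s+1)=2$.

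I expect the main obstacle to be justifying rigorously that resizing series classes corresponds exactly to the circuit map claimed. The needed facts are that series classes of $M_0$ are exactly the parallel classes of $M_0^*$, and that changing the size of a parallel class of the regular matroid $M_0^*$ preserves regularity and rank, so that dualizing back preserves regularity and nullity. These are standard, but worth stating carefully.
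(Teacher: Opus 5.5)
Your proof is correct and follows essentially the same route as the paper: dualize so that nullity becomes rank, apply Heller's theorem to the simplification of the dual, then resize each class proportionally with a ceiling to get $\tilde m\le 2P\le s(s+1)$ and $\tilde g\ge Pg/m_0$. Your series classes of $M$ after deleting coloops are exactly the paper's parallel classes of $M^*$ after deleting loops, and your $m_0$ is its $W$. The only difference is that the paper scales by $k+1$ rather than $k$, taking $a_i=\lceil (k+1)w_i/W\rceil$. This gives $\tilde m\ge k+1\ge 2$ automatically, so the paper needs no separate $P=1$ case; you handle that case correctly with $U_{1,2}$.
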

\begin{proof}
    Since $M$ is a regular matroid, its dual matroid $M^*$ is also regular and has rank
    \[
        r(M^*)=m-r(M)=\nu_M(E)=s.
    \]
    Let $P_1, \ldots, P_k$ be the parallel classes of $M^*$. For every $i\in [k]$, we define
    \[
        w_i:=|P_i| \qquad \text{and} \qquad W:=\sum_{i=1}^{k}w_i\leq m.
    \]
    Now, we delete all loops of $M^*$ and retain one representative $e_i$ from the parallel class $P_i$ for every $i\in [k]$. Let $M_0^*$ denote the resulting matroid. It is simple and regular, and its rank remains $s$. Thus, according to \Cref{theorem:heller},
    \begin{align}
        s\leq k\leq \frac{s(s+1)}{2}. \label{inequality:heller}
    \end{align}

    We next describe the cocircuits of $M^*$ in terms of these parallel classes. We recall that a cocircuit is an inclusion-wise minimal set whose deletion decreases the rank. We note that deleting loops or removing some but not all elements of a parallel class does not affect the rank. Moreover, deleting entire parallel classes decreases the rank of $M^*$ precisely when deleting their representatives decreases the rank of $M_0^*$. Consequently, the cocircuits of $M^*$ are exactly
    \[
        \left\{ \left(\bigcup_{i:e_i\in D}P_i\right) : D \text{ is a cocircuit of } M_0^* \right\}.
    \]
    Since the cocircuits of $M^*$ are the circuits of $M$, for every cocircuit $D$ of $M_0^*$, we have
    \[
        \sum_{i:e_i\in D}w_i\geq g.
    \]

    We now construct the matroid $\tilde{M}$. For every $i\in [k]$, we replace each representative $e_i$ in $M_0^*$ with a parallel class $\tilde{P_i}$ of size
    \[
        a_i:=\left\lceil \frac{(k+1)w_i}{W}\right \rceil\geq 1,
    \]
    and let $\tilde{M}$ be the dual of the resulting matroid. Explicitly, a set is independent in $\tilde{M}^*$ if it contains at most one element from each class $\tilde{P_i}$ and the corresponding representatives form an independent set of $M_0^*$. Replacing elements with parallel copies preserves regularity and rank. Therefore, $\tilde{M}^*$ is regular and has rank $s$, which implies that $\tilde{M}$ is also regular and has nullity $\tilde{s}=s$.

    We next bound the number of elements in $\tilde{M}$. According to the definition of $a_i$, we have
    \[
        \tilde{m}=\sum_{i=1}^{k}a_i=\sum_{i=1}^{k}\left\lceil \frac{(k+1)w_i}{W}\right\rceil\geq \sum_{i=1}^{k} \frac{(k+1)w_i}{W}=k+1
    \]
    and
    \[
        \tilde{m}=\sum_{i=1}^{k}a_i=\sum_{i=1}^{k}\left\lceil \frac{(k+1)w_i}{W}\right\rceil< \sum_{i=1}^{k} \left(\frac{(k+1)w_i}{W}+1\right)=2k+1.
    \]
    Since $\tilde{m}$ is an integer, $\tilde{m}\leq 2k$. Together with inequality~(\ref{inequality:heller}), we have $2\leq \tilde{m}\leq s(s+1)$.

    Finally, we bound the minimum circuit size of $\tilde{M}$. We observe that the correspondence between cocircuits and parallel classes established above also applies to $\tilde{M}^*$. More precisely, the cocircuits of $\tilde{M}^*$ are exactly
    \[
        \left\{ \left(\bigcup_{i:e_i\in D}\tilde{P_i}\right) : D \text{ is a cocircuit of } M_0^* \right\}.
    \]
    Therefore, every circuit $C$ of $\tilde{M}$ corresponds to a cocircuit $D$ of $M_0^*$. It implies that
    \[
        |C|=\sum_{i:e_i\in D}a_i=\sum_{i:e_i\in D}\left\lceil \frac{(k+1)w_i}{W}\right\rceil\geq \frac{(k+1)}{W}\left(\sum_{i:e_i\in D}w_i\right)\geq \frac{(k+1)g}{W}.
    \]
    Therefore, together with $W\leq m$ and $\tilde{m}\leq 2k$, we have
    \[
        \tilde{g}\geq \frac{(k+1)g}{W} \geq \frac{(k+1)g}{m} \geq \frac{\tilde{m}}{2m}g,
    \]
    which implies that $\tilde{g}/\tilde{m} \geq g/(2m)$.
\end{proof}

We now combine the preceding two lemmas to complete the proof of \Cref{theorem:regular-girth-nullity}.

\begin{proof}[Proof of \Cref{theorem:regular-girth-nullity}]
    Let $\tilde{M}$ be the regular matroid in \Cref{lemma:regular-nullity-compression} with $\tilde{m}$ elements, minimum circuit size $\tilde{g}$, and nullity $\tilde{s}$. According to \Cref{lemma:regular-nullity-compression}, we have
    $$\begin{aligned}
        \frac{g}{2m}&\leq \frac{\tilde{g}}{\tilde{m}} \leq \frac{42\ln \tilde{m}}{\tilde{s}} \ \ \  \ \text{(by applying \Cref{lemma:girth-from-sampling} on matroid $\tilde{M}$)} \\
        &\leq  \frac{84\ln (s+1)}{s}, \ \ \ \ \text{(since $\tilde{s}=s$ and $\tilde{m}\leq s(s+1)$)}
    \end{aligned}$$
    which further implies that
    \[
        g\leq \frac{168m\ln(s+1)}{s}.
    \]
\end{proof}

\subsection{Nullity conditioned on Circuit Survival}\label{subsection:conditional-nullity}

In this section, we prove the conditional nullity bound in \Cref{theorem:nullity}. 
We first use \Cref{theorem:regular-girth-nullity} to infer the following Corollary.

\begin{corollary}\label{coro:low-girth}
    Let $M=(E, \mathcal{I})$ be a regular matroid, $g$ be the minimum circuit size of $M$, $C$ be a circuit in $M$ of size at most $\alpha g$ and $X\subseteq E\setminus C$ be a set disjoint from $C$. Let $M_X$ be the matroid obtained from $M$ by contracting the elements of $X$ and deleting all remaining elements outside $C$. If the nullity of $C$ in $M_X$ is at least $\lceil 2^{20}\alpha \ln(\alpha+2)\rceil$, then $M_X$ contains a circuit of size at most $g/40$. 
\end{corollary}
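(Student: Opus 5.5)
Our plan is to apply \Cref{theorem:regular-girth-nullity} directly to the matroid $M_X$. Since contraction and deletion preserve regularity, $M_X=(M/X)|C$ is a regular matroid on ground set $C$. It has $m'=|C|\leq \alpha g$ elements, and its nullity is $s:=\nu_{M_X}(C)\geq \lceil 2^{20}\alpha\ln(\alpha+2)\rceil\geq 1$. \Cref{theorem:regular-girth-nullity} then gives a circuit of $M_X$ of size at most
\[
    \frac{168\, m'\ln(s+1)}{s}\leq \frac{168\,\alpha g\ln(s+1)}{s}.
\]
It therefore suffices to show that $168\,\alpha\ln(s+1)/s\leq 1/40$, that is, $s/\ln(s+1)\geq 6720\,\alpha$.

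The function $s\mapsto s/\ln(s+1)$ is increasing for $s\geq 1$. So we only need to check the inequality at the threshold $s_0:=2^{20}\alpha\ln(\alpha+2)$; the ceiling only increases $s$ and is harmless by monotonicity. At $s_0$ we need
\[
    \ln(s_0+1)\leq \frac{2^{20}}{6720}\ln(\alpha+2)\approx 156\ln(\alpha+2).
\]
We bound $\ln(s_0+1)$ crudely. Since $\alpha\geq 1$, we have $\ln(\alpha+2)\leq \alpha+2$, so
\[
    s_0+1\leq 2^{20}(\alpha+2)^2+1\leq 2^{21}(\alpha+2)^2,
\]
and hence $\ln(s_0+1)\leq 21\ln 2+2\ln(\alpha+2)$. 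Because $\alpha+2\geq 3$, we have $\ln(\alpha+2)\geq 1$, so $21\ln 2\leq 15\ln(\alpha+2)$. This gives $\ln(s_0+1)\leq 17\ln(\alpha+2)$, comfortably below $156\ln(\alpha+2)$.

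Two points need care. First, the case $\alpha g$ non-integral is immaterial, since only $m'\leq\alpha g$ is used. Second, one might worry that $M_X$ has loops, but \Cref{theorem:regular-girth-nullity} applies to any regular matroid, so a loop just gives a circuit of size $1\leq g/40$ once the inequality holds. The main obstacle, if any, is the monotonicity and constant-checking step; we would verify it by the explicit logarithm bound above rather than by asymptotics. The resulting circuit has size at most $g/40$, which proves the corollary.
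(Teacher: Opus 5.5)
Your proposal is correct and matches the paper's proof: apply \Cref{theorem:regular-girth-nullity} to $M_X$ on ground set $C$, use that $\ln(x+1)/x$ is decreasing to reduce to the threshold value, and bound the logarithm by $17\ln(\alpha+2)$ to get a circuit of size at most $g/40$. Your check of $\ln(s_0+1)\le 17\ln(\alpha+2)$ via $s_0+1\le 2^{21}(\alpha+2)^2$ simply spells out the inequality that the paper states without detail.
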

\begin{proof}
    Let $L:=\lceil 2^{20}\alpha \ln(\alpha+2)\rceil$. \Cref{theorem:regular-girth-nullity} shows that there exists a circuit $C_X\subseteq C$ of $M_X$ satisfying
    $$\begin{aligned}
        |C_X| &\leq \frac{168|C|\ln (\nu_{M_X}(C)+1)}{\nu_{M_X}(C)} \\
        &\leq \frac{168|C|\ln(L+1)}{L} \ \ \ \ \text{(since $\nu_{M_X}(C)>L$ and $\frac{\ln(x+1)}{x}$ is decreasing for $x>0$)} \\
        &\leq 168\alpha g \left( \frac{\ln(L+1)}{L} \right) \ \ \ \ \text{(since $|C|\leq \alpha g$)} \\
        &\leq 168\alpha g \left( \frac{\ln(2^{21}\alpha \ln(\alpha+2))}{2^{20} \alpha \ln(\alpha+2)} \right) \ \ \ \ \text{(since $L=\left\lceil 2^{20} \alpha \ln(\alpha+2)\right\rceil$)} \\
        &\leq 168\alpha g \left( \frac{17\ln(\alpha+2)}{2^{20} \alpha \ln(\alpha+2)} \right) \ \ \ \ \text{(since $\ln(2^{21}\alpha\ln(\alpha+2))\leq 17\ln(\alpha+2)$ for $\alpha \geq 1$)} \\
        &\leq \frac{g}{40}.
    \end{aligned}$$
\end{proof}

We now prove Theorem \ref{theorem:nullity} using Corollary \ref{coro:low-girth}. 
\begin{proof}[Proof of \Cref{theorem:nullity}]
    We recall that $R$ is the set returned by \Cref{algo:random-delete}. Since every element $e\in E$ survives independently with probability $m^{-40/g}$ and $|C|\leq \alpha g$, we have
    \[
        \Pr[C\subseteq R]=m^{-40|C|/g}\geq m^{-40\alpha}.
    \]
    
    We next bound the nullity conditioned on $C\subseteq R$. We independently sample $X\subseteq E\setminus C$ and $Y\subseteq C$ by retaining each element with probability $m^{-40/g}$. We observe that the set $U:=X\cup Y$ has the distribution of an unconditional run of \Cref{algo:random-delete}, whereas $X\cup C$ has the distribution of $R$ conditioned on $C\subseteq R$. According to \Cref{theorem:target-free}, $\Pr[U\notin \mathcal{I}] \leq m^{-5}$. Since $X\subseteq U$, we further have
    \begin{align}
        \Pr[\nu_M(X)>0] = \Pr[X\not\in \mathcal{I}] \leq m^{-5}. \label{inequality:X-bound}
    \end{align}

    Let $M_X$ be the regular matroid on $C$ obtained by contracting $X$ in the matroid $M$ and then deleting every remaining element outside $C$. The contraction rank formula in \Cref{definition:contraction} gives that
    \begin{align}
        r_{M_X}(C)=r_M(X\cup C)-r_M(X). \label{equality:rank-contraction}
    \end{align}
    This implies that
    \begin{align}
        \nu_M(X\cup C) &= |X\cup C|-r_M(X\cup C) \ \ \text{(by definition of nullity)} \notag \\
        &= |X\cup C| - r_{M_X}(C)-r_M(X) \ \ \text{(by equality~(\ref{equality:rank-contraction}))} \notag \\
        &= \left(|X|-r_M(X)\right) + \left(|C|-r_{M_X}(C)\right) \notag \\
        &= \nu_M(X) + \nu_{M_X}(C). \ \ \text{(by definition of nullity)} \label{equality:nullity-decomposition}
    \end{align}

    By inequality~(\ref{inequality:X-bound}), $\nu_M(X)=0$ with high probability. We define
    \[
        L:=\left\lceil 2^{20} \alpha \ln(\alpha+2)\right\rceil.
    \]
    We next bound the probability that $\nu_{M_X}(C)>L$. 
    By Corollary \ref{coro:low-girth}, if $\nu_{M_X}(C)>L$, then the matroid $M_X$ contains a circuit of size at most $g/40$.
        For every sampled $X$ with $\nu_{M_X}(C)>L$, we select one such circuit $C_X$. We emphasize that the selection of $C_X$ is independent of $Y$. We recall that $M_X$ is the regular matroid on $C$ obtained by contracting $X$ in $M$ and then deleting every remaining element outside $C$, and $C_X$ is a circuit of $M_X$. Hence,
    $$\begin{aligned}
        r_M(X\cup C_X) &= r_M(X)+r_{M_X}(C_X) \ \ \text{(by \Cref{definition:contraction})} \\
        &= r_M(X)+|C_X|-1 \ \ \text{(since $C_X$ is a circuit of $M_X$)} \\
        &\leq |X|+|C_X|-1,
    \end{aligned}$$
    which implies that $X\cup C_X$ is dependent in the matroid $M$. Consequently, if $C_X\subseteq Y$, then $U=X\cup Y$ is dependent. We note that conditioned on every fixed $X$ with $\nu_{M_X}(C)>L$, the elements of $Y$ are still retained independently with probability $m^{-40/g}$. Thus, for every fixed $X$ with $\nu_{M_X}(C)>L$,
    $$\begin{aligned}
        \Pr[U\not\in \mathcal{I} \mid X] &\geq \Pr[C_X\subseteq Y \mid X] \ \ \text{(since $U$ is dependent if $C_X\subseteq Y$)} \\
        &= \left(m^{-40/g}\right)^{|C_X|} \ \ \text{(since $C_X$ is independent of $Y$)} \\
        &\geq \left(m^{-40/g}\right)^{g/40} \ \ \text{(since $|C_X|\leq g/40$)} \\
        &= m^{-1}.
    \end{aligned}$$
    We recall that $\Pr[U\not\in \mathcal{I}]\leq m^{-5}$. Moreover,
    $$\begin{aligned}
        \Pr[U\not\in \mathcal{I}] \geq \Pr[U\not\in \mathcal{I} \mid \nu_{M_X}(C)>L]\cdot \Pr[\nu_{M_X}(C)>L] \geq \left(m^{-1}\right)\Pr[\nu_{M_X}(C)>L],
    \end{aligned}$$
    which further implies that
    \begin{align}
        \Pr[\nu_{M_X}(C)>L] \leq m^{-4}. \label{inequality:Y-bound}
    \end{align}

    Therefore,
    $$\begin{aligned}
        \Pr[\nu_M(R)\leq L \mid C\subseteq R] &= \Pr[\nu_M(X\cup C)\leq L] \\
        &= \Pr[\nu_M(X)+\nu_{M_X}(C)\leq L] \ \ \text{(by equality~(\ref{equality:nullity-decomposition}))} \\
        &\geq 1 - \Pr[\nu_M(X)>0] - \Pr[\nu_{M_X}(C)>L] \ \ \text{(by union bound)} \\
        &\geq 1 - m^{-5} - m^{-4} \ \ \text{(by inequalities~(\ref{inequality:X-bound}) and (\ref{inequality:Y-bound}))} \\
        &\geq 1-m^{-2}. \ \ \text{(since $m\geq 2$)}
    \end{aligned}$$
\end{proof}

\begin{remark}
    The proof above does not require $C$ to be a circuit. Consequently, \Cref{theorem:nullity} holds for every fixed set $S\subseteq E$ with $|S|\leq \alpha g$.
\end{remark}

\subsection{Counting and Enumerating Near-Minimum Circuits}\label{subsection:counting-enumerating}

We combine \Cref{theorem:nullity} with \Cref{lemma:binary-matroid-nullity} to derive the counting and enumeration corollaries. Throughout this section, we assume $\alpha\leq m$, since otherwise every circuit is an $\alpha$-approximate minimum circuit.

\begin{proof}[Proof of \Cref{cor:counting}]
    Let $\mathcal{C}_{\alpha}$ be the collection of circuits of size at most $\alpha g$. According to \Cref{theorem:nullity}, every fixed circuit in $\mathcal{C}_{\alpha}$ survives together with a nullity bound of $O(\alpha\log(\alpha+2))$ with probability at least $m^{-O(\alpha)}$. Since $M$ is binary, \Cref{lemma:binary-matroid-nullity} implies that every sample satisfying this nullity bound contains at most $2^{O(\alpha\log(\alpha+2))}$ circuits. Counting surviving circuits only in these samples and taking expectations gives
    \[
        |\mathcal{C}_{\alpha}|m^{-O(\alpha)} \leq 2^{O(\alpha\log(\alpha+2))}=(\alpha+2)^{O(\alpha)},
    \]
    which further implies that
    \[
        |\mathcal{C}_{\alpha}|\leq(m\alpha)^{O(\alpha)}=m^{O(\alpha)}.
    \]
\end{proof}

\begin{proof}[Proof of \Cref{cor:enumeration}]
    We first suppose that the minimum circuit size $g$ is known. We repeatedly run \Cref{algo:random-delete} independently. For each remaining set $R$, we compute its nullity and discard it if the nullity exceeds the $O(\alpha\log(\alpha+2))$ bound in \Cref{theorem:nullity}. Otherwise, we enumerate all circuits contained in $R$ via \Cref{lemma:binary-matroid-nullity} and retain those of size at most $\alpha g$. We return all retained circuits, with duplicates removed.

    By \Cref{theorem:nullity}, every fixed circuit of size at most $\alpha g$ survives together with the required nullity bound with probability at least $m^{-O(\alpha)}$ in each repetition. Whenever this occurs, the circuit is returned by the enumeration step. Since \Cref{cor:counting} shows that the number of these circuits is at most $m^{O(\alpha)}$, taking $m^{O(\alpha)}$ independent repetitions with a sufficiently large exponent and applying a union bound ensures that all of these circuits are returned with high probability.

    We observe that computing nullity takes polynomially many independence queries, and \Cref{lemma:binary-matroid-nullity} bounds the enumeration time per repetition by
    \[
        2^{O(\alpha\log(\alpha+2))}\operatorname{poly}(m)=(\alpha+2)^{O(\alpha)}\operatorname{poly}(m)\leq m^{O(\alpha)}.
    \]
    Thus, the total running time is $m^{O(\alpha)}$.

    If the minimum circuit size $g$ is unknown, we guess its value by trying every integer in $[m]$ and running the procedure with each guess. We collect the circuits returned under all guesses. With high probability, the correct guess yields every circuit of size at most $\alpha g$. Since every returned set is a circuit, the smallest size among the collected circuits is then exactly $g$. If no circuit is returned, we return the empty collection. Otherwise, we retain only circuits of size at most $\alpha$ times this smallest size. Trying all guesses increases the running time by only a factor of $m$.
\end{proof}

\paragraph{Acknowledgements.} See Section \ref{sec:techniques} for AI Disclosure. We thank Rohit Gurjar for help in understanding the unordered decomposition tree in \cite{gurjar2019number}, and for suggesting the Delta-Wye fix. 

\bibliographystyle{abbrv}
\bibliography{references}

\newpage
\appendix
\section{Circuits in Binary Matroids with Bounded Nullity}\label{appendix:binary-nullity}

In this section, we present the proof of \Cref{lemma:binary-matroid-nullity}. We restate and prove it below. We note that the lemma does not hold for all matroids---e.g., it fails for the uniform matroid $U_{2,4}$, which has nullity $2$ and four circuits. 

\binarymatroidnullity*
\begin{proof}
    Let $B$ be a basis of $M$ and $D:= E\setminus B$. We observe that
    \[
        |D|=|E|-|B|=|E|-r_M(E)=\nu_M(E)\leq d.
    \]
    Moreover, the mapping
    \[
        \phi:C \longmapsto C\cap D
    \]
    is injective. Indeed, suppose that two circuits $C_1$ and $C_2$ of $M$ satisfy $C_1\cap D=C_2\cap D$. Their symmetric difference is then contained in $B$, i.e., $C_1\SymDif C_2 \subseteq B$. Since $M$ is binary, $C_1 \SymDif C_2$ is a disjoint union of circuits. Since $B$ is independent, it follows that $C_1 \SymDif C_2 =\emptyset $ and hence, $C_1=C_2$. Therefore, the number of circuits is at most the number of nonempty subsets of $D$, which is at most $2^d-1$.

    It remains to show that these circuits can be enumerated efficiently. We note that a basis $B$ can be found greedily using $O(m)$ independence queries. For every $e\in D$, let $C_e$ be the unique circuit contained in $B\cup \{e\}$. This fundamental circuit can be recovered using
    \[
        C_e = \{e\} \cup \left\{b\in B: (B\setminus \{b\})\cup \{e\} \in \mathcal{I}\right\}.
    \]
    For every subset $X\subseteq D$, we define
    \[
        S_X:=\SymDif_{e\in X}C_e.
    \]
    For every circuit $C$ of $M$, $C \SymDif S_{C \cap D}$ is a symmetric difference of a circuit and a disjoint union of circuits and hence, is contained in $B$. Since $B$ is independent, we have $C \SymDif S_{C\cap D}=\emptyset$ and $C=S_{C\cap D}$. Thus, it suffices to enumerate all nonempty subsets $X\subseteq D$ and retain $S_X$ precisely when it is a circuit. This can be checked using the independence oracle by verifying that $S_X\setminus \{e\}$ is independent for every $e\in S_X$. Therefore, the total number of independence queries is $O(2^dm)$ and the total running time is $2^d\poly(m)$.
\end{proof}

\end{document}